\documentclass[aps,prl,reprint,superscriptaddress,nofootinbib,floatfix]{revtex4-2}
\usepackage{amsmath,amssymb,mathtools,bm,amsthm}
\usepackage{braket}
\usepackage{microtype}
\usepackage{enumitem}
\usepackage{xcolor,graphicx,tikz}
\usetikzlibrary{quantikz2,shapes.geometric}
\usepackage[hidelinks,hypertexnames=false]{hyperref}

\usepackage{orcidlink}
\makeatletter
\let\selectlanguage\@gobble
\makeatother

\allowdisplaybreaks[3]

\newcommand{\id}{\mathbf{1}}
\newcommand{\Pos}{\mathrm{Pos}}
\newcommand{\Herm}{\mathrm{Herm}}
\newcommand{\Tr}{\operatorname{Tr}}
\newcommand{\supp}{\operatorname{supp}}
\newcommand{\Ran}{\operatorname{Ran}}
\newcommand{\Ker}{\operatorname{Ker}}
\newcommand{\ECS}{\mathrm{ECS}}
\newcommand{\ECSplus}{\mathrm{ECS}^{+}}
\newcommand{\CS}{\mathrm{CS}}
\newcommand{\CSplus}{\mathrm{CS}^{+}}
\newcommand{\QCCC}{\mathrm{QC\mbox{-}CC}}
\newcommand{\pECS}{\mathrm{pECS}}
\newcommand{\pQCCC}{\mathrm{pQC\mbox{-}CC}}
\newcommand{\D}{\mathcal D}
\newcommand{\I}{\mathcal I}

\newcommand{\N}{\mathcal N}
\renewcommand{\proj}[1]{|#1\rangle\!\langle#1|}
\newcommand{\oprod}[2]{|#1\rangle\!\langle#2|}

\definecolor{tnred}{HTML}{D24A43}
\definecolor{tnredfill}{HTML}{FFF9F8}
\definecolor{tnblue}{HTML}{4E55CC}
\definecolor{tnbluefill}{HTML}{F0F0FF}
\definecolor{tnwire}{HTML}{666666}
\definecolor{tntext}{HTML}{111111}
\newtheorem{theorem}{Theorem}
\newtheorem{proposition}[theorem]{Proposition}
\newtheorem{lemma}[theorem]{Lemma}
\newtheorem{corollary}[theorem]{Corollary}
\newtheorem*{corollaryone}{Corollary 1}
\theoremstyle{definition}
\newtheorem{definition}[theorem]{Definition}

\begin{document}

\title{Extensibly Causally Separable Processes Admit Realizations as Quantum Circuits with Classical Control of Causal Order}
\author{Wenjie Wei\orcidlink{0009-0001-2435-0852}}
\affiliation{School of Physics, Sun Yat-sen University, Guangzhou, Guangdong 510275, China}

\author{Shengshi Pang\orcidlink{0000-0002-6351-539X}}
\email{pangshsh@mail.sysu.edu.cn}
\affiliation{School of Physics, Sun Yat-sen University, Guangzhou, Guangdong 510275, China}
\affiliation{Hefei National Laboratory, University of Science and Technology of China, Hefei 230088, China}
\date{\today}

\begin{abstract}

Indefinite causal order extends the conventional circuit paradigm by allowing local quantum operations to be connected without a predetermined global order. Within the process-matrix framework, it is characterized by causal nonseparability, whereas extensible causal separability (ECS) requires that, under arbitrary input-ancilla extensions, a process admit a recursive decomposition into components, each compatible with a particular operation acting first.
Quantum circuits with classical control of causal order (QC-CC), in which previous outcomes dynamically determine which operation acts next, are known to generate ECS processes, but whether every ECS process admits such a realization has remained a longstanding open problem.
We settle it through a generalized teleportation construction that realizes every multipartite ECS process as a QC-CC. We further establish consistency between ECS definitions for trivial and nontrivial global past and future systems. These results provide ECS with an exact operational interpretation and identify QC-CC as the complete circuit structure underlying this class of processes. 
\end{abstract}

\maketitle

\smallskip\noindent\textit{Introduction.\textemdash}
Causal order determines which events can influence one another and shapes how information can flow. Conventional quantum circuits encode this structure by arranging operations in a fixed order~\cite{chiribella2009TheoreticalFrameworkQuantum}. Quantum theory broadens this picture by allowing different orders of operations to be coherently controlled, as exemplified by the quantum switch~\cite{chiribella2008TransformingQuantumOperations,chiribella2013NormalCompletelyPositive,bisio2019TheoreticalFrameworkHigherorder,oreshkov2012QuantumCorrelationsNo,procopio2015ExperimentalSuperpositionOrders,rubino2017ExperimentalVerificationIndefinite,goswami2018IndefiniteCausalOrder,goswami2020ExperimentsQuantumCausality,chiribella2013QuantumComputationsDefinite}. The process-matrix framework goes further by describing local quantum operations without assuming a predefined global order~\cite{hardy2007QuantumGravityFramework,oreshkov2012QuantumCorrelationsNo,brukner2014QuantumCausality,adlam2020OperationalChoiJamiolkowski}. 
This framework provides a general language for indefinite causal order (ICO), but its abstract consistency conditions neither guarantee that a given abstract process admits a circuit realization nor identify the physical resource underlying such a process. Complementary approaches have studied purifiability and operational or spacetime realizations for specific process classes~
\cite{araujo2017PurificationPostulateQuantum,oreshkov2019TimedelocalizedQuantumSubsystems,salzger2025MappingIndefiniteCausal}.
Establishing correspondences between abstract process classes and operational architectures is thus essential for both understanding the physical significance of the process-matrix framework and identifying the resources underlying ICO phenomena~\cite{vanrietvelde2025ConsistentCircuitsIndefinite,grothus2025RoutingQuantumControl,vanrietvelde2021RoutedQuantumCircuits}.

The need for such a classification is reinforced by the potential advantages of ICO in quantum computation~\cite{araujo2014ComputationalAdvantageQuantumcontrolled,araujo2017QuantumComputationIndefinite,taddei2021ComputationalAdvantageQuantum}, channel discrimination and quantum metrology~\cite{bavaresco2021StrictHierarchyParallel,zhao2020QuantumMetrologyIndefinite,procopio2023ParameterEstimationIndefinite,liu2023OptimalStrategiesQuantum,li2026OptimalStrategiesMultiparameter}, and communication through noisy channels~\cite{feix2015QuantumSuperpositionOrder,ebler2018EnhancedCommunicationAssistance,procopio2019CommunicationEnhancementQuantum,chiribella2021IndefiniteCausalOrder}. Nevertheless, an ICO advantage may depend on coherent path control or the structure of the channels, rather than on causal nonseparability alone~\cite{guerin2019CommunicationQuantumcontrolledNoise,abbott2020CommunicationCoherentControl}. Identifying the circuit structure behind a process is therefore important for determining which resource is actually responsible for an observed advantage.

Within the process-matrix framework, causal nonseparability characterizes ICO~\cite{araujo2015WitnessingCausalNonseparability}. A causally separable process admits a recursive decomposition into positive semidefinite components, with each component compatible with a particular operation acting first~\cite{ognyanoreshkov2016CausalCausallySeparable,wechs2019DefinitionCharacterisationMultipartite}. The recursion allows subsequent operations to depend on earlier outcomes, so the complete order need not be selected before the experiment~\cite{ognyanoreshkov2016CausalCausallySeparable,milz2021DelayedchoiceCausalOrder,mothe2025CorrelationsQuantumCircuits}. Extensible causal separability (ECS) requires such a recursive decomposition under arbitrary shared input-ancilla extensions~\cite{ognyanoreshkov2016CausalCausallySeparable,wechs2019DefinitionCharacterisationMultipartite}. Although this definition accommodates dynamically chosen causal orders, it does not specify how an ECS process can be assembled as a circuit.

Quantum circuits with classical control of causal order (QC-CC) provide a candidate realization~\cite{wechs2021QuantumCircuitsClassical,wechs2023ErratumQuantumCircuits}. In these circuits, instrument outcomes determine which unused operation is applied next. Each branch has a well-defined order, while different outcomes may generate different orders dynamically. Every QC-CC process is known to be ECS, but whether QC-CC exhausts the general multipartite ECS class has remained a longstanding open problem~\cite{ognyanoreshkov2016CausalCausallySeparable,wechs2019DefinitionCharacterisationMultipartite,wechs2021QuantumCircuitsClassical}. A negative answer would reveal ECS processes that cannot be generated by any dynamically ordered circuit with classical control. A positive answer would show that the abstract ECS condition already identifies a complete operational architecture.

In this work, we resolve this open problem by proving that $\mathrm{ECS}=\mathrm{QC\text{-}CC}$. Using a generalized teleportation construction, we provide an explicit QC-CC circuit realization for every ECS process, thereby yielding a finite semidefinite-programming characterization of ECS. We further show that this equivalence is independent of whether the global past and future are treated as terminal systems or as additional parties. This establishes consistency between the definitions of ECS with trivial and nontrivial global past and future systems. 
We also define probabilistic ECS and establish its equivalence with probabilistic QC-CC.

\smallskip\noindent\textit{ECS and QC-CC.\textemdash}
Consider a quantum process involving $N$ local laboratories, each implementing one external operation, with label set $\N=\{1,\ldots,N\}$. The input and output spaces associated with operation $k$ are $A_I^k,A_O^k$, respectively, and we write $A_{IO}^k:=A_I^k\otimes A_O^k$. For $\mathcal S\subseteq\N$, let $A_{IO}^{\mathcal S}:=\bigotimes_{k\in\mathcal S}A_{IO}^k$. Juxtaposed system labels denote tensor products. The global past and future are denoted by $P,F$, respectively, and $\Pos(\mathcal H)$ is the cone of PSD operators on the Hilbert space $\mathcal H$.

A deterministic $P\to F$ process matrix is a PSD operator on $P\otimes A_{IO}^{\N}\otimes F$, denoted by $W$, such that inserting arbitrary completely positive and trace-preserving (CPTP) maps into the parties' input-output systems yields a CPTP $P\to F$ map. Its validity conditions comprise positive semidefiniteness, homogeneous linear equality constraints, and an affine normalization constraint~\cite{chiribella2009TheoreticalFrameworkQuantum,wechs2021QuantumCircuitsClassical,araujo2015WitnessingCausalNonseparability}. Writing $D_O(\N):=\prod_{k\in\N}\dim A_O^k$, the last constraint is $\Tr_{A_{IO}^{\N}F}W=D_O(\N)\id^P$. Removing this affine constraint while retaining the first two conditions defines the cone of unnormalized processes, denoted by $\mathsf{Proc}_N^{+,P\to F}$.
Equivalently, $X\in\mathsf{Proc}_N^{+,P\to F}$ if and only if $X\succeq0$ and the Choi--Jamiołkowski (CJ) operators $M_k$ of arbitrary local CPTP maps satisfy $\Tr_F[(\bigotimes_{k\in\N}M_k)*X]=\Theta_X$, where $\Theta_X:=D_O(\N)^{-1}\Tr_{A_{IO}^{\N}F}X$. The positive operator $\Theta_X$ is fixed by $X$ and is independent of the inserted CPTP maps. Deterministic normalization is the condition $\Theta_X=\id^P$. The homogeneous linear constraints and the proof of this equivalence are given in the Supplemental Material (SM), Sec.~\ref{sec:foundations}.

To define ECS, we first introduce conditioning on a local operation. We add a finite-dimensional input ancilla to party $k$, denoted by $E_k$. Let $X$ be an operator on the extended process space $P\otimes A_I^kE_kA_O^k\otimes A_{IO}^{\N\setminus\{k\}}\otimes F$. A quantum operation by this party has a CJ matrix on $A_I^k\otimes E_k\otimes A_O^k$, denoted by $C_k$, satisfying $C_k\succeq0$ and $\Tr_{A_O^k}C_k\preceq\id^{A_I^kE_k}$. Conditioning $X$ on this quantum operation is defined by $X_{|C_k}:=
\Tr_{A_I^kE_kA_O^k}\!\left[
\left(C_k\otimes\id\right)X
\right].$ If $X$ is a valid process and, for party $k$ and every quantum operation $C_k$, the conditional process belongs to the cone of unnormalized valid processes for the remaining parties, then $X$ is compatible with party $k$ acting first~\cite{wechs2019DefinitionCharacterisationMultipartite}. The separability definition imposes this compatibility recursively through conditioning.

The multipartite definition of Wechs et al.~\cite{wechs2019DefinitionCharacterisationMultipartite} uses CS for the extensible notion described in the Introduction. We follow the terminology of Oreshkov and Giarmatzi: CS denotes the unextended recursive definition, whereas ECS requires it to hold under extension.

We define CS processes with nontrivial $P$ and $F$. Set $\CS_0^{+,P\to F}:=\Pos(P\otimes F)$. For $N\ge1$, we have $X\in\CS_N^{+,P\to F}$ if and only if there exist operators $X_{(k)}\in\mathsf{Proc}_N^{+,P\to F}$ such that, for all $k\in\N$ and all quantum operations $C_k$~\cite{ognyanoreshkov2016CausalCausallySeparable,wechs2019DefinitionCharacterisationMultipartite},
\begin{equation}\label{eq:cs-recursion}
X=\sum_{k\in\N}X_{(k)},\;
(X_{(k)})_{|C_k}\in\CS_{N-1}^{+,P\to F}
\end{equation}
This recursion allows the order of subsequent operations to depend on previously performed operations and their classical outcomes. Multipartite CS processes can therefore extend beyond convex mixtures obtained by randomly selecting a fixed permutation before the experiment~\cite{ognyanoreshkov2016CausalCausallySeparable,wechs2019DefinitionCharacterisationMultipartite,milz2021DelayedchoiceCausalOrder,mothe2025CorrelationsQuantumCircuits}.

For any system $R$, let $d_R:=\dim R$ and define the completely depolarizing superoperator $\D_R(Y):=\Tr_R(Y)\otimes\id^R/d_R$. The identity superoperator is denoted by $\I$. The first relation below follows directly from $X_{(k)}\in\mathsf{Proc}_N^{+,P\to F}$. The remaining relations follow from the recursive condition, as shown in SM, Proposition~\ref{prop:pf-first-component-order}. Each $X_{(k)}$ is therefore compatible with the order $P\prec k\prec(\N\setminus\{k\})\prec F$. Explicitly, for every nonempty $\mathcal X\subseteq\N\setminus\{k\}$, this compatibility requires
\begin{align}
&(\I-\D_{A_O^k})\D_{A_{IO}^{\N\setminus\{k\}}F}X_{(k)}=0,\nonumber\\
&\left(\prod_{i\in\mathcal X}(\I-\D_{A_O^i})\right)
\D_{A_{IO}^{(\N\setminus\{k\})\setminus\mathcal X}F}X_{(k)}=0.
\label{eq:main-first-order-compatibility}
\end{align}

Extensible causal separability rules out the activation of noncausality by entangled input ancillas and is therefore stronger than unextended causal separability. Set $\ECS_0^{+,P\to F}:=\Pos(P\otimes F)$. For an operator $X$ on the original process space, define $X\in\ECS_N^{+,P\to F}$ if and only if, after adding arbitrary finite-dimensional input ancillas $E_1,\ldots,E_N$ to the respective parties, every normalized shared state $\rho$ satisfies~\cite{ognyanoreshkov2016CausalCausallySeparable,wechs2019DefinitionCharacterisationMultipartite}
\begin{equation}
X\otimes\rho\in\CS_N^{+,P\to F}.
\label{eq:ecs-extension}
\end{equation}
Here the right-hand side is understood on the extended input spaces. The deterministic section is $\ECS_N^{P\to F}:=\{X\in\ECS_N^{+,P\to F}:\Theta_X=\id^P\}$.
QC-CC has a direct circuit definition~\cite{wechs2021QuantumCircuitsClassical,wechs2023ErratumQuantumCircuits}. A deterministic process is QC-CC if and only if it can be realized using quantum instruments and their classical outcomes. The initial instrument determines which external operation is applied first. After each incomplete history, an intermediate instrument uses the history of previously performed operations and their outcomes to determine which unused external operation is applied next. After all external operations have been applied, the final map outputs to $F$. The deterministic processes admitting such circuit realizations form $\QCCC_N^{P\to F}$.

This operational definition is equivalent to the following characterization. For a finite label set $\mathcal T$, let $\Pi_{\mathcal T}$ denote its set of full permutations. A history $h=(k_1,\ldots,k_n)$ is an ordered sequence of distinct labels. The notation $\pi\succeq h$ means that the permutation $\pi$ begins with $h$. The empty sequence is a prefix of every permutation. A deterministic process $W$ belongs to $\QCCC_N^{P\to F}$ if and only if there exist PSD operators $\{T_\pi\}_{\pi\in\Pi_{\N}}$ such that, for every nonempty history $h$,
\begin{equation}
\begin{aligned}
T_h:=\sum_{\pi\succeq h}T_\pi,
W=\sum_{\pi\in\Pi_{\N}}T_\pi.
\end{aligned}
\label{eq:qccc-decomposition}
\end{equation}
\begin{equation}
(\I-\D_{A_O^{k_n}})\D_{A_{IO}^{\N\setminus\{k_1,\ldots,k_n\}}F}T_h=0.
\label{eq:qccc-sequence-constraint}
\end{equation}
Here $T_h$ sums over all complete orders extending history $h$, and Eq.~\eqref{eq:qccc-sequence-constraint} expresses trace preservation of the corresponding internal operations after summing over their classical outcomes. The next external operation can thus be determined dynamically from the history of previously performed operations~\cite{abbott2020CommunicationCoherentControl,mothe2025CorrelationsQuantumCircuits}. The equivalence between this operator characterization and a circuit realization is proved in SM, Proposition~\ref{prop:basic-pf-qccc}.

\smallskip\noindent\textit{Main theorem.\textemdash}For every $N\ge1$, finite-dimensional $P,F$, and deterministic process $W$,
$W\in\ECS_N^{P\to F}$ if and only if $W\in\QCCC_N^{P\to F}.$ The detailed proof is in SM, Sec.~\ref{sec:pf}.

The extension-independent first-level decomposition of Wechs et al.~\cite{wechs2019DefinitionCharacterisationMultipartite} also holds for fixed nontrivial $P,F$:

\begin{proposition}[First-level decomposition]
\label{prop:uniform-first-decomposition}
Let $N\ge1$ and $X\in\mathsf{Proc}_N^{+,P\to F}$. Then $X\in\ECS_N^{+,P\to F}$ if and only if there exists a family
$\{X_{(k)}\}_{k\in\N}\subset\mathsf{Proc}_N^{+,P\to F}$ satisfying
\begin{align}
X&=\sum_{k\in\N}X_{(k)},
\label{eq:uniform-first-decomposition}\\
(X_{(k)}\otimes\rho)_{|C_k}
&\in\CS_{N-1}^{+,P\to F}.
\label{eq:uniform-first-condition}
\end{align}
The second condition holds for every finite-dimensional input extension, every normalized shared state $\rho$, every $k$, and every quantum operation by party $k$ with CJ matrix $C_k$.
\end{proposition}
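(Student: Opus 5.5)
The two directions are of very different difficulty, so I would treat them separately. The "if" direction is a direct check. Fix an extension $E_1,\dots,E_N$ and a normalized state $\rho$. Set $Y_{(k)}:=X_{(k)}\otimes\rho$. Then $X\otimes\rho=\sum_k Y_{(k)}$. Each $Y_{(k)}$ lies in $\mathsf{Proc}_N^{+,P\to F}$ on the extended spaces, because appending a normalized state on input ancillas preserves positivity and the homogeneous linear constraints: tracing out an ancilla input gives back $X_{(k)}$ up to the factor $\Tr\rho=1$. By hypothesis, $(Y_{(k)})_{|C_k}\in\CS_{N-1}^{+,P\to F}$ for every $C_k$. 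This is exactly the recursion \eqref{eq:cs-recursion} for $X\otimes\rho$, so $X\otimes\rho\in\CS_N^{+,P\to F}$. Since the extension and $\rho$ were arbitrary, $X\in\ECS_N^{+,P\to F}$.

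The "only if" direction is the substantive one. Here the decomposition must be chosen \emph{uniformly}, independently of the extension and of $\rho$. The plan follows Wechs et al.~\cite{wechs2019DefinitionCharacterisationMultipartite}, with a universal extension and a compactness argument.

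First, I would take a single "maximal" extension. For each party $k$, let $E_k\cong A_I^k$, or more safely $E_k\cong A_I^k\otimes E'_k$ with an arbitrary reference. Let $\rho$ be a product of maximally entangled states $\Phi^+$ between $E_k$ and a copy $E'_k$ held by the other parties; $F$ is also allowed to hold copies. ECS applied to $X\otimes\rho$ gives components $Z_{(k)}$ with $X\otimes\rho=\sum_k Z_{(k)}$. Define $X_{(k)}$ by undoing the extension: contract $Z_{(k)}$ with the appropriate $\Phi^+$ vectors on the ancillas. Equivalently, apply the inverse of the CJ embedding and trace the spectator copies. By construction this gives $\sum_k X_{(k)}=X$, and $X_{(k)}\succeq0$.

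Second, I would show that every extended conditioning of $X_{(k)}\otimes\sigma$ is obtained from a conditioning of $Z_{(k)}$. Any operation $C_k$ on $A_I^kE_k^{\rm new}A_O^k$, together with any state $\sigma$ on new ancillas, can be simulated via teleportation/steering through the maximally entangled $\rho$. The new ancilla state $\sigma$ is prepared by a CP (not necessarily trace-preserving) map acting on the copies $E'$, which belong to later parties or sit outside. Composing $C_k$ with this map gives a valid operation $\tilde C_k$ on party $k$'s extended system, such that $(X_{(k)}\otimes\sigma)_{|C_k}$ equals a positive multiple of the image of $(Z_{(k)})_{|\tilde C_k}$ under local CP maps on the remaining parties' ancillas. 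The final step is then a closure lemma: $\CS_{N-1}^{+,P\to F}$ is a cone closed under positive scaling and under local CP preprocessing of input ancillas. I would prove this lemma by induction on $N$, using that conditioning commutes with such maps. With it, $(X_{(k)}\otimes\sigma)_{|C_k}\in\CS_{N-1}^{+,P\to F}$. The same argument shows $X_{(k)}\in\mathsf{Proc}_N^{+,P\to F}$, since $Z_{(k)}$ is a valid unnormalized process.

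I expect the main obstacle to be this simulation step. A single fixed extension must dominate all finite-dimensional extensions, including ancilla dimensions larger than $A_I^k$. The inner parties' ancillas must also be handled, since their CS decompositions must survive the contraction. If direct steering fails for arbitrarily large $E$, the fallback is the following. For each dimension bound $d$, use ECS with a purification-type state of dimension $d$ to get decompositions $\{X^{(d)}_{(k)}\}$. These are bounded, since $0\preceq X^{(d)}_{(k)}\preceq X$. A convergent subsequence then exists. I would check that the conditions \eqref{eq:uniform-first-condition} are closed, which holds because $\CS^{+}$ cones are closed as finite recursions of closed conditions. The limit family then works for all extensions simultaneously.
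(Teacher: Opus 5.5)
Your ``if'' direction is the paper's sufficiency argument. For ``only if'' you depart from the paper, and the simulation step fails as written.

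First, the copies $E'$ cannot be given to $F$ or left ``outside''. The extension in Eq.~\eqref{eq:ecs-extension} consists only of input ancillas of the $N$ parties. A system that no party holds makes the shared state mixed, so there is nothing to steer with. It also destroys the factorization $Z_{(k)}=X_{(k)}\otimes\rho$, which requires $\rho$ pure (Lemma~\ref{lem:pf-support}).

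Second, you prepare $\sigma$ with CP, not necessarily trace-preserving, maps on copies held by the remaining parties, and then invoke closure of $\CS_{N-1}^{+,P\to F}$ under local CP preprocessing of input ancillas. That lemma is false. Take $A$ with trivial input and a qubit output, and $B$ with a qubit ancilla $E_B$ and otherwise trivial slots. Consider the causally ordered process feeding $A_O^A$ identically into $E_B$. Filtering $E_B$ by $\oprod{0}{0}$ makes the output trace equal to $\bra{0}\rho\ket{0}$ for the state $\rho$ that $A$ prepares, so the result leaves $\mathsf{Proc}$. Only trace-preserving preprocessing is safe.

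Third, as you suspect, a fixed finite-dimensional resource cannot steer ancillas of arbitrary dimension. So your dimension-indexed fallback is the real argument, and it inherits both problems above.

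The route is repairable if all the steering is done by party $k$'s own operation.
\begin{itemize}
\item For each $d$, let $\rho_d$ be the product over $k$ of maximally entangled states $\Phi^+_{B_kB'_k}$. Here $B_k$ is an input ancilla of party $k$, and $B'_k=\bigotimes_{j\neq k}B'_{k,j}$ with $B'_{k,j}\simeq\mathbb C^d$ an input ancilla of party $j$.
\item Since $\rho_d$ is pure, the components of a CS decomposition of $X\otimes\rho_d$ factor as $X^{(d)}_{(k)}\otimes\rho_d$.
\item Take any PSD $\sigma$ on $E_k\otimes\bigotimes_{j\neq k}E_j$ with $\dim E_j\le d$ for $j\neq k$. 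After embedding $E_j\hookrightarrow B'_{k,j}$, it equals $D(\Lambda\otimes\I)(\Phi^+_{B_kB'_k})$ for a suitable CP map $\Lambda$ from $B_k$ to $E_k$, where $D=\dim B_k$.
\item Hence $\lambda(\I\otimes\Lambda^\dagger)(C_k)$, combined with discarding party $k$'s pieces of the other blocks, is an allowed operation for small $\lambda>0$. Conditioning on it yields $(\lambda/D)(X^{(d)}_{(k)}\otimes\sigma)_{|C_k}$ tensored with leftover ancilla states.
\item Lemma~\ref{lem:pf-inputdiscard} removes the leftovers, and undoing the embeddings is a trace-preserving compression. The remaining parties therefore never apply a non-trace-preserving map.
\end{itemize}
Taking $d\to\infty$ then needs $\CS_{N-1}^{+,P\to F}$ to be closed. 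Because the recursion quantifies existentially over decompositions, this rests on the witness bound $0\preceq X_{(k)}\preceq X$, as in Lemma~\ref{lem:pf-CSclosed}, not merely on each condition being closed.

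The paper avoids simulation entirely. For each fixed $(E,\rho)$ it shows that the admissible set $\mathfrak F_\rho$ is nonempty and compact. Nonemptiness uses the support lemma for pure states, and purification onto party $1$ followed by discarding for mixed states. It then gets the finite intersection property by applying ECS to $\bigotimes_t\rho_t$ and discarding the unused factors. That requires only the discard lemma. Your repaired version additionally needs a steering identity and invariance of $\CS$ under isometric embedding of ancillas. In return it shows that one explicit pure extension per $d$ already certifies all extensions of local ancilla dimension at most $d$.
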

We first prove necessity, $W\in\ECS_N^{P\to F}\Rightarrow W\in\QCCC_N^{P\to F}$. Induction on $N$ establishes the stronger unnormalized statement: every $X\in\ECS_N^{+,P\to F}$ admits PSD operators indexed by full permutations $\pi$, denoted by $\{T_\pi\}_{\pi\in\Pi_{\N}}$, such that $X=\sum_\pi T_\pi$ and Eq.~\eqref{eq:qccc-sequence-constraint} holds. For $N=1$, take the sole component to be $T_{(k)}:=X$. Henceforth assume $N\ge2$.

For one-dimensional $P,F$, Wechs et al.~\cite{wechs2019DefinitionCharacterisationMultipartite} used teleportation to obtain a necessary condition for ECS: a first-level component in Eq.~\eqref{eq:uniform-first-decomposition}, denoted by $X_{(k)}$, remains ECS after its input-output system $A_{IO}^k$ is transferred to an input ancilla of any remaining party, now viewed as a process for the remaining $N-1$ parties. The resulting PSD decomposition may depend on which party receives the transferred system. In the general multipartite case, Eq.~\eqref{eq:qccc-decomposition} instead requires one family of components corresponding to complete orders whose prefix sums satisfy all history constraints. We use a isometry to coherently encode $A_{IO}^k$ into orthogonal input sectors of all candidate remaining parties. This generalization of teleportation converts the constraints supplied by the induction hypothesis into those required by Eq.~\eqref{eq:qccc-sequence-constraint}.

For each first-level component $X_{(k)}$, we encode the input-output system of party $k$ into orthogonal sectors of the remaining parties. A shared input state and a maximally entangled projection realize this encoding. The resulting ECS process has $N-1$ parties, so its inductive decomposition can be compressed back to the original process space. By Proposition~\ref{prop:uniform-first-decomposition}, choose first-level components satisfying Eqs.~\eqref{eq:uniform-first-decomposition}--\eqref{eq:uniform-first-condition}. Fix $k\in\N$ and let $d_k:=\dim A_{IO}^k$.
 Fix an orthonormal basis of $A_{IO}^k$, denoted by $\{\ket a_{A_{IO}^k}\}_{a=1}^{d_k}$. For each $r\in\N\setminus\{k\}$, choose a basis $\{\ket a_{S_r}\}_{a=1}^{d_k}$ of $S_r\simeq A_{IO}^k$ and set $G_r:=\mathbb C\ket{\mathrm{vac}}_{G_r}\oplus S_r$. In this construction, $S_r$ is the data sector and $\mathbb C\ket{\mathrm{vac}}_{G_r}$ is the orthogonal vacuum sector. On each basis vector, define
\begin{equation}
V_r\ket a_{A_{IO}^k}:=\ket a_{S_r}
\otimes\!\bigotimes_{s\in\N\setminus\{k,r\}}
\ket{\mathrm{vac}}_{G_s}.
\label{eq:route-isometry}
\end{equation}
This defines an isometry $V_r:A_{IO}^k\to\bigotimes_{s\in\N\setminus\{k\}}G_s$. Orthogonality of the data and vacuum sectors gives $V_r^\dagger V_s=\delta_{rs}\id^{A_{IO}^k}$, so $V:=(N-1)^{-1/2}\sum_{r\in\N\setminus\{k\}}V_r$
satisfies $V^\dagger V=\id^{A_{IO}^k}$. Each $V_r$ corresponds to a candidate next party, while their normalized sum $V$ embeds all candidates into the same remaining input space.

To realize this encoding as a shared input before conditioning, introduce two independent ancillary copies, $\widetilde A_{IO}^k\simeq A_{IO}^k$ and $S\simeq A_{IO}^k$, with orthonormal bases corresponding to that of the actual slot. Transport $V$ to the ancillary copy through this fixed basis identification and denote the resulting isometry by $\widetilde V:\widetilde A_{IO}^k\to\bigotimes_{r\in\N\setminus\{k\}}G_r$. The symbol $V$ continues to denote the isometry on the actual slot $A_{IO}^k$. Prepare a maximally entangled state on $S\widetilde A_{IO}^k$ and apply $\widetilde V$ to the ancillary copy. Let
$\ket{\Phi_{d_k}}_{S\widetilde A_{IO}^k}:=d_k^{-1/2}\sum_{a=1}^{d_k}\ket a_S\ket a_{\widetilde A_{IO}^k}$. In the corresponding bases, the same notation also denotes the maximally entangled state between $S$ and the actual slot $A_{IO}^k$.
The shared ancillary state is $\ket{\Omega_V}:=(\id^S\otimes\widetilde V)\ket{\Phi_{d_k}}_{S\widetilde A_{IO}^k}\in S\otimes\bigotimes_{r\in\N\setminus\{k\}}G_r$.

Conditioning at party $k$ implements the isometric encoding of $X_{(k)}$. Write $\mathcal V:=\id^P\otimes V\otimes\id^{A_{IO}^{\N\setminus\{k\}}F}$. At party $k$, choose a quantum operation that projects the actual slot $A_{IO}^k$ and input ancilla $S$ onto a maximally entangled state, with CJ matrix $C_k:=\lambda\proj{\Phi_{d_k}}_{S A_{IO}^k}$, where $0<\lambda\le1$, and set $\beta:=\lambda/d_k^2>0$. As shown in Fig.~\ref{fig:teleport-isometry}, conditioning on this maximally entangled projection yields
\begin{equation}
Z_k:=\bigl(X_{(k)}\otimes\proj{\Omega_V}\bigr)_{|C_k}
=\beta\mathcal V X_{(k)}\mathcal V^\dagger .
\label{eq:teleport}
\end{equation}

For arbitrary further finite-dimensional input ancillas $H_r$ and any normalized state $\tau$ on $\bigotimes_{r\in\N\setminus\{k\}}H_r$, set $E_k=S$, $E_r=G_r\otimes H_r$, and $\rho=\proj{\Omega_V}\otimes\tau$ in Eq.~\eqref{eq:uniform-first-condition}. This gives $Z_k\otimes\tau\in\CS_{N-1}^{+,P\to F}$, so $Z_k\in\ECS_{N-1}^{+,P\to F}$ on inputs $A_I^rG_r$ by Eq.~\eqref{eq:ecs-extension}.

\begin{figure}[t]
\centering
\resizebox{\columnwidth}{!}{%
\begin{tikzpicture}[x=1cm,y=1cm,line cap=butt,line join=round]

\path[use as bounding box] (0.25,1.28) rectangle (17.47,5.70);

\begin{scope}[draw=tnwire,line width=0.675pt]
  \draw (1.15,5.18) -- (3.60,5.18);
  \draw (5.50,5.18) -- (7.95,5.18);

  \draw (2.62,4.08) -- (3.60,4.08);
  \draw (5.50,4.08) -- (6.48,4.08);

  \draw (2.62,3.26) -- (3.66,3.26);
  \draw (5.44,3.26) -- (6.48,3.26);

  \draw (2.62,2.56) -- (3.66,2.56);
  \draw (5.44,2.56) -- (6.48,2.56);

  \draw (0.25,2.01) -- (1.44,2.01);
  \draw (7.66,2.01) -- (8.85,2.01);

  \draw (10.97,4.66) -- (13.27,4.66);
  \draw (15.17,4.66) -- (17.47,4.66);

  \draw (10.97,3.52) -- (11.41,3.52);
  \draw (12.59,3.52) -- (13.27,3.52);
  \draw (15.17,3.52) -- (15.83,3.52);
  \draw (17.01,3.52) -- (17.47,3.52);
\end{scope}

\path[draw=tnred,fill=tnredfill,line width=0.775pt,line join=miter]
  (3.60,3.66) rectangle (5.50,5.70);
\node[inner sep=0pt,text=tntext] at (4.55,4.68)
  {\fontsize{11}{12}\selectfont $X_{(k)}$};

\path[draw=tnblue,fill=tnbluefill,line width=0.775pt,line join=round]
  (2.62,4.36)
  -- (2.62,3.00)
  .. controls (2.2775834551,3.00) and (2.00,3.3044463701) .. (2.00,3.68)
  .. controls (2.00,4.0555536299) and (2.2775834551,4.36) .. (2.62,4.36)
  -- cycle;

\path[draw=tnblue,fill=tnbluefill,line width=0.775pt,line join=round]
  (3.66,3.54)
  -- (3.66,2.18)
  .. controls (4.0024165449,2.18) and (4.28,2.4844463701) .. (4.28,2.86)
  .. controls (4.28,3.2355536299) and (4.0024165449,3.54) .. (3.66,3.54)
  -- cycle;

\path[draw=tnblue,fill=tnbluefill,line width=0.775pt,line join=round]
  (5.44,3.54)
  -- (5.44,2.18)
  .. controls (5.0975834551,2.18) and (4.82,2.4844463701) .. (4.82,2.86)
  .. controls (4.82,3.2355536299) and (5.0975834551,3.54) .. (5.44,3.54)
  -- cycle;

\path[draw=tnblue,fill=tnbluefill,line width=0.775pt,line join=round]
  (6.48,4.36)
  -- (6.48,3.00)
  .. controls (6.8224165449,3.00) and (7.10,3.3044463701) .. (7.10,3.68)
  .. controls (7.10,4.0555536299) and (6.8224165449,4.36) .. (6.48,4.36)
  -- cycle;

\path[draw=tnblue,fill=tnbluefill,line width=0.775pt,line join=miter]
  (1.44,2.74) -- (2.62,2.74) -- (2.62,1.58) -- (1.44,1.28) -- cycle;
\node[inner sep=0pt,text=tntext] at (2.03,2.0208)
  {\fontsize{9.5}{10.5}\selectfont $\widetilde V^\dagger$};

\path[draw=tnblue,fill=tnbluefill,line width=0.775pt,line join=miter]
  (6.48,2.74) -- (7.66,2.74) -- (7.66,1.28) -- (6.48,1.58) -- cycle;
\node[inner sep=0pt,text=tntext] at (7.07,2.0208)
  {\fontsize{10}{11}\selectfont $\widetilde V$};

\node[inner sep=0pt,text=tntext] at (9.15,3.35)
  {\fontsize{15.5}{16.5}\selectfont $=$};

\path[draw=tnred,fill=tnredfill,line width=0.775pt,line join=miter]
  (13.27,3.08) rectangle (15.17,5.12);
\node[inner sep=0pt,text=tntext] at (14.22,4.10)
  {\fontsize{11}{12}\selectfont $X_{(k)}$};

\path[draw=tnblue,fill=tnbluefill,line width=0.775pt,line join=miter]
  (11.41,4.00) -- (12.59,4.00) -- (12.59,2.84) -- (11.41,2.54) -- cycle;
\node[inner sep=0pt,text=tntext] at (12.00,3.2808)
  {\fontsize{9.5}{10.5}\selectfont $V^\dagger$};

\path[draw=tnblue,fill=tnbluefill,line width=0.775pt,line join=miter]
  (15.83,4.00) -- (17.01,4.00) -- (17.01,2.54) -- (15.83,2.84) -- cycle;
\node[inner sep=0pt,text=tntext] at (16.42,3.2808)
  {\fontsize{10}{11}\selectfont $V$};

\node[above,font=\fontsize{8}{9}\selectfont,align=center]
  at (2.375,5.18) {$P\,A_{IO}^{\N\setminus\{k\}}\,F$};

\node[below,font=\fontsize{8}{9}\selectfont,align=center]
  at (3.11,4.08) {$A_{IO}^k$};

\node[above,font=\fontsize{9}{10}\selectfont] at (3.14,3.26) {$S$};

\node[below,font=\fontsize{8}{9}\selectfont,align=center]
  at (3.14,2.56) {$\widetilde A_{IO}^k$};

\node[below,font=\fontsize{7}{8}\selectfont,align=center]
  at (0.845,2.01) {$\mathop{\bigotimes}\limits_{r\in\N\setminus\{k\}}G_r$};

\node[above,font=\fontsize{8}{9}\selectfont,align=center]
  at (12.12,4.66) {$P\,A_{IO}^{\N\setminus\{k\}}\,F$};

\node[above,font=\fontsize{7}{8}\selectfont,align=center]
  at (11.19,3.52) {$\mathop{\bigotimes}\limits_{r\in\N\setminus\{k\}}G_r$};
\node[above,font=\fontsize{8}{9}\selectfont,align=center]
  at (12.93,3.52) {$A_{IO}^k$};

\end{tikzpicture}%
}
\caption{Isometric encoding and conditioning on a maximally entangled projection. 
}
\label{fig:teleport-isometry}
\end{figure}

Apply the induction hypothesis to $Z_k$ for the remaining $N-1$ parties and prepend $k$ to each permutation of those parties. Denote the resulting PSD components by $Z_\pi$, with $\pi\in\Pi_\N$ and $\pi\succeq(k)$. 

Since $0\preceq Z_\pi\preceq Z_k$, positivity implies $\Ker Z_k\subseteq\Ker Z_\pi$, hence $\supp Z_\pi\subseteq\supp Z_k\subseteq\Ran\mathcal V$. Every component can therefore be compressed through the same isometry.

We can therefore define $X_\pi:=\beta^{-1}\mathcal V^\dagger Z_\pi\mathcal V\succeq0$. Since $\mathcal V^\dagger\mathcal V=\id$, this recovery is unique and preserves the sum of components:
\begin{equation}
Z_k=\sum_{\pi\succeq(k)}Z_\pi ,
Z_\pi=\beta\mathcal VX_\pi\mathcal V^\dagger,
X_{(k)}=\sum_{\pi\succeq(k)}X_\pi.
\label{eq:permutation-pullback}
\end{equation}
The following identity converts an extended-space history constraint into the corresponding constraint on the original process space. Let $h=(r_1,\ldots,r_n)$ be any nonempty ordered sequence of distinct labels from $\N\setminus\{k\}$, and write $\mathcal V_{r_1}:=\id^P\otimes V_{r_1}\otimes\id^{A_{IO}^{\N\setminus\{k\}}F}$.
For any $Y\in\Herm(P\otimes A_{IO}^{\N}\otimes F)$, orthogonality of the data and vacuum sectors gives the identity
\begin{equation}
\begin{aligned}
&\mathcal V_{r_1}^\dagger
(\I-\D_{A_O^{r_n}})
\D_{\left(\bigotimes_{j\in\N\setminus\{k,r_1,\ldots,r_n\}}
A_{IO}^jG_j\right)F}
(\mathcal VY\mathcal V^\dagger)
\mathcal V_{r_1}\\
&=\frac{\prod_{j\in\N\setminus\{k,r_1,\ldots,r_n\}}d_{G_j}^{-1}}{N-1}
(\I-\D_{A_O^{r_n}})\D_{A_{IO}^{\N\setminus\{k,r_1,\ldots,r_n\}}F}Y .
\end{aligned}
\label{eq:isometric-sequence}
\end{equation}
The factor in Eq.~\eqref{eq:isometric-sequence} is strictly positive. Define $X_{(k,h)}:=\sum_{\pi\succeq(k,h)}X_\pi$. Substituting $Y=X_{(k,h)}$ into Eq.~\eqref{eq:isometric-sequence} and using the extended-space constraints yields
\begin{equation}
(\I-\D_{A_O^{r_n}})\D_{A_{IO}^{\N\setminus\{k,r_1,\ldots,r_n\}}F}X_{(k,h)}=0.
\label{eq:original-sequence}
\end{equation}
Equation~\eqref{eq:original-sequence} provides the constraint for the prefix $(k,r_1,\ldots,r_n)$. For each $\pi\succeq(k)$, define $T_\pi:=X_\pi$, so that $\sum_{\pi\succeq(k)}T_\pi=X_{(k)}$. The single-party output constraint for $X_{(k)}\in\mathsf{Proc}_N^{+,P\to F}$ gives Eq.~\eqref{eq:qccc-sequence-constraint} for the length-one prefix $(k)$, while Eq.~\eqref{eq:original-sequence} covers every longer prefix. Combining the components over all $k\in\N$ reconstructs $X$ and satisfies Eq.~\eqref{eq:qccc-sequence-constraint} for every nonempty ordered sequence. Thus the construction proves the stronger unnormalized complete-order decomposition statement for every $X\in\ECS_N^{+,P\to F}$. Applying it to the deterministic process $W$ under consideration, the operator characterization of QC-CC gives $W\in\QCCC_N^{P\to F}$, proving necessity. The proof of sufficiency is given in SM, Theorem~\ref{thm:pf-ECS-QCCC}.

\smallskip\noindent\textit{Consistency with ECS with trivial past and future.\textemdash} 
The past and future can be represented as two parties~\cite{wechs2021QuantumCircuitsClassical,costa2026IndefiniteQuantumCausality}: view $P$ as a past party $\mathsf P$ whose input is one-dimensional and whose output is $P$, and view $F$ as a future party $\mathsf F$ whose input is $F$ and whose output is one-dimensional. 
Write $\widehat W$ for the same Choi operator viewed in the corresponding $(N+2)$-party process space.

\begin{corollaryone}[Global past and future systems and the $(N+2)$-party representation]
Let $N\ge1$, let $P$ and $F$ be finite-dimensional systems, and let $W$ be a deterministic process. Then the following are equivalent:
$W\in\QCCC_N^{P\to F},
\;
\widehat W\in\QCCC_{N+2},
\;
\widehat W\in\ECS_{N+2}.$
\end{corollaryone}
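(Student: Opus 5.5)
The plan is to prove the cycle $W\in\QCCC_N^{P\to F}\Rightarrow\widehat W\in\QCCC_{N+2}\Rightarrow\widehat W\in\ECS_{N+2}\Rightarrow W\in\QCCC_N^{P\to F}$. The middle implication is an instance of the Main theorem with trivial global past and future, applied to the $(N+2)$-party space: the sufficiency direction (every QC-CC process is ECS, SM Theorem~\ref{thm:pf-ECS-QCCC}) specialised to one-dimensional $P,F$. It therefore remains to translate complete-order decompositions between the two representations.

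\emph{Step 1 ($\QCCC_N^{P\to F}\Rightarrow\QCCC_{N+2}$).} Take the family $\{T_\pi\}_{\pi\in\Pi_\N}$ from Eqs.~\eqref{eq:qccc-decomposition}--\eqref{eq:qccc-sequence-constraint}. Set $\widehat T_{(\mathsf P,\pi,\mathsf F)}:=T_\pi$ and assign zero to every other permutation of $\N\cup\{\mathsf P,\mathsf F\}$. Then verify the history constraints on each prefix.
\begin{itemize}[leftmargin=*]
\item Prefix $(\mathsf P)$: the constraint reads $(\I-\D_P)\D_{A_{IO}^\N F}W=0$. This is exactly the deterministic normalization $\Tr_{A_{IO}^\N F}W=D_O(\N)\id^P$, up to the dimension factor.
\item Prefixes $(\mathsf P,k_1,\dots,k_n)$: these coincide with Eq.~\eqref{eq:qccc-sequence-constraint}, because $\mathsf F$ has input $F$ and a trivial output, so $\D_{A_{IO}^{\mathsf F}}=\D_F$.
\item Prefix ending in $\mathsf F$: $\D_{A_O^{\mathsf F}}$ acts trivially, so the constraint holds automatically.
\item The $(N+2)$-party normalization $\Tr\widehat W=D_O$ follows from the $P\to F$ normalization after tracing $P$. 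One must watch the dimension conventions here: $D_O$ now includes $d_P$.
\end{itemize}

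\emph{Step 2 ($\ECS_{N+2}\Rightarrow\QCCC_N^{P\to F}$).} Apply the Main theorem with trivial past and future to obtain $\widehat W=\sum_{\sigma}\widehat T_\sigma$ over permutations $\sigma$ of $\N\cup\{\mathsf P,\mathsf F\}$, with all history constraints satisfied. The key observation is that any component whose history does not start with $\mathsf P$ vanishes, and similarly for histories in which $\mathsf F$ is not last.

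\emph{Components with $\mathsf P$ not first.} Consider the first prefix, of length one, $(r)$ with $r\neq\mathsf P$. Its constraint $(\I-\D_{A_O^r})\D_{\text{rest}}\widehat T_{(r)}=0$, together with tracing $P$, shows that $\widehat T_{(r)}$ restricted to $\Tr_{\text{rest}}$ is proportional to identity on the outputs. Meanwhile $\widehat W$ itself satisfies $\Tr_{A_{IO}^\N F}\widehat W=D_O(\N)\id^P$.

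Establishing this vanishing is the main obstacle, and it is cleanest to argue it at the level of the realizing circuit rather than algebraically. In the QC-CC circuit from Proposition~\ref{prop:basic-pf-qccc}, $\mathsf P$ has trivial input, so applying it later is equivalent to preparing its output later. One can therefore commute $\mathsf P$ to the front of every branch: its output state can be prepared at the start and held in a memory, since no operation feeds into it. Symmetrically, $\mathsf F$ has trivial output, so it can be moved to the end, with its input system carried in memory until then.

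Concretely, I would regroup the components by setting
\[
T_\pi:=\sum_{\sigma:\ \sigma|_\N=\pi}\widehat T_\sigma,
\]
where $\sigma|_\N$ deletes $\mathsf P$ and $\mathsf F$. I would then check Eq.~\eqref{eq:qccc-sequence-constraint} for each history $h$ of $\N$. This amounts to summing the $(N+2)$-party constraints over all interleavings of $\mathsf P$ and $\mathsf F$ into $h$. The $\D$ operators for the trivial input of $\mathsf P$ and the trivial output of $\mathsf F$ drop out, and the nontrivial $\D_P$ and $\D_F$ can be absorbed by the linearity and commutativity of the depolarizing maps.

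If this summation argument leaves residual terms where $\mathsf P$ sits in the middle, the fallback is the circuit-level commutation above, followed by converting back to the operator characterization via Proposition~\ref{prop:basic-pf-qccc}. Finally, normalization $\Theta_W=\id^P$ holds because $W$ is assumed deterministic, which completes the cycle.
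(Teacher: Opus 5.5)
Your Step 1 is an operator-level version of the paper's circuit embedding, Eq.~\eqref{eq:pf-endpoint-ordinary-maps}, and it is correct. Your middle implication via the main theorem with trivial $P,F$ is also correct. The gap is in Step 2. Your ``key observation'' is false. The main theorem only supplies \emph{some} QC-CC decomposition of $\widehat W$, and such decompositions are not unique. For $N=1$ and $W=\id^P\otimes\rho^{A_I}\otimes\id^{A_O}\otimes\sigma^F$, the single component $\widehat T_{(A,\mathsf P,\mathsf F)}:=\widehat W$ already satisfies every $(N+2)$-party prefix constraint, so nothing forces $\mathsf P$ to be first. Your regrouping $T_\pi:=\sum_{\sigma|_\N=\pi}\widehat T_\sigma$ is nevertheless the right object. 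It is exactly the decomposition produced by the paper's reordered circuit, Eq.~\eqref{eq:pf-firstslot-branch-r-decomposition}. However, the reason you give for its constraints does not work. Write $T_h=\sum_g\widehat T_g$, where $g$ runs over interleavings of $h$ with subsets of $\{\mathsf P,\mathsf F\}$ that end in $\ell(h)$. If $\mathsf F\in g$, the $(N+2)$-party constraint lacks $\D_F$, and you may add it because $\D_F$ commutes with $\I-\D_{A_O^{\ell(h)}}$. If $\mathsf P\notin g$, the constraint at $g$ is $(\I-\D_{A_O^{\ell(h)}})\D_{A_{IO}^{\N\setminus h}P}\widehat T_g=0$, with $F$ also depolarized when $\mathsf F\notin g$. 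This carries a $\D_P$ that Eq.~\eqref{eq:qccc-sequence-constraint} must not contain. A depolarizer can be inserted into a vanishing identity but not removed, so linearity and commutativity cannot absorb it. These are exactly the ``residual terms'' with $\mathsf P$ after $\ell(h)$ that you anticipate but do not resolve.

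The missing ingredient is a future-validity lemma. Let $\mathrm{fut}(g)$ be the set of slots not in $g$, and let $\D_{\mathcal S}$ depolarize $\bigotimes_{x\in\mathcal S}A_{IO}^x$. Then for every prefix $g$ and every $s\notin g$, $(\I-\D_{A_O^s})\D_{\mathrm{fut}(g)\setminus\{s\}}\widehat T_g=0$. This follows by backward induction on $|g|$ from $\widehat T_g=\sum_{j\notin g}\widehat T_{(g,j)}$. The term $j=s$ is the prefix constraint at $(g,s)$. For $j\neq s$, apply the hypothesis at $(g,j)$ and then $\D_{A_{IO}^j}$, as in Proposition~\ref{prop:pf-recursive-completion}. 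Taking $s=\mathsf P$ shows that $\D_{\mathrm{fut}(g)\setminus\{\mathsf P\}}\widehat T_g$ is already $\D_P$-invariant, so the extra $\D_P$ can be dropped. Each interleaving term then satisfies Eq.~\eqref{eq:qccc-sequence-constraint}, and Proposition~\ref{prop:basic-pf-qccc} finishes. This gives a purely algebraic substitute for the paper's proof of Proposition~\ref{prop:pf-qccc-endpoint-reordering}. Your circuit-level fallback is that proof in outline, but ``prepare its output at the start and hold it in memory'' skips the real work. The original circuit calls $\mathsf P$ at a branch-dependent position. The paper handles this with sector-labelled memories and merged blocks $\widehat M_\pi^{\to s}*\widehat M_{(\pi,s)}^{\to y}$ that read the stored copy of $A_O^s$. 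It also checks trace preservation of the modified instruments.
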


We first relate the two QC-CC representations by reordering the endpoint parties. The global past party has only an output and can be moved to the first position in every nonzero complete order, with its output stored in an internal ancillary system. The global future party has only an input and can be moved to the last position. This transformation preserves the process operator and the relative order of the intermediate laboratories, proving the first equivalence. 

The ECS equivalence then follows from the main theorem with trivial global past and future. The reordering of the $P,F$ parties to the first and last positions and the conversion between the two representations are proved in SM, Propositions~\ref{prop:pf-qccc-endpoint-reordering} and~\ref{prop:pf-consistency-with-wechs}. Figure~\ref{fig:pf-endpoint-reordering} illustrates the reordering for a complete four-operation order with $s=\mathsf P$. The symbols $\widehat{\mathcal M}$ and $\widetilde{\mathcal M}$ denote the internal CP maps before and after reordering, while $\alpha_i$ and $\eta_i$ denote the corresponding internal ancillary systems. When $P,F$ are both one-dimensional, the two parties become trivial, and the specialization of the main theorem to trivial $P,F$ gives $\ECS_N=\QCCC_N$.

\smallskip
\noindent\textit{Probabilistic processes.\textemdash}
Fix a finite outcome set $\mathcal R$. A family of PSD operators
$\mathbf W=(W^{[r]})_{r\in\mathcal R}$ is an outcome-indexed process family if its sum over outcomes,
$\overline W:=\sum_{r\in\mathcal R}W^{[r]}$, is a deterministic $P\to F$ process matrix.
The family is a probabilistic QC-CC process if it admits a common QC-CC realization, in which the final map applied after all external operations is replaced by a quantum instrument whose outcomes are indexed by $r\in\mathcal R$.
We denote the class of such families by $\pQCCC_N^{P\to F}$ and write $\mathbf W\in\pQCCC_N^{P\to F}$.
Equivalently, each outcome element admits PSD components indexed by complete orders, and the corresponding prefix sums, after summing over outcomes, satisfy Eqs.~
\eqref{eq:qccc-decomposition}--\eqref{eq:qccc-sequence-constraint}.
This extends the QC-CC operator characterization above to the outcome-resolved setting.

For a general PSD outcome-indexed process family $\mathbf X=(X^{[r]})_{r\in\mathcal R}$, the unnormalized probabilistic version of ECS is defined as follows: for arbitrary finite-dimensional input ancillas and any normalized shared state,
the family $(X^{[r]}\otimes\rho)_{r\in\mathcal R}$ admits a common recursive CS first-level decomposition. Each first-level branch is required to belong to
$\mathsf{Proc}_N^{+,P\to F}$ only after summing over outcomes, and the conditional outcome-indexed process family must continue to satisfy the CS condition. 
A family satisfying these conditions is said to belong to $\pECS_N^{+,P\to F}$, and we write $\mathbf X\in\pECS_N^{+,P\to F}$.
If the sum over outcomes $\overline X:=\sum_{r\in\mathcal R}X^{[r]}$ is a deterministic $P\to F$ process matrix, denote the family by $\mathbf W$ and write $\mathbf W\in\pECS_N^{P\to F}$. 
The full proof of the equivalence is given in SM, Proposition~\ref{prop:pecs-pqccc-equivalence}. See Wechs et al.~\cite{wechs2021QuantumCircuitsClassical} for the characterization of probabilistic QC-CC.

\begin{figure}[t]
\centering
\tikzset{
  qcinternal/.style={draw=red!72!black,fill=red!3,line width=.55pt,
    minimum width=1.42cm,minimum height=1.12cm,inner xsep=2pt,inner ysep=2pt,
    align=center,font=\scriptsize},
  qcexternal/.style={draw=blue!65!black,fill=blue!12,line width=.5pt,
    minimum width=.62cm,minimum height=.76cm,inner sep=1pt,font=\scriptsize}
}
\resizebox{0.45\textwidth}{!}{%
\begin{quantikz}[wire types={q,n},column sep=.33cm,row sep=.20cm,
  background color=red!8,thin lines]
\lstick{\textbf{original}}\setwiretype{n} &
 \gate[2,style={qcinternal}]{\widehat{\mathcal M}_{\varnothing}^{\to A}} &
 \gate[style={qcexternal}]{A}\wire[l][1]["A_I^A"{above,pos=.5,yshift=14pt,font=\tiny}]{q} &
 \gate[2,style={qcinternal}]{\widehat{\mathcal M}_{(A)}^{\to s}}\wire[l][1]["A_O^A"{above,pos=.5,yshift=14pt,font=\tiny}]{q} &
 \gate[style={qcexternal}]{s}\setwiretype{n} &
 \gate[2,style={qcinternal}]{\widehat{\mathcal M}_{(A,s)}^{\to B}}\setwiretype{q}\wire[l][1]["A_O^s"{above,pos=.5,yshift=14pt,font=\tiny}]{q} &
 \gate[style={qcexternal}]{B}\wire[l][1]["A_I^B"{above,pos=.5,yshift=14pt,font=\tiny}]{q} &
 \gate[2,style={qcinternal}]{\widehat{\mathcal M}_{(A,s,B)}^{\to C}}\wire[l][1]["A_O^B"{above,pos=.5,yshift=14pt,font=\tiny}]{q} &
 \gate[style={qcexternal}]{C}\wire[l][1]["A_I^C"{above,pos=.5,yshift=14pt,font=\tiny}]{q} &
 \gate[2,style={qcinternal}]{\widehat{\mathcal M}_{(A,s,B,C)}^{\rm fin}}\setwiretype{q}\wire[r][1][""{above}]{n} \\
 & & & \wire[l][2]["\alpha_1"{below,pos=.5,yshift=-3pt,text=red!70!black,font=\scriptsize}]{q}
 & & \wire[l][2]["\alpha_2"{below,pos=.5,yshift=-3pt,text=red!70!black,font=\scriptsize}]{q}
 & & \wire[l][2]["\alpha_3"{below,pos=.5,yshift=-3pt,text=red!70!black,font=\scriptsize}]{q}
 & & \wire[l][2]["\alpha_4"{below,pos=.5,yshift=-3pt,text=red!70!black,font=\scriptsize}]{q}
\end{quantikz}}

\vspace{14mm}
\resizebox{0.45\textwidth}{!}{%
\begin{quantikz}[wire types={q,n},column sep=.33cm,row sep=.20cm,
  background color=red!8,thin lines]
\lstick{\textbf{new}}\setwiretype{n} &
 \gate[style={qcinternal}]{\widetilde{\mathcal M}_{\varnothing}^{\to s}=1} &
 \gate[style={qcexternal}]{s}\setwiretype{n} &
 \gate[2,style={qcinternal}]{\widetilde{\mathcal M}_{(s)}^{\to A}}\setwiretype{q}\wire[l][1]["A_O^s"{above,pos=.5,yshift=14pt,font=\tiny}]{q} &
 \gate[style={qcexternal}]{A}\wire[l][1]["A_I^A"{above,pos=.5,yshift=14pt,font=\tiny}]{q} &
 \gate[2,style={qcinternal}]{\widetilde{\mathcal M}_{(s,A)}^{\to B}}\wire[l][1]["A_O^A"{above,pos=.5,yshift=14pt,font=\tiny}]{q} &
 \gate[style={qcexternal}]{B}\wire[l][1]["A_I^B"{above,pos=.5,yshift=14pt,font=\tiny}]{q} &
 \gate[2,style={qcinternal}]{\widetilde{\mathcal M}_{(s,A,B)}^{\to C}}\wire[l][1]["A_O^B"{above,pos=.5,yshift=14pt,font=\tiny}]{q} &
 \gate[style={qcexternal}]{C}\wire[l][1]["A_I^C"{above,pos=.5,yshift=14pt,font=\tiny}]{q} &
 \gate[2,style={qcinternal}]{\widetilde{\mathcal M}_{(s,A,B,C)}^{\rm fin}}\setwiretype{q}\wire[r][1][""{above}]{n} \\
 & & & & & \wire[l][2]["\eta_2"{below,pos=.5,yshift=-3pt,text=red!70!black,font=\scriptsize}]{q}
 & & \wire[l][2]["\eta_3"{below,pos=.5,yshift=-3pt,text=red!70!black,font=\scriptsize}]{q}
 & & \wire[l][2]["\eta_4"{below,pos=.5,yshift=-3pt,text=red!70!black,font=\scriptsize}]{q}
\end{quantikz}}
\caption{QC-CC circuit diagram for reordering the $P$ parties to the first positions.
 $A,B,C$ label three ordinary laboratories, and $s=\mathsf P$. The upper row shows the original complete order of operations $(A,s,B,C)$, and the lower row shows the order in which the operations are applied in the new circuit, $(s,A,B,C)$.}
\label{fig:pf-endpoint-reordering}
\end{figure}

\smallskip\noindent\textit{Conclusion.\textemdash}We prove that ECS processes are exactly those realizable as QC-CC. This equivalence yields a finite semidefinite-programming characterization and an explicit membership test for ECS.
Consequently, existing QC-CC results also apply to ECS, including recent analyses of dynamical causal order and quantum metrology~\cite{mothe2025CorrelationsQuantumCircuits,mothe2024ReassessingAdvantageIndefinite}.
Conversely, the characterization makes techniques developed for ECS available to QC-CC and may facilitate the extension of semi-device-independent certification with trusted quantum inputs to general multipartite processes~\cite{dourdent2022SemiDeviceIndependentCertificationCausal}.

\textit{Acknowledgments.\textemdash}This work is supported by the National Natural Science Foundation of China (Grant No. 12075323), the Natural Science Foundation of Guangdong Province of China (Grant No. 2025A1515011440), and the Innovation Program for Quantum Science and Technology (Grant No. 2021ZD0300702).

\bibliographystyle{apsrev4-2}
\bibliography{references}

\clearpage
\onecolumngrid
\setcounter{page}{1}
\setcounter{section}{0}
\setcounter{subsection}{0}
\setcounter{subsubsection}{0}
\setcounter{equation}{0}
\setcounter{figure}{0}
\setcounter{table}{0}
\setcounter{theorem}{0}
\numberwithin{theorem}{section}
\setlength{\parskip}{0pt}
\setlength{\parindent}{1.5em}
\setlength{\emergencystretch}{2em}
\setlength{\abovedisplayskip}{4pt plus 1pt minus 2pt}
\setlength{\belowdisplayskip}{4pt plus 1pt minus 2pt}
\setlength{\abovedisplayshortskip}{2pt plus 1pt}
\setlength{\belowdisplayshortskip}{2pt plus 1pt}
\setlist{nosep,leftmargin=2em}

\setcounter{secnumdepth}{3}
\renewcommand{\thesection}{S\arabic{section}}
\renewcommand{\thesubsection}{\thesection.\Alph{subsection}}
\renewcommand{\thesubsubsection}{\thesubsection.\arabic{subsubsection}}
\renewcommand{\theequation}{S\arabic{equation}}
\makeatletter
\def\p@subsection{}
\def\p@subsubsection{}
\def\p@paragraph{}
\def\p@subparagraph{}
\makeatother

\begin{center}
{\Large\bfseries Supplemental Material for\\[0.3em]
``Extensibly Causally Separable Processes Admit Realizations as Quantum Circuits with Classical Control of Causal Order''\par}
\vspace{1em}
{\large Wenjie Wei$^{1}$\orcidlink{0009-0001-2435-0852} and
Shengshi Pang$^{1,2}$\orcidlink{0000-0002-6351-539X}\par}
\vspace{0.5em}
{\small $^{1}$School of Physics, Sun Yat-sen University, Guangzhou, Guangdong 510275, China\par
$^{2}$Hefei National Laboratory, University of Science and Technology of China, Hefei 230088, China\par}
\vspace{0.4em}
{\small\href{mailto:pangshsh@mail.sysu.edu.cn}{\texttt{pangshsh@mail.sysu.edu.cn}}\par}
\end{center}
\vspace{1.2em}

\noindent
This Supplemental Material presents the definitions used in the main text and a complete proof of the main theorem. We abbreviate extensible causal separability as ECS and quantum circuits with classical control of causal order as QC-CC.
Section~\ref{sec:foundations} defines the general $P\to F$ framework. Section~\ref{sec:pf} proves the main theorem. Section~\ref{sec:pf-consistency} relates it to the ECS definition with trivial global past and future through the $(N+2)$-party representation, and Sec.~\ref{chap:probabilistic-ecs} treats the probabilistic extension.

\section{Basic definitions}

\label{sec:foundations}

Let $\N=\{1,\ldots,N\}$ label the $N$ external operations. The input and output Hilbert spaces associated with operation $k$ are denoted by $A_I^k,A_O^k$, respectively. We write
$A_{IO}^k:=A_I^k\otimes A_O^k$ and, for any finite label set $\mathcal S\subseteq\N$, adopt the conventions
$A_{IO}^{\mathcal S}:=\bigotimes_{k\in\mathcal S}A_{IO}^k$, $A_{IO}^{\varnothing}:=\mathbb C$, and
$d_{A_{IO}^{\mathcal S}}:=\prod_{k\in\mathcal S}d_{A_I^k}d_{A_O^k}$. Juxtaposed system symbols denote the corresponding tensor product.
Label sets are written explicitly as $\N$. Representing $P,F$ by two parties with one-dimensional input and output ports, respectively, gives a total of $N+2$ parties.
All spaces are finite dimensional. We use $d_X$ for the dimension of $X$, $\id^X$ for the identity operator on $X$, and $\Pos(X)$ for the cone of positive semidefinite (PSD) operators on $X$. For any Hilbert space $H$, $\mathcal L(H)$ denotes the space of linear operators on $H$. For any finite label set $\mathcal S$, define $D_O(\mathcal S):=\prod_{k\in\mathcal S}d_{A_O^k},
   \; D_O(\varnothing):=1.$

If party $k$ has an input ancilla $E_k$, fix a quantum operation whose Choi--Jamiołkowski (CJ) matrix acts on $A_I^k\otimes E_k\otimes A_O^k$ and is denoted by $C_k$, so that
$C_k\succeq0$ and $\Tr_{A_O^k}C_k\preceq\id^{A_I^kE_k}$. For any extended operator $X$, the conditioning of $X$ on $C_k$ is defined by
\begin{equation}
 X_{|C_k}:=\Tr_{A_I^kE_kA_O^k}\!\left[(C_k\otimes\id)X\right].
\label{eq:basic-condition}
\end{equation}

For a nonempty ordered sequence $h=(k_1,\ldots,k_s)$ and a tuple $\boldsymbol C_h=(C_{k_1},\ldots,C_{k_s})$ of quantum-operation Choi matrices fixed along that history, define
\begin{equation}
X_{|\boldsymbol C_h}:=\bigl(\cdots((X_{|C_{k_1}})_{|C_{k_2}})\cdots\bigr)_{|C_{k_s}},
\;
\boldsymbol C_{\varnothing}:=(),\; X_{|\boldsymbol C_{\varnothing}}:=X.
\label{eq:basic-iterated-condition}
\end{equation}

For two operators sharing a system $H$, let $T_H$ denote the partial transpose on $H$ in the fixed CJ basis, and define the link product by
\begin{equation}
X*Y:=\Tr_H\!\left[(X^{T_H}\otimes\id)Y\right].
\label{eq:basic-link-product}
\end{equation}
With the convention in Eq.~\eqref{eq:basic-link-product},
$X_{|C_k}=C_k^T*X$. Since full transposition bijectively preserves the quantum-operation CJ cone, it does not affect any universal condition below.

For any subsystem $R$, define the completely depolarizing superoperator by
\begin{equation}
 \mathcal D_R(Y):=\Tr_R(Y)\otimes\frac{\id^R}{d_R},
 \; \mathcal I_R(Y):=Y .
\label{eq:basic-replace}
\end{equation}
The system subscript of the identity superoperator is omitted when its domain is clear.
If $R$ and $S$ are disjoint, then
$\mathcal D_R\mathcal D_S=\mathcal D_S\mathcal D_R=\mathcal D_{RS}$.

For a finite label set $\mathcal T$, $\Pi_{\mathcal T}$ denotes the set of its full permutations. A history is an ordered sequence of distinct labels in $\mathcal T$. For a
nonempty history $h=(k_1,\ldots,k_n)$, write $\ell(h):=k_n$. The notation $(h,j)$ or $(k,h)$ denotes the flat concatenation of one label to the end or beginning of $h$, respectively. We also adopt the convention that
$\pi\succeq\varnothing$ holds for all $\pi\in\Pi_{\mathcal T}$, whereas, for nonempty $h$,
$\pi\succeq h$ if and only if $(\pi_1,\ldots,\pi_n)=h$.

The conditions defining a deterministic $P\to F$ process matrix comprise three parts: positive semidefiniteness, homogeneous linear equality constraints, and an affine normalization constraint ensuring that every insertion of local CPTP maps yields a trace-preserving $P\to F$ map. The cone of unnormalized processes used here is obtained by dropping the affine constraint and retaining the first two sets of conditions. 
Ordinary multipartite process matrices correspond to $P\simeq F\simeq\mathbb C$~\cite{araujo2015WitnessingCausalNonseparability}.

\begin{proposition}[Equivalent characterizations of the cone of unnormalized $P\to F$ processes]
\label{prop:pf-validity}
Let $N\ge1$ and
$X\in\Pos\!\left(P\otimes A_{IO}^{\N}\otimes F\right),\; \Theta_X:=\frac{1}{D_O(\N)}\Tr_{A_{IO}^{\N}F}X\in\Pos(P).$
The following conditions are equivalent:
\begin{enumerate}[label=(\roman*),leftmargin=2.4em]
\item For arbitrary CJ matrices $M_i$ of local completely positive and trace-preserving (CPTP) maps,
\begin{equation}
\Tr_F\!\left[\left(\bigotimes_{i\in\N}M_i\right)*X\right]=\Theta_X,
\end{equation}
that is, taking the resulting $P\to F$ CJ matrix and tracing out its output system $F$ always gives $\Theta_X$, independently of the choice of local CPTP maps.

\item For every nonempty subset $\varnothing\ne\mathcal K\subseteq\N$,
\begin{equation}
\left(\prod_{i\in\mathcal K}(\mathcal I-\mathcal D_{A_O^i})\right)
\mathcal D_{A_{IO}^{\N\setminus\mathcal K}F}X=0.
\label{eq:basic-process-validity}
\end{equation}
\end{enumerate}
The positive operators satisfying either condition form the cone of unnormalized valid processes, denoted by $\mathsf{Proc}_N^{+,P\to F}$, with the convention $\mathsf{Proc}_0^{+,P\to F}:=\Pos(P\otimes F)$ and $\Theta_X:=\Tr_F X$ for $N=0$. Furthermore, $X$ is a deterministic $P\to F$ process matrix if and only if these equivalent conditions hold and
\begin{equation}
\Theta_X=\id^P
\;\;\text{if and only if}\;\;
\Tr_{A_{IO}^{\N}F}X=D_O(\N)\,\id^P.
\label{eq:pf-initial-normalization}
\end{equation}
The homogeneous constraints in Eq.~\eqref{eq:basic-process-validity} require the output trace to be independent of all inserted CPTP maps. The resulting positive operator $\Theta_X$ is fixed by $X$. Equation~\eqref{eq:pf-initial-normalization} selects the deterministic section by imposing $\Theta_X=\id^P$. When $P$ is one dimensional, the latter reduces to $\Tr X=D_O(\N)$. 
\end{proposition}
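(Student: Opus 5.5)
The plan is to identify both conditions with a single statement about the partially contracted object. For fixed local CJ operators, set $Y_{\mathcal S}:=(\bigotimes_{i\in\mathcal S}M_i)*X$, an operator on $P\otimes A_{IO}^{\N\setminus\mathcal S}\otimes F$. The key observation is that the map $M_i\mapsto \Tr_F[(\bigotimes_i M_i)*X]$ is multilinear. The set of CPTP Choi matrices on $A_{IO}^i$ is an affine set: $\{M\succeq0:\Tr_{A_O^i}M=\id^{A_I^i}\}$. Its affine span is $\{M:\Tr_{A_O^i}M=\id\}$, and its linear direction space is $\mathcal L_i:=\{L:\Tr_{A_O^i}L=0\}$. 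This direction space is exactly the range of $\mathcal I-\mathcal D_{A_O^i}$, applied to arbitrary operators (up to the partial transpose in the link product, which preserves $\mathcal L_i$).

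\textbf{Proof of (ii)$\Rightarrow$(i).} Expand each $M_i$ as
\[
M_i=\frac{\id^{A_{IO}^i}}{d_{A_O^i}}+L_i,\qquad L_i\in\mathcal L_i .
\]
The trace over $F$ corresponds to contracting $F$ with $\id^F$, that is, to applying $\mathcal D_F$ up to a factor. Expanding the product over $\mathcal K\subseteq\N$ gives a sum of terms. Each term pairs $\bigotimes_{i\in\mathcal K}L_i\otimes\bigotimes_{i\notin\mathcal K}\id/d_{A_O^i}\otimes\id^F$ with $X$. Pairing an operator $L\in\mathcal L_i$ with $X$ only sees $(\mathcal I-\mathcal D_{A_O^i})$ applied to $X$, since $\mathcal D_{A_O^i}$ is self-adjoint and annihilates $\mathcal L_i$. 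Pairing with an identity sees only $\mathcal D_{A_{IO}^i}$. Hence the $\mathcal K$-term is a contraction of $\prod_{i\in\mathcal K}(\mathcal I-\mathcal D_{A_O^i})\mathcal D_{A_{IO}^{\N\setminus\mathcal K}F}X$. This vanishes by (ii) for $\mathcal K\neq\varnothing$. The $\mathcal K=\varnothing$ term equals $D_O(\N)^{-1}\Tr_{A_{IO}^\N F}X=\Theta_X$.

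\textbf{Proof of (i)$\Rightarrow$(ii).} By multilinearity and the affine-span fact, (i) extends from CPTP Choi matrices to all $M_i$ in the affine span. Taking differences shows that every $\mathcal K$-term above vanishes for all choices $L_i\in\mathcal L_i$. This is done by inclusion–exclusion over $\mathcal K$: fix $L_i$ for $i\in\mathcal K$, and set $M_i=\id/d_{A_O^i}+t_iL_i$ with scalar $t_i$. The function $F(t)$ is then a polynomial whose coefficient of $\prod_{i\in\mathcal K}t_i$ must vanish.

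The step I expect to be the main obstacle is the passage from "the contraction against all $\bigotimes_{i\in\mathcal K} L_i\otimes\bigotimes_{i\notin\mathcal K}\id$, with a free $P$ output and $\Tr_F$, vanishes" to the operator identity (ii). The plan is as follows:

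\begin{enumerate}
\item Apply (i) to the $P$-operator. Since it is a full operator on $P$ and not just a trace, pairing against arbitrary operators on $P$ is available.
\item The tensor products $\bigotimes_{i\in\mathcal K}L_i$ with $L_i\in\mathcal L_i$ span $\bigotimes_{i\in\mathcal K}\mathcal L_i$.
\item $\bigotimes_{i\in\mathcal K}\mathcal L_i$ is exactly the range of $\prod_{i\in\mathcal K}(\mathcal I-\mathcal D_{A_O^i})$ on $\mathcal L(A_{IO}^{\mathcal K})$.
\item Therefore the vanishing of all pairings is equivalent to the vanishing of the projection of $\mathcal D_{A_{IO}^{\N\setminus\mathcal K}F}X$ onto this range. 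This uses that the maps $\mathcal D$ are commuting self-adjoint idempotents with respect to the Hilbert–Schmidt inner product.
\item The resulting projection is precisely the left-hand side of Eq.~\eqref{eq:basic-process-validity}.
\end{enumerate}

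Finally, the deterministic characterization is immediate. Once (i) holds, the induced $P\to F$ map is trace preserving iff its $F$-marginal $\Theta_X$ equals $\id^P$, which is Eq.~\eqref{eq:pf-initial-normalization}.
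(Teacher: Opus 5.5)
Your proposal is correct and takes essentially the same route as the paper. You expand each inserted CPTP map around the full-rank reference point $\id^{A_{IO}^i}/d_{A_O^i}$ with directions in $\operatorname{Im}(\mathcal I-\mathcal D_{A_O^i})$, use the multiaffine expansion to make every coefficient indexed by a nonempty $\mathcal K$ vanish, and invoke nondegeneracy of the Hilbert--Schmidt pairing to turn those coefficient conditions into Eq.~\eqref{eq:basic-process-validity}; your steps 1--5 simply spell out the pairing argument that the paper compresses into a single sentence.
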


\begin{proof}
For each $i\in\N$, choose the full-rank CPTP CJ matrix
$D_i:=\id^{A_I^iA_O^i}/d_{A_O^i}$. With $D_i$ as a reference point, the translation directions of the affine space of Hermitian trace-preserving (TP) CJ operators satisfy
$\Tr_{A_O^i}H_i=0$, and this direction space is precisely $\operatorname{Im}(\mathcal I-\mathcal D_{A_O^i})$. Writing the CPTP map inserted in each slot as
$D_i+t_iH_i$ for sufficiently small real $t_i$ and taking the multiaffine expansion, condition (i) is independent of the inserted maps if and only if every nonconstant coefficient indexed by a nonempty set $\mathcal K$ vanishes. Nondegeneracy of the finite-dimensional
Hilbert--Schmidt pairing expresses these coefficient conditions as
Eq.~\eqref{eq:basic-process-validity}. Conversely, the same expansion and Eq.~\eqref{eq:basic-process-validity}
eliminate all nonconstant terms, making the result independent of the inserted CPTP maps. Setting every slot to $D_i$ gives the remaining constant term
\[
\Tr_F\!\left[\left(\bigotimes_{i\in\N}D_i\right)*X\right]
=\frac{1}{D_O(\N)}\Tr_{A_{IO}^{\N}F}X
=\Theta_X,
\]
which establishes (i). If, in addition,
$\Theta_X=\id^P$, every insertion of CPTP maps yields a trace-preserving $P\to F$ CJ matrix. 
Since $X\succeq0$ guarantees
complete positivity, this trace-preservation property makes $X$ a valid $P\to F$ process matrix.
\end{proof}

When $P\simeq F\simeq\mathbb C$, Eq.~\eqref{eq:basic-process-validity} reduces to the linear conditions for ordinary multipartite process matrices used by Wechs et al.~\cite{wechs2019DefinitionCharacterisationMultipartite}. We denote the corresponding linear subspace of valid processes by $\mathcal L^{\N}$. Deterministic normalization additionally requires $\Tr X=D_O(\N)$.
The ordinary cone of unnormalized processes therefore satisfies
\begin{equation}
\mathsf{Proc}_N^{+,\mathbb C\to\mathbb C}
=\{X\succeq0:X\in\mathcal L^{\N}\}.
\label{eq:pf-proc-flat}
\end{equation}

If party $j$ has an input ancilla $E_j$, its full input-output system is $A_I^jE_jA_O^j$.

Let $\mathcal L^{P\prec k\prec(\N\setminus\{k\})\prec F}$ denote the subspace defined by
\begin{align}
   (\mathcal I-\mathcal D_{A_O^k})
   \mathcal D_{A_{IO}^{\N\setminus\{k\}}F}Y&=0,
   \label{eq:pf-firstcompat}\\
   \left(\prod_{i\in\mathcal X}
      (\mathcal I-\mathcal D_{A_O^i})\right)
   \mathcal D_{A_{IO}^{(\N\setminus\{k\})\setminus\mathcal X}F}Y&=0
   \;
   (\varnothing\ne\mathcal X\subseteq\N\setminus\{k\})
   \label{eq:pf-auto-first-rest}
\end{align}
for $Y\in\Herm(P\otimes A_{IO}^{\N}\otimes F)$. If $Y\in\mathsf{Proc}_N^{+,P\to F}$ also belongs to this subspace, it is called compatible with $P\prec k\prec(\N\setminus\{k\})\prec F$.

We then define the CS and ECS processes for processes with nontrivial $P$ and $F$ as follows.

\begin{definition}[ECS for processes with nontrivial $PF$]
\label{def:pf-direct-ecs}
The base case of the recursion is
\begin{equation}
   \CS_0^{+,P\to F}:=\Pos(P\otimes F)
   =\mathsf{Proc}_0^{+,P\to F}.
   \label{eq:pf-CS-base}
\end{equation}
For $N\ge1$, $X\in\CS_N^{+,P\to F}$ if and only if there exists a decomposition
\begin{equation}
   X=\sum_{k\in\N}X_{(k)},
   \;
   X_{(k)}\in\mathsf{Proc}_N^{+,P\to F},
   \label{eq:pf-CSsplit}
\end{equation}
  such that, for any quantum operation by party $k$ with CJ matrix $C_k$,
\begin{equation}
   (X_{(k)})_{|C_k}\in\CS_{N-1}^{+,P\to F}.
   \label{eq:pf-CSrec}
\end{equation}
If the inputs have been extended to $A_I^jE_j$, Eq.~\eqref{eq:pf-CSsplit} is understood in terms of the parties' input-output systems $A_I^jE_jA_O^j$. The subscript $N-1$ on the right-hand side records only the number of remaining operations, whose actual labels are $\N\setminus\{k\}$. %The same $P,F$ are retained throughout.

Set $\ECS_0^{+,P\to F}:=\Pos(P\otimes F)$. For $N\ge1$, an operator $X\in\mathsf{Proc}_N^{+,P\to F}$ belongs to $\ECS_N^{+,P\to F}$ if and only if, for arbitrary input ancillas state $\rho$, the extended operator $X\otimes\rho$ belongs to $\CS_N^{+,P\to F}$ as defined above. The deterministic section is
\[
   \ECS_N^{P\to F}:=
   \left\{X\in\ECS_N^{+,P\to F}\mid \Theta_X=\id^P\right\}.
\]
For a deterministic process $W=\sum_kX_{(k)}$, the first-level components satisfy $\Theta_{X_{(k)}}\succeq0$ and $\sum_k\Theta_{X_{(k)}}=\id^P$. These operators describe the input dependence of the branch weights.

\end{definition}

When $P\simeq F\simeq\mathbb C$, Definition~\ref{def:pf-direct-ecs} reduces to the ordinary multipartite definitions of CS and ECS~\cite{ognyanoreshkov2016CausalCausallySeparable,wechs2019DefinitionCharacterisationMultipartite}. The following definition records the notation and recursion for this case.

\label{sec:basic-cs-ecs}

\begin{definition}[The causally separable cone and extensible causal separability]
\label{def:basic-cs-ecs}
The unnormalized causally separable (CS) cone is defined recursively as follows. Set
$\CSplus_0:=\mathbb R_{\ge0}.$
For $N=1$, $\CSplus_1$ is the cone of unnormalized valid single-party processes. For $N\ge2$,
$X\in\CSplus_N$ if and only if there exists a PSD decomposition
\begin{equation}
   X=\sum_{k\in\N}X_{(k)},
   \; X_{(k)}\succeq0,
   \label{eq:basic-ogcs-split}
\end{equation}
such that every $X_{(k)}$ is an ordinary valid process compatible with $A_k$ acting first (the marginal statistics of $A_k$ are independent of the other parties' instrument choices) and, for any quantum operation by party $k$ with CJ matrix
$C_k$,
\begin{equation}
   (X_{(k)})_{|C_k}\in\CSplus_{N-1}.
   \label{eq:basic-ogcs-rec}
\end{equation}
The section of this cone consisting of normalized valid processes is denoted by $\CS_N$. Equivalent formulations in terms of decompositions and conditioned processes are given by Oreshkov--Giarmatzi and Wechs et al.~\cite{ognyanoreshkov2016CausalCausallySeparable,wechs2019DefinitionCharacterisationMultipartite}.

A normalized process matrix $W$ is called \emph{extensibly causally separable}, and the corresponding property is called extensible causal separability
(ECS), if, for arbitrary input ancillas
$E_1,\ldots,E_N$ and every normalized shared state
$\rho\in\Pos(E_1\otimes\cdots\otimes E_N)$,
extending the input of party $k$ to $A_I^kE_k$ gives
$W\otimes\rho\in\CS_N.$
The normalized set is denoted by $\ECS_N$. The unnormalized notation specializes Definition~\ref{def:pf-direct-ecs} to one-dimensional global systems:
\begin{equation}
   \ECSplus_N:=\ECS_N^{+,\mathbb C\to\mathbb C}.
   \label{eq:basic-ecs-cone}
\end{equation}
See Oreshkov--Giarmatzi and Wechs et al.~\cite{ognyanoreshkov2016CausalCausallySeparable,wechs2019DefinitionCharacterisationMultipartite} for the related definitions.
\end{definition}

Proposition~\ref{lem:pf-uniform} proves the existence of an extension-independent first-level decomposition for general $P,F$. Taking
$P\simeq F\simeq\mathbb C$ recovers the corresponding result for ordinary multipartite ECS~\cite{wechs2019DefinitionCharacterisationMultipartite}.

\label{sec:basic-qccc}

\begin{definition}[QC-CC]
\label{def:basic-qccc}

Fix a finite-dimensional global past $P$, a global future $F$, and
external operations with input and output spaces $A_I^k$ and $A_O^k$,
respectively, for $k\in\N$.

A deterministic $P\to F$ process matrix $W$ belongs to
$\QCCC_N^{P\to F}$ if and only if there exist finite-dimensional
internal ancillary systems and the following internal operations.
At the initial stage, there is a quantum instrument
$\{\mathcal M_{\varnothing}^{\to k}\}_{k\in\N}$.
After every incomplete history $h=(k_1,\ldots,k_n)$, there is a
quantum instrument
$\{\mathcal M_h^{\to j}\}_{j\in
\N\setminus\{k_1,\ldots,k_n\}}$.
After every full permutation $\pi\in\Pi_{\N}$, there is a final map
$\mathcal M_\pi^{\rm fin}$. The initial instrument takes the global input $P$ to $A_I^k$ together
with its internal memory. The instrument after $h$ takes
$A_O^{\ell(h)}$ together with the current memory to $A_I^j$ together
with the next memory. The final map takes $A_O^{\pi_N}$ together with
the remaining memory to the global future $F$.
For suitable finite-dimensional memories
$\alpha_1,\ldots,\alpha_N$, let
$M_{\varnothing}^{\to k}\in\Pos(PA_I^k\alpha_1)$,
$M_h^{\to j}\in
\Pos(A_O^{\ell(h)}\alpha_{|h|}A_I^j\alpha_{|h|+1})$, and
$M_\pi^{\rm fin}\in\Pos(A_O^{\pi_N}\alpha_N F)$
denote the corresponding CJ operators.

The trace-preservation conditions are
\begin{equation}
\sum_{k\in\N}
\Tr_{A_I^k\alpha_1}M_{\varnothing}^{\to k}
=
\id^P.
\label{eq:basic-qccc-tp-initial}
\end{equation}
For every incomplete history $h=(k_1,\ldots,k_n)$ with $1\le n<N$,
\begin{equation}
\begin{aligned}
&\sum_{\substack{j\in\N\\
j\notin\{k_1,\ldots,k_n\}}}
\Tr_{A_I^j\alpha_{h,j}}\!\Bigl[
M_{\varnothing}^{\to k_1}*
M_{(k_1)}^{\to k_2}*\cdots*
M_{(k_1,\ldots,k_{n-1})}^{\to k_n}*
M_h^{\to j}
\Bigr]
\\
&\quad =
\Tr_{\alpha_h}\!\Bigl[
M_{\varnothing}^{\to k_1}*
M_{(k_1)}^{\to k_2}*\cdots*
M_{(k_1,\ldots,k_{n-1})}^{\to k_n}
\Bigr]
\otimes\id^{A_O^{\ell(h)}}.
\end{aligned}
\label{eq:basic-qccc-tp-intermediate}
\end{equation}
For every full permutation
$\pi=(\pi_1,\ldots,\pi_N)\in\Pi_{\N}$,
\begin{equation}
\begin{aligned}
&\Tr_F\!\Bigl[
M_{\varnothing}^{\to\pi_1}*
M_{(\pi_1)}^{\to\pi_2}*\cdots*
M_{(\pi_1,\ldots,\pi_{N-1})}^{\to\pi_N}*
M_\pi^{\rm fin}
\Bigr]
\\
&\quad =
\Tr_{\alpha_\pi}\!\Bigl[
M_{\varnothing}^{\to\pi_1}*
M_{(\pi_1)}^{\to\pi_2}*\cdots*
M_{(\pi_1,\ldots,\pi_{N-1})}^{\to\pi_N}
\Bigr]
\otimes\id^{A_O^{\pi_N}}.
\end{aligned}
\label{eq:basic-qccc-tp-final}
\end{equation}
For $n=1$, the products in
Eq.~\eqref{eq:basic-qccc-tp-intermediate} contain no intermediate
factor between $M_{\varnothing}^{\to k_1}$ and $M_h^{\to j}$.

The resulting process matrix is
\begin{equation}
W=
\sum_{\pi\in\Pi_{\N}}
M_\pi^{\rm fin}*
M_{(\pi_1,\ldots,\pi_{N-1})}^{\to\pi_N}*
\cdots*
M_{(\pi_1)}^{\to\pi_2}*
M_{\varnothing}^{\to\pi_1}.
\label{eq:basic-qccc-operational}
\end{equation}

For $N=1$, there are no intermediate maps or intermediate
normalization conditions, and the expression contains only
$M_{(k)}^{\rm fin}*M_{\varnothing}^{\to k}$.
For $N=0$, $\QCCC_0^{P\to F}$ is defined as the set of CJ operators
of all CPTP maps from $P$ to $F$. See Wechs et al.~\cite{wechs2021QuantumCircuitsClassical} for details.

\end{definition}
\begin{proposition}[Characterization of QC-CC]
\label{prop:basic-pf-qccc}
Let $N\ge1$, and let $W$ be a deterministic $P\to F$ process matrix. The following three statements are equivalent:
\begin{enumerate}[label=(\roman*),leftmargin=2.4em]
\item $W\in\QCCC_N^{P\to F}$.

\item For every nonempty ordered sequence $h=(k_1,\ldots,k_n)$, there exists a PSD reduced operator\linebreak
$Q_h\in \Pos\!\left( P\otimes A_{IO}^{\{k_1,\ldots,k_{n-1}\}}\otimes A_I^{k_n} \right),$
and, for every full permutation $\pi\in\Pi_{\N}$, there exists
$T_\pi\in\Pos(P\otimes A_{IO}^{\N}\otimes F),$
such that
\begin{align}
   W&=\sum_{\pi\in\Pi_{\N}}T_\pi,
   \label{eq:basic-pf-qccc-sum}\\
   \sum_{k\in\N}\Tr_{A_I^k}Q_{(k)}&=\id^P,
   \label{eq:basic-pf-qccc-initial}\\
   \sum_{j\in\N\setminus\{k_1,\ldots,k_n\}}\Tr_{A_I^j}Q_{(h,j)}
   &=Q_h\otimes\id^{A_O^{\ell(h)}}
   \;(|h|<N),
   \label{eq:basic-pf-qccc-rec}\\
   \Tr_FT_\pi&=Q_\pi\otimes\id^{A_O^{\ell(\pi)}}.
   \label{eq:basic-pf-qccc-term}
\end{align}

\item There exist PSD operators indexed by full permutations $\pi$,
$\{T_\pi\}_{\pi\in\Pi_{\N}}$, and, for every nonempty ordered sequence $h$, the partial sum specified by $h$ is defined as
\begin{equation}
   T_h:=\sum_{\pi\succeq h}T_\pi.
   \label{eq:basic-pf-Th}
\end{equation}
Here the sum includes all extensions of $h$ to full permutations $\pi$. These operators satisfy
\begin{align}
  & W=\sum_{\pi\in\Pi_{\N}}T_\pi,
   \; T_\pi\succeq0,
   \label{eq:basic-pf-permutation-sum}\\
  & (\mathcal I-\mathcal D_{A_O^{\ell(h)}})
   \mathcal D_{A_{IO}^{\N\setminus\{k_1,\ldots,k_n\}}F}T_h=0
   \;(\varnothing\ne h).
   \label{eq:basic-pf-sequence-constraint}
\end{align}
\end{enumerate}
The reduced-history formulation (ii) and the full-space prefix-sum formulation (iii) encode the same QC-CC constraints. Their correspondence uniquely determines the reduced operators from the history partial sums as
\begin{equation}
   Q_h=
   \frac{1}{d_{A_O^{\ell(h)}}D_O(\N\setminus\{k_1,\ldots,k_n\})}
   \Tr_{A_O^{\ell(h)}A_{IO}^{\N\setminus\{k_1,\ldots,k_n\}}F}T_h .
   \label{eq:basic-pf-Qdictionary}
\end{equation}
\end{proposition}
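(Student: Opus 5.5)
The plan is to establish (i)$\Leftrightarrow$(ii) by the standard correspondence between sequential circuits and their reduced ``comb'' marginals, and (ii)$\Leftrightarrow$(iii) by direct linear algebra using the dictionary in Eq.~\eqref{eq:basic-pf-Qdictionary}.

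\emph{From (i) to (ii).} Given a QC-CC realization as in Definition~\ref{def:basic-qccc}, I will set
\[
T_\pi:=M_\pi^{\rm fin}*M_{(\pi_1,\ldots,\pi_{N-1})}^{\to\pi_N}*\cdots*M_{\varnothing}^{\to\pi_1}.
\]
For a nonempty history $h=(k_1,\ldots,k_n)$, I will set
\[
Q_h:=\Tr_{\alpha_{n}}\bigl[M_{\varnothing}^{\to k_1}*\cdots*M_{(k_1,\ldots,k_{n-1})}^{\to k_n}\bigr].
\]
Positivity holds because link products of PSD operators are PSD. Eq.~\eqref{eq:basic-pf-qccc-sum} is Eq.~\eqref{eq:basic-qccc-operational}. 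Eqs.~\eqref{eq:basic-pf-qccc-initial}, \eqref{eq:basic-pf-qccc-rec} and \eqref{eq:basic-pf-qccc-term} follow from Eqs.~\eqref{eq:basic-qccc-tp-initial}, \eqref{eq:basic-qccc-tp-intermediate} and \eqref{eq:basic-qccc-tp-final}, respectively, after tracing out the memory.

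\emph{From (ii) to (i).} I will reconstruct the instruments via the standard realization of sequential combs (Chiribella--D'Ariano--Perinotti, as used by Wechs et al.). The memory $\alpha_{n}$ purifies the relevant history operator, for example by taking square roots $Q_h^{1/2}$ and their pseudo-inverses. For each $h$ and each $j\notin h$, the conditions in Eq.~\eqref{eq:basic-pf-qccc-rec} say that the family $\{Q_{(h,j)}\}_j$ has total marginal $Q_h\otimes\id$. One then defines
\[
M_h^{\to j}\propto (Q_h\otimes\id)^{-1/2}\,Q_{(h,j)}\,(Q_h\otimes\id)^{-1/2}
\]
on the support, in a purified form. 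The summed map is then trace-preserving on the support, and it is completed arbitrarily off the support. The final map is built in the same way from $T_\pi$ and $Q_\pi$. Telescoping the link products then reproduces each $T_\pi$.

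\emph{(ii)$\Leftrightarrow$(iii).} Given (ii), I will define $T_h$ by the prefix sums. Summing Eq.~\eqref{eq:basic-pf-qccc-term} and iterating Eq.~\eqref{eq:basic-pf-qccc-rec} backwards gives
\[
\Tr_{A_{IO}^{\N\setminus h}F}T_h=D_O(\N\setminus h)\,Q_h\otimes\id^{A_O^{\ell(h)}}.
\]
This is exactly the statement that $\D_{A_{IO}^{\N\setminus h}F}T_h$ is invariant under $\D_{A_O^{\ell(h)}}$, that is, Eq.~\eqref{eq:basic-pf-sequence-constraint}. Conversely, given (iii), I will define $Q_h$ by Eq.~\eqref{eq:basic-pf-Qdictionary}. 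Then Eq.~\eqref{eq:basic-pf-sequence-constraint} yields the factorized marginals above. Using $T_h=\sum_{j\notin h}T_{(h,j)}$ and tracing $A_I^j$ gives Eq.~\eqref{eq:basic-pf-qccc-rec}, and the terminal case gives Eq.~\eqref{eq:basic-pf-qccc-term}. Eq.~\eqref{eq:basic-pf-qccc-initial} follows from the normalization $\Theta_W=\id^P$ in Proposition~\ref{prop:pf-validity}. Uniqueness of $Q_h$ is immediate from the factorized marginal identity.

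The main obstacle is the reconstruction step (ii)$\Rightarrow$(i). There the instruments must be built with careful handling of supports and pseudo-inverses, so that every branch is completely positive and the summed maps are exactly trace-preserving, including on the kernel of $Q_h$. I would handle this by restricting to supports and padding with an arbitrary trace-preserving completion, as in the standard comb realization theorem.
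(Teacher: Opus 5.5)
Your proposal is correct and follows the paper's route. For (i)$\Leftrightarrow$(ii), the paper simply invokes the Wechs et al.\ realization (their Proposition~5 and Eq.~(B31)), which is the purification/pseudo-inverse comb construction you sketch, and (i)$\Rightarrow$(ii) is read off from the link-product trace-preservation conditions of the QC-CC definition exactly as you do; for (ii)$\Leftrightarrow$(iii), you use the same backward induction on prefix sums giving $\Tr_{A_{IO}^{\N\setminus\{k_1,\ldots,k_n\}}F}T_h=D_O(\N\setminus\{k_1,\ldots,k_n\})\,Q_h\otimes\id^{A_O^{\ell(h)}}$, the same equivalence of this factorization with the depolarizing constraint, and the same appeal to deterministic normalization for the initial condition.
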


\begin{proof}
The equivalence of (i) and (ii) follows from Proposition 5 and Appendix B.2.c of Wechs et al.~\cite{wechs2021QuantumCircuitsClassical}, which include global input and output systems $P,F$. Their trace-preservation conditions for the initial, intermediate, and final internal instruments are Eqs.~\eqref{eq:basic-pf-qccc-initial}--\eqref{eq:basic-pf-qccc-term} in our notation. The realization construction in their Eq.~(B31) preserves the specified reduced histories $Q_h$ and complete-history contributions $T_\pi$.

We next prove (ii)$\Longleftrightarrow$(iii), relating the reduced history operators $Q_h$ to the full-permutation components $T_\pi$. First assume (iii). By
Eq.~\eqref{eq:basic-pf-sequence-constraint},
taking $T_h$ and tracing out the uncalled slots and $F$ gives a reduced operator that factors as a tensor product with the identity on $A_O^{\ell(h)}$. Thus,
Eq.~\eqref{eq:basic-pf-Qdictionary} gives
\begin{equation}
   \frac{1}{D_O(\N\setminus\{k_1,\ldots,k_n\})}
   \Tr_{A_{IO}^{\N\setminus\{k_1,\ldots,k_n\}}F}T_h
   =Q_h\otimes\id^{A_O^{\ell(h)}}.
   \label{eq:basic-pf-prefix-reduction}
\end{equation}
If $|h|<N$, use
$T_h=\sum_{j\in\N\setminus\{k_1,\ldots,k_n\}}T_{(h,j)}$, apply
Eq.~\eqref{eq:basic-pf-prefix-reduction} to each successor, and trace over $A_I^j$ to obtain
\begin{equation}
   \sum_{j\in\N\setminus\{k_1,\ldots,k_n\}}\Tr_{A_I^j}Q_{(h,j)}
=Q_h\otimes\id^{A_O^{\ell(h)}}.
\end{equation}
At a sequence $h=\pi$ corresponding to a full permutation, Eq.~\eqref{eq:basic-pf-prefix-reduction} gives $\Tr_FT_\pi=Q_\pi\otimes\id^{A_O^{\ell(\pi)}}.$
Finally, the initial normalization of the deterministic process $W$ in
Eq.~\eqref{eq:pf-initial-normalization} gives
\begin{equation}
\sum_{k\in\N}\Tr_{A_I^k}Q_{(k)}
=\frac1{D_O(\N)} \Tr_{A_{IO}^{\N}F}W
=\id^P.
\end{equation}
Thus (ii) holds.

Conversely, assume (ii). We prove Eq.~\eqref{eq:basic-pf-prefix-reduction} by backward induction on the sequence length, from $N$ to $1$.
For sequences corresponding to full permutations, the identity is precisely
Eq.~\eqref{eq:basic-pf-qccc-term}, which establishes the base case. If it holds for all ordered sequences of length $n+1$,
then, for any $|h|=n$,
$T_h=\sum_{j\in\N\setminus\{k_1,\ldots,k_n\}}T_{(h,j)}.$
Applying the induction hypothesis term by term and using
$D_O(\N\setminus\{k_1,\ldots,k_n\})=d_{A_O^j}D_O(\N\setminus\{k_1,\ldots,k_n,j\})$ yields
\begin{align*}
   \frac1{D_O(\N\setminus\{k_1,\ldots,k_n\})}
   \Tr_{A_{IO}^{\N\setminus\{k_1,\ldots,k_n\}}F}T_h
   &=\sum_{j\in\N\setminus\{k_1,\ldots,k_n\}}\Tr_{A_I^j}Q_{(h,j)}\\
   &=Q_h\otimes\id^{A_O^{\ell(h)}}.
\end{align*}
Hence Eq.~\eqref{eq:basic-pf-prefix-reduction} holds for every nonempty ordered sequence. To make its equivalence to
Eq.~\eqref{eq:basic-pf-sequence-constraint} explicit, fix a nonempty ordered sequence $h$ and write
\begin{equation}
   S_h:=\frac1{D_O(\N\setminus\{k_1,\ldots,k_n\})}
   \Tr_{A_{IO}^{\N\setminus\{k_1,\ldots,k_n\}}F}T_h .
   \label{eq:basic-pf-Sh}
\end{equation}
By the definition of the completely depolarizing map, there exists a strictly positive scalar
$\gamma_h:=\frac{D_O(\N\setminus\{k_1,\ldots,k_n\})}{d_{A_{IO}^{\N\setminus\{k_1,\ldots,k_n\}}}d_F}>0,$
such that
\begin{equation}
   \mathcal D_{A_{IO}^{\N\setminus\{k_1,\ldots,k_n\}}F}T_h
   =\gamma_h S_h
      \otimes\id^{A_{IO}^{\N\setminus\{k_1,\ldots,k_n\}}F}.
   \label{eq:basic-pf-sequence-depol-expanded}
\end{equation}
Since $A_O^{\ell(h)}$ and $A_{IO}^{\N\setminus\{k_1,\ldots,k_n\}}F$ are disjoint,
\begin{equation}
 (\mathcal I-\mathcal D_{A_O^{\ell(h)}})
  \mathcal D_{A_{IO}^{\N\setminus\{k_1,\ldots,k_n\}}F}T_h =\gamma_h
  \bigl[(\mathcal I-\mathcal D_{A_O^{\ell(h)}})S_h\bigr]
  \otimes\id^{A_{IO}^{\N\setminus\{k_1,\ldots,k_n\}}F}.
   \label{eq:basic-pf-Sh-invariant}
\end{equation}
Equation~\eqref{eq:basic-pf-prefix-reduction}, established by induction, states precisely that
$S_h=Q_h\otimes\id^{A_O^{\ell(h)}}$, so
$(\mathcal I-\mathcal D_{A_O^{\ell(h)}})S_h =(\mathcal I-\mathcal D_{A_O^{\ell(h)}}) (Q_h\otimes\id^{A_O^{\ell(h)}})=0.$
Substitution into Eq.~\eqref{eq:basic-pf-Sh-invariant} directly yields
Eq.~\eqref{eq:basic-pf-sequence-constraint}.

Conversely, since $\gamma_h>0$ and
$\id^{A_{IO}^{\N\setminus\{k_1,\ldots,k_n\}}F}\ne0$, Eq.~\eqref{eq:basic-pf-sequence-constraint}
is equivalent to the fixed-point condition
\(\mathcal D_{A_O^{\ell(h)}}S_h=S_h\). This condition is in turn equivalent to
\begin{equation}
   S_h=\widetilde Q_h\otimes\id^{A_O^{\ell(h)}}
    \;\text{and}\;
   \widetilde Q_h
   =\frac1{d_{A_O^{\ell(h)}}}\Tr_{A_O^{\ell(h)}}S_h .
   \label{eq:basic-pf-Sh-dictionary}
\end{equation}
Substituting the definition of $S_h$ into Eq.~\eqref{eq:basic-pf-Sh-dictionary} gives
Eq.~\eqref{eq:basic-pf-Qdictionary}. Defining
$Q_h:=\widetilde Q_h$ by this equation recovers Eq.~\eqref{eq:basic-pf-prefix-reduction}. Thus,
Eq.~\eqref{eq:basic-pf-prefix-reduction} and
Eq.~\eqref{eq:basic-pf-sequence-constraint} are equivalent for each nonempty ordered sequence, establishing (iii).
\end{proof}

When $d_P=d_F=1$, Definition~\ref{def:basic-qccc} reduces to ordinary multipartite QC-CC.
The right-hand side of the initial condition in Proposition~\ref{prop:basic-pf-qccc} reduces to the scalar $1$, and the
$F$ factor disappears from every future block. This gives the characterization of ordinary multipartite QC-CC by operators indexed by full permutations.

\section{Proof of the main theorem}
\label{sec:pf}
For a general finite-dimensional global past $P$ and global future $F$, we prove that ECS is equivalent to $\QCCC_N^{P\to F}$. The ordinary multipartite case follows by setting $d_P=d_F=1$.

\begin{theorem}[Equivalence of ECS and QC-CC]
\label{thm:pf-ECS-QCCC}
Let $N\ge1$, and let $W$ be a finite-dimensional deterministic $P\to F$ process matrix. The following conditions are equivalent:
\begin{enumerate}[label=(\roman*),leftmargin=2.4em]
\item $W\in\ECS_N^{P\to F}$.
\item $W\in\QCCC_N^{P\to F}$, that is, there exist PSD operators $\{T_\pi\}_{\pi\in\Pi_{\N}}$ satisfying
\begin{align}
  & W=\sum_{\pi\in\Pi_{\N}}T_\pi,
   \label{eq:pf-main-permutation-sum}\\
  & (\mathcal I-\mathcal D_{A_O^{\ell(h)}})
   \mathcal D_{A_{IO}^{\N\setminus\{k_1,\ldots,k_n\}}F}
   \left(\sum_{\pi\succeq h}T_\pi\right)=0
   \;(\varnothing\ne h).
   \label{eq:pf-main-prefix}
\end{align}
\end{enumerate}
\end{theorem}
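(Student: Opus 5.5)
The proof has two directions. Necessity (i)$\Rightarrow$(ii) is the argument already given in the main text, and I would reproduce it here in full detail. Sufficiency (ii)$\Rightarrow$(i) I would prove by induction on $N$, using a stronger, unnormalized and extension-stable statement.

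\emph{Necessity.} I would prove by induction on $N$ that every $X\in\ECS_N^{+,P\to F}$ admits PSD $\{T_\pi\}$ with $X=\sum_\pi T_\pi$ satisfying Eq.~\eqref{eq:pf-main-prefix}.
\begin{enumerate}
\item Invoke Proposition~\ref{prop:uniform-first-decomposition} to obtain extension-independent components $X_{(k)}$.
\item For each $k$, build the routing isometry $V=(N-1)^{-1/2}\sum_r V_r$ and the shared state $\ket{\Omega_V}$, and condition on $C_k=\lambda\proj{\Phi_{d_k}}$. This yields $Z_k=\beta\mathcal V X_{(k)}\mathcal V^\dagger$, which lies in $\ECS_{N-1}^{+,P\to F}$ by Eq.~\eqref{eq:uniform-first-condition}.
\item Apply the induction hypothesis to $Z_k$. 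Then use $\supp Z_\pi\subseteq\Ran\mathcal V$ to pull the components back as $X_\pi=\beta^{-1}\mathcal V^\dagger Z_\pi\mathcal V$.
\item Verify the compression identity Eq.~\eqref{eq:isometric-sequence} by sandwiching with $\mathcal V_{r_1}$. Orthogonality of the vacuum and data sectors kills all cross terms $V_s$, $s\ne r_1$, after the $G_j$ factors of uncalled parties are depolarized.
\item The length-one prefix constraint comes from $X_{(k)}\in\mathsf{Proc}_N^{+,P\to F}$. Normalization $\Theta_W=\id^P$ is untouched, so Proposition~\ref{prop:basic-pf-qccc} gives $W\in\QCCC_N^{P\to F}$.
\end{enumerate}

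\emph{Sufficiency.} Let $\mathsf{Q}_N^{+}$ be the cone of operators admitting PSD $T_\pi$ with Eq.~\eqref{eq:pf-main-prefix}, with no normalization imposed. I would prove by induction that for every $X\in\mathsf{Q}_N^{+}$, every finite-dimensional input extension and every normalized state $\rho$, one has $X\otimes\rho\in\CS_N^{+,P\to F}$. First I would check that $\mathsf{Q}_N^{+}$ is stable under tensoring with $\rho$, with $\widetilde T_\pi:=T_\pi\otimes\rho$. For an uncalled slot $j$ one has $\mathcal D_{A_I^jE_jA_O^j}(T\otimes\rho)=\mathcal D_{A_{IO}^j}$ applied to $T$, tensored with $\Tr_{E_j}\rho\otimes\id^{E_j}/d_{E_j}$. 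Similarly, the called slots' $E$ factors are untouched by $\mathcal I-\mathcal D_{A_O^{\ell(h)}}$, so each prefix constraint is inherited by linearity. It therefore suffices to show $\mathsf{Q}_N^{+}\subseteq\CS_N^{+,P\to F}$ on arbitrary (already extended) input spaces. Take the decomposition $X_{(k)}:=T_{(k)}$. Two things must then be shown.
\begin{itemize}
\item[(a)] $T_{(k)}\in\mathsf{Proc}_N^{+,P\to F}$, i.e.\ Eq.~\eqref{eq:basic-process-validity} holds for every nonempty $\mathcal K$.
\item[(b)] For every $C_k$, $(T_{(k)})_{|C_k}\in\mathsf{Q}_{N-1}^{+}$, after which the induction hypothesis applies.
\end{itemize}
The base case is $N=1$, where $(T_{(k)})_{|C_k}\succeq0$ lies in $\Pos(P\otimes F)=\CS_0^{+,P\to F}$.

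For (b), I would set $T'_{\pi'}:=(T_{(k,\pi')})_{|C_k}\succeq0$ for $\pi'\in\Pi_{\N\setminus\{k\}}$. Conditioning acts only on party $k$'s slot and commutes with $\mathcal D$ and $\mathcal I-\mathcal D$ on the disjoint systems of the other parties and $F$. Hence the constraint for $h'$ follows from the original constraint for the prefix $(k,h')$.

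For (a), I expect the main obstacle. This is the statement that QC-CC prefix constraints imply validity, and I would argue it via the reduced operators of Proposition~\ref{prop:basic-pf-qccc}. Applying Eq.~\eqref{eq:basic-pf-prefix-reduction} recursively shows that $\Tr_F$ of any CPTP insertion into $T_{(k)}$ telescopes to a quantity independent of the inserted maps. This uses $\Tr_{A_I^j}$ of successors together with trace preservation of each inserted map on $A_O$. In that reduction the inductive step needs care, because the link products $Q_{(h,j)}$ depend on earlier inserted maps only through the inputs already called; I would handle this by backward induction on the history length, starting from $\Tr_F T_\pi=Q_\pi\otimes\id$.

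Condition (i) of Proposition~\ref{prop:pf-validity} then holds with $\Theta_{T_{(k)}}=\Tr_{A_I^k}Q_{(k)}$, which gives (a). This completes the induction. Finally, normalization transfers because $\Theta_W=\id^P$ in both descriptions.
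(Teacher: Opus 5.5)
Your proposal is correct. Necessity is the paper's own argument: the extension-independent first-level split, the routing isometry with conditioning on a maximally entangled projection, support pull-back, and the compression identity of Lemma~\ref{lem:pf-compress}, as in Proposition~\ref{prop:pf-Hm}.

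For sufficiency you keep the paper's recursive skeleton: at each level the CS witnesses are the conditioned prefix sums of $T_\pi\otimes\rho$. The difference is how you show that each branch is a valid process.

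\emph{The paper's route.} The paper proves validity of the \emph{total} conditioned operator $Y_{h,\boldsymbol C_h}$ operationally, using trace preservation of the QC-CC layers after CPTP insertions. It then invokes Proposition~\ref{prop:pf-recursive-completion}, whose spanning argument over quantum-operation CJ matrices pushes validity from the sum down to each branch $Y^{(j)}_{h,\boldsymbol C_h}$.

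\emph{Your route.} You stay inside the algebraic cone $\mathsf{Q}_N^{+}$. You show it is closed under tensoring with input states and under conditioning, since the constraints for $(k,h')$ become those for $h'$. You then prove $T_{(k)}\in\mathsf{Proc}_N^{+,P\to F}$ directly from the prefix constraints. That step goes through cleanly if you state the backward induction as
$\Tr_F[(\bigotimes_{i\notin h}M_i)*T_h]=Q_h\otimes\id^{A_O^{\ell(h)}}$ for every nonempty $h$ and all CPTP $M_i$:
\begin{itemize}
\item the base case is $\Tr_FT_\pi=Q_\pi\otimes\id^{A_O^{\ell(\pi)}}$;
\item the step uses $M_j*(Q_{(h,j)}\otimes\id^{A_O^j})=\Tr_{A_I^j}Q_{(h,j)}$, summed through $\sum_j\Tr_{A_I^j}Q_{(h,j)}=Q_h\otimes\id^{A_O^{\ell(h)}}$.
\end{itemize}
No earlier inserted maps appear anywhere, so your concern about dependence on them is moot. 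At $h=(k)$ this gives $\Theta_{T_{(k)}}=\Tr_{A_I^k}Q_{(k)}$, as you claim. At $N=1$, (a) is just the single prefix constraint.

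\emph{What each buys.} Your argument is self-contained and purely algebraic. It uses only the (ii)$\Leftrightarrow$(iii) part of Proposition~\ref{prop:basic-pf-qccc}, not the circuit realization. It also yields the slightly stronger statement that the unnormalized prefix-constraint cone lies in $\CS_N^{+,P\to F}$ on arbitrary input spaces. The paper's route instead reuses a general fact it needs anyway for Eq.~\eqref{eq:pf-CS-total-equivalent}: validity of the sum together with CS-conditioned branches forces validity of each branch.
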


\begin{proposition}[Compatibility of first-level components with a specified $P\prec k\prec\cdots\prec F$ order]
\label{prop:pf-first-component-order}
Let $N\ge1$ and
$X\in\CS_N^{+,P\to F}$. Choose any witnessing decomposition from Definition~\ref{def:pf-direct-ecs},
\begin{equation}
   X=\sum_{k\in\N}X_{(k)},
   \;
   X_{(k)}\in\mathsf{Proc}_N^{+,P\to F}.
   \label{eq:pf-sanity-witness}
\end{equation}
Then, for every $k\in\N$,
\begin{equation}
   X_{(k)}
   \in
   \mathcal L^{P\prec k\prec(\N\setminus\{k\})\prec F}.
   \label{eq:pf-sanity-kfirst}
\end{equation}
That is, the first-level component $X_{(k)}$ is compatible with the specified order
$P\prec k\prec(\N\setminus\{k\})\prec F$.

\end{proposition}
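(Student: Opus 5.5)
The plan is to derive the two families of constraints defining $\mathcal L^{P\prec k\prec(\N\setminus\{k\})\prec F}$ separately. The first comes directly from validity of $X_{(k)}$. The second comes from validity of every conditioned operator $(X_{(k)})_{|C_k}$, combined with a spanning argument that removes the conditioning.

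\emph{Step 1 (first constraint, Eq.~\eqref{eq:pf-firstcompat}).} Since $X_{(k)}\in\mathsf{Proc}_N^{+,P\to F}$ by Eq.~\eqref{eq:pf-sanity-witness}, I apply Proposition~\ref{prop:pf-validity}(ii) with the singleton $\mathcal K=\{k\}$. This gives $(\mathcal I-\mathcal D_{A_O^k})\mathcal D_{A_{IO}^{\N\setminus\{k\}}F}X_{(k)}=0$, which is exactly Eq.~\eqref{eq:pf-firstcompat}. For $N=1$ the family in Eq.~\eqref{eq:pf-auto-first-rest} is empty, so the proof ends here. Assume $N\ge2$ from now on.

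\emph{Step 2 (conditioned components are valid).} Fix any quantum-operation CJ matrix $C_k$ on $A_I^kA_O^k$. By Eq.~\eqref{eq:pf-CSrec}, $Y_{C_k}:=(X_{(k)})_{|C_k}\in\CS_{N-1}^{+,P\to F}$. By Definition~\ref{def:pf-direct-ecs}, every element of $\CS_{N-1}^{+,P\to F}$ is a finite sum of elements of $\mathsf{Proc}_{N-1}^{+,P\to F}$. This cone is closed under addition, since both positivity and the homogeneous linear conditions are preserved. Hence $Y_{C_k}\in\mathsf{Proc}_{N-1}^{+,P\to F}$ for the parties $\N\setminus\{k\}$. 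Proposition~\ref{prop:pf-validity}(ii) then gives, for every nonempty $\mathcal X\subseteq\N\setminus\{k\}$,
\[
\Bigl(\prod_{i\in\mathcal X}(\mathcal I-\mathcal D_{A_O^i})\Bigr)\mathcal D_{A_{IO}^{(\N\setminus\{k\})\setminus\mathcal X}F}\,Y_{C_k}=0 .
\]

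\emph{Step 3 (commuting and lifting).} Let $\Lambda_{\mathcal X}$ be the superoperator on the left. It acts only on systems other than $A_{IO}^k$. The conditioning map $Z\mapsto\Tr_{A_{IO}^k}[(C_k\otimes\id)Z]$ acts only on $A_{IO}^k$, so the two commute. Setting $Z:=\Lambda_{\mathcal X}X_{(k)}$, Step 2 becomes $\Tr_{A_{IO}^k}[(C_k\otimes\id)Z]=0$ for every admissible $C_k$. Every PSD operator, scaled by a small positive factor, satisfies $\Tr_{A_O^k}C_k\preceq\id$ and is therefore an admissible $C_k$. PSD operators span $\Herm(A_{IO}^k)$, so by linearity $\Tr_{A_{IO}^k}[(H\otimes\id)Z]=0$ for all $H\in\Herm(A_{IO}^k)$, and hence for all $H\in\mathcal L(A_{IO}^k)$ after complexification. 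Nondegeneracy of the Hilbert--Schmidt pairing on $A_{IO}^k$ then forces $Z=0$, which is Eq.~\eqref{eq:pf-auto-first-rest}.

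The step that needs the most care is Step 3. One must check that the subscripts of $\Lambda_{\mathcal X}$ never include $A_{IO}^k$, so that commuting it with conditioning is legitimate. One must also check that admissible $C_k$ are rich enough to span. The remaining potential subtlety is the base case $N-1=0$, where $\CS_0^{+,P\to F}=\Pos(P\otimes F)$. It is harmless: the only nonempty $\mathcal X$ would have to lie in $\N\setminus\{k\}=\varnothing$, so no conditions arise beyond Step 1.
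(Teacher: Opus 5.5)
Your proposal is correct and follows essentially the same argument as the paper. The paper also takes $\mathcal K=\{k\}$ for Eq.~\eqref{eq:pf-firstcompat}, and uses $\CS_{N-1}^{+,P\to F}\subseteq\mathsf{Proc}_{N-1}^{+,P\to F}$ on the conditioned components. It then commutes the superoperator with conditioning on party $k$ and removes the conditioning by the same spanning argument with rescaled rank-one PSD operators.
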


\begin{proof}
For $X_{(k)}\in\mathsf{Proc}_N^{+,P\to F}$, the constraint in Eq.~\eqref{eq:basic-process-validity} with $\mathcal K=\{k\}$ reads $(\mathcal I-\mathcal D_{A_O^k}) \mathcal D_{A_{IO}^{\N\setminus\{k\}}F}X_{(k)}=0,$
which is Eq.~\eqref{eq:pf-firstcompat}.

Next take any nonempty
$\mathcal X\subseteq\N\setminus\{k\}$. The recursive condition
in Eq.~\eqref{eq:pf-CSrec} implies that,
for every quantum operation by party $k$ with CJ matrix $C_k$,
$(X_{(k)})_{|C_k} \in\CS_{N-1}^{+,P\to F}.$
By Definition~\ref{def:pf-direct-ecs} and
the convex-cone property of $\mathsf{Proc}_{N-1}^{+,P\to F}$,
$\CS_{N-1}^{+,P\to F} \subseteq \mathsf{Proc}_{N-1}^{+,P\to F}.$
Therefore,
\[
\left(\prod_{i\in\mathcal X}
   (\mathcal I-\mathcal D_{A_O^i})\right)
\mathcal D_{A_{IO}^{(\N\setminus\{k\})\setminus\mathcal X}F}
\bigl((X_{(k)})_{|C_k}\bigr)=0
\;\forall C_k.
\]
The linear map above acts only on the remaining slots and $F$, and therefore commutes with conditioning on party $k$, giving
\[
\left[
\left(\prod_{i\in\mathcal X}
   (\mathcal I-\mathcal D_{A_O^i})\right)
\mathcal D_{A_{IO}^{(\N\setminus\{k\})\setminus\mathcal X}F}
X_{(k)}
\right]_{|C_k}=0
\;\forall C_k.
\]
If a Hermitian operator vanishes upon conditioning on the CJ matrix of every quantum operation by party $k$,
then the operator itself is zero: every rank-one positive operator becomes a
valid CJ matrix of a quantum operation after multiplication by a sufficiently small positive scalar, and rank-one positive operators span the full real vector space of Hermitian operators for party $k$.
Hence
\[
\left(\prod_{i\in\mathcal X}
   (\mathcal I-\mathcal D_{A_O^i})\right)
\mathcal D_{A_{IO}^{(\N\setminus\{k\})\setminus\mathcal X}F}
X_{(k)}=0
\]
for every nonempty $\mathcal X\subseteq\N\setminus\{k\}$, which is
Eq.~\eqref{eq:pf-auto-first-rest}. Together with Eq.~\eqref{eq:pf-firstcompat} and the group-order notation above for fixed $P,F$, this yields Eq.~\eqref{eq:pf-sanity-kfirst}.
\end{proof}

\begin{proposition}[Equivalence of validity of the total operator and validity of the first-level components]
\label{prop:pf-recursive-completion}
Let $N\ge1$, and consider a PSD decomposition
\begin{equation}
   X=\sum_{k\in\N}X_{(k)},
   \;
   X_{(k)}\succeq0,
   \label{eq:pf-total-split}
\end{equation}
such that, for every $k\in\N$ and every quantum operation by party $k$ with CJ matrix $C_k$,
\begin{equation}
   (X_{(k)})_{|C_k}
   \in\CS_{N-1}^{+,P\to F}.
   \label{eq:pf-total-cond-valid}
\end{equation}
Then the following two statements are equivalent:
\begin{enumerate}[label=(\roman*),leftmargin=2.4em]
\item The total operator is valid: $X\in\mathsf{Proc}_N^{+,P\to F}.$
\item Every first-level component is valid: $X_{(k)}\in\mathsf{Proc}_N^{+,P\to F}
   \;(k\in\N).$
\end{enumerate}
Consequently, the formulation in Definition~\ref{def:pf-direct-ecs}, which requires each first-level component to be valid, is equivalent to the following criterion requiring validity of the unconditioned sum:
\begin{equation}
\begin{aligned}
X\in\CS_N^{+,P\to F}
\;\text{iff}\;{}&
X\in\mathsf{Proc}_N^{+,P\to F},\\[-1mm]
&\exists\;
X=\sum_{k\in\N}X_{(k)},
\; X_{(k)}\succeq0,\\[-1mm]
&(X_{(k)})_{|C_k}\in\CS_{N-1}^{+,P\to F}
\;
\text{for all }k\in\N\text{ and all }C_k .
\end{aligned}
\label{eq:pf-CS-total-equivalent}
\end{equation}
\end{proposition}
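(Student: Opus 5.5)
The plan is to prove the two implications separately. (ii)$\Rightarrow$(i) follows from convexity, while (i)$\Rightarrow$(ii) needs a subtraction argument that isolates the single constraint not forced by the recursive condition.

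\emph{(ii)$\Rightarrow$(i).} $\mathsf{Proc}_N^{+,P\to F}$ is the intersection of the PSD cone with the kernels of the linear maps in Eq.~\eqref{eq:basic-process-validity}, hence a convex cone. Each $X_{(k)}$ lies in it by assumption, so their sum $X$ does as well.

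\emph{(i)$\Rightarrow$(ii).} Fix $k$. I sort the constraints of Eq.~\eqref{eq:basic-process-validity} for $X_{(k)}$ according to the nonempty set $\mathcal K\subseteq\N$.

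First take $\mathcal K$ with $k\notin\mathcal K$, so $\mathcal K\subseteq\N\setminus\{k\}$. Here I reuse the argument of Proposition~\ref{prop:pf-first-component-order}, which only used the recursive condition and not validity of $X_{(k)}$. By Eq.~\eqref{eq:pf-total-cond-valid}, $(X_{(k)})_{|C_k}\in\CS_{N-1}^{+,P\to F}\subseteq\mathsf{Proc}_{N-1}^{+,P\to F}$ for every $C_k$. The operator $L_{\mathcal K}:=\prod_{i\in\mathcal K}(\I-\D_{A_O^i})\,\D_{A_{IO}^{(\N\setminus\{k\})\setminus\mathcal K}F}$ acts only on the remaining slots and $F$, so it commutes with conditioning on party $k$. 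Hence $[L_{\mathcal K}X_{(k)}]_{|C_k}=0$ for all $C_k$. Since rank-one positive operators, suitably rescaled, are admissible $C_k$ and span the Hermitian operators on party $k$'s system, this gives $L_{\mathcal K}X_{(k)}=0$. This is a statement without any depolarization of $A_{IO}^k$. Applying $\D_{A_{IO}^k}$, which commutes with $L_{\mathcal K}$, yields exactly the constraint of Eq.~\eqref{eq:basic-process-validity} for this $\mathcal K$.

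Next take $\mathcal K\ni k$ with $\mathcal K':=\mathcal K\setminus\{k\}\neq\varnothing$. The required expression equals $(\I-\D_{A_O^k})$ applied to $L_{\mathcal K'}X_{(k)}$, which is zero by the previous step. So this constraint also holds.

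The only constraint left is $\mathcal K=\{k\}$, namely $(\I-\D_{A_O^k})\D_{A_{IO}^{\N\setminus\{k\}}F}X_{(k)}=0$. This is the one the recursive condition cannot supply, and it is where hypothesis (i) enters. By (i), $X$ satisfies this constraint. For each $j\neq k$, the set $\{k\}$ does not contain $j$, so the first step applied to party $j$ shows that $X_{(j)}$ satisfies it too. By linearity, $X_{(k)}=X-\sum_{j\neq k}X_{(j)}$ satisfies it. For $N=1$ the sum over $j$ is empty and $X_{(k)}=X$, so the argument still goes through. Together with $X_{(k)}\succeq0$, this gives $X_{(k)}\in\mathsf{Proc}_N^{+,P\to F}$.

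\emph{Consequence.} The criterion in Eq.~\eqref{eq:pf-CS-total-equivalent} follows at once. If $X\in\CS_N^{+,P\to F}$ via Definition~\ref{def:pf-direct-ecs}, then (ii) holds, so (i) holds, and the same decomposition witnesses the right-hand side. Conversely, a decomposition with valid total $X$ and recursive conditions has valid components by (i)$\Rightarrow$(ii), so it is a witness in the sense of Definition~\ref{def:pf-direct-ecs}.
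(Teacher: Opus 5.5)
Your proposal is correct and follows essentially the same route as the paper: convexity of $\mathsf{Proc}_N^{+,P\to F}$ gives (ii)$\Rightarrow$(i). For (i)$\Rightarrow$(ii), you derive the remaining-slot constraints on each $X_{(j)}$ by commuting the constraint map with conditioning and using that rescaled rank-one CJ matrices span the Hermitian operators, then isolate the $\mathcal K=\{k\}$ constraint by subtracting the $j\neq k$ terms from the total operator's constraint, exactly as the paper does.
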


\begin{proof}
(ii)$\Rightarrow$(i) follows immediately from
the convex-cone property of $\mathsf{Proc}_N^{+,P\to F}$.

For (i)$\Rightarrow$(ii), fix $j\in\N$ and a nonempty
$\mathcal X\subseteq\N\setminus\{j\}$.
Equation~\eqref{eq:pf-total-cond-valid} gives
\[
   (X_{(j)})_{|C_j}
   \in\CS_{N-1}^{+,P\to F}
   \subseteq
   \mathsf{Proc}_{N-1}^{+,P\to F}
   \;\forall C_j.
\]
Commuting the linear superoperator with conditioning gives
\[
\left[
\left(\prod_{i\in\mathcal X}
   (\mathcal I-\mathcal D_{A_O^i})\right)
\mathcal D_{A_{IO}^{(\N\setminus\{j\})\setminus\mathcal X}F}
X_{(j)}
\right]_{|C_j}=0
\;\forall C_j.
\]
The CJ matrices of quantum operations span the real vector space of Hermitian operators for party $j$, as shown in the proof of Proposition~\ref{prop:pf-first-component-order}. Since the identity above holds for every $C_j$, the operator in brackets must vanish, that is,
\begin{equation}
\left(\prod_{i\in\mathcal X}
   (\mathcal I-\mathcal D_{A_O^i})\right)
\mathcal D_{A_{IO}^{(\N\setminus\{j\})\setminus\mathcal X}F}
X_{(j)}=0
\;
(\varnothing\ne\mathcal X\subseteq\N\setminus\{j\}).
\label{eq:pf-component-derived-rest}
\end{equation}
Take any $k\in\N$. The validity constraint for the total operator
$X\in\mathsf{Proc}_N^{+,P\to F}$
with $\mathcal K=\{k\}$ gives
\begin{equation}
   0=
   (\mathcal I-\mathcal D_{A_O^k})
   \mathcal D_{A_{IO}^{\N\setminus\{k\}}F}X
   =
   \sum_{j\in\N}
   (\mathcal I-\mathcal D_{A_O^k})
   \mathcal D_{A_{IO}^{\N\setminus\{k\}}F}X_{(j)}.
   \label{eq:pf-auto-total-singleton}
\end{equation}
For every $j\ne k$, take
Eq.~\eqref{eq:pf-component-derived-rest} with
$\mathcal X=\{k\}\subseteq\N\setminus\{j\}$ and then apply
$\mathcal D_{A_{IO}^j}$ to obtain
$(\mathcal I-\mathcal D_{A_O^k}) \mathcal D_{A_{IO}^{\N\setminus\{k\}}F}X_{(j)}=0.$
Thus only the $j=k$ term in Eq.~\eqref{eq:pf-auto-total-singleton} can be nonzero, and hence
\begin{equation}
   (\mathcal I-\mathcal D_{A_O^k})
   \mathcal D_{A_{IO}^{\N\setminus\{k\}}F}X_{(k)}=0.
   \label{eq:pf-component-derived-first}
\end{equation}

To verify all validity constraints for $X_{(k)}$, take any $k\in\N$ and any nonempty $\mathcal K\subseteq\N$.
If $\mathcal K=\{k\}$, the required condition is
Eq.~\eqref{eq:pf-component-derived-first}.
If $k\notin\mathcal K$, use
Eq.~\eqref{eq:pf-component-derived-rest} for $j=k$ with
$\mathcal X=\mathcal K$, then apply
$\mathcal D_{A_{IO}^k}$.
If $k\in\mathcal K$ and $|\mathcal K|\ge2$, take
$\mathcal X=\mathcal K\setminus\{k\}$ in the same equation and apply
$\mathcal I-\mathcal D_{A_O^k}$.
Each of the three cases gives
\[
\left(\prod_{i\in\mathcal K}
   (\mathcal I-\mathcal D_{A_O^i})\right)
\mathcal D_{A_{IO}^{\N\setminus\mathcal K}F}X_{(k)}=0.
\]
Since $X_{(k)}\succeq0$, we obtain
$X_{(k)}\in\mathsf{Proc}_N^{+,P\to F}$.
Thus (ii) holds.

\end{proof}

\begin{corollary}[The single-party case and one-dimensional $P,F$]
\label{cor:pf-flat-equivalence}
The following two special cases hold:
\begin{enumerate}[label=(\roman*),leftmargin=2.4em]
\item For $N=1$, $ \CS_1^{+,P\to F}
   =
   \mathsf{Proc}_1^{+,P\to F}.$
\item For $P\simeq F\simeq\mathbb C$ and any $N$,
\begin{equation}
   \mathsf{Proc}_N^{+,P\to F}
   =
   \{Y\succeq0:Y\in\mathcal L^{\N}\},
   \;
   \CS_N^{+,P\to F}=\CSplus_N,
   \;
   \ECS_N^{P\to F}=\ECS_N.
   \label{eq:pf-auto-flat-equivalence}
\end{equation}
\end{enumerate}
\end{corollary}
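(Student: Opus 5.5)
For part (i) I would unfold Definition~\ref{def:pf-direct-ecs} at $N=1$. The decomposition in Eq.~\eqref{eq:pf-CSsplit} has a single term $X=X_{(1)}$, and it must lie in $\mathsf{Proc}_1^{+,P\to F}$. This immediately gives $\CS_1^{+,P\to F}\subseteq\mathsf{Proc}_1^{+,P\to F}$. For the reverse inclusion, take $X\in\mathsf{Proc}_1^{+,P\to F}$ and set $X_{(1)}:=X$. It then only remains to check Eq.~\eqref{eq:pf-CSrec}, namely that $X_{|C_1}\in\Pos(P\otimes F)=\CS_0^{+,P\to F}$ for every admissible $C_1\succeq0$. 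Using cyclicity of the partial trace on the traced system, I would write
\[
X_{|C_1}=\Tr_{A_I^1A_O^1}\!\left[(C_1^{1/2}\otimes\id)X(C_1^{1/2}\otimes\id)\right].
\]
This is a partial trace of a PSD operator, hence PSD, which finishes (i).

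For part (ii) I would treat the three equalities in turn. The first, the $\mathsf{Proc}$ equality, is just Eq.~\eqref{eq:basic-process-validity} specialised to $d_P=d_F=1$. In that case $\mathcal D_F$ is the identity, so the conditions become exactly the linear constraints defining $\mathcal L^{\N}$, which recovers Eq.~\eqref{eq:pf-proc-flat}.

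For $\CS_N^{+,\mathbb C\to\mathbb C}=\CSplus_N$ I would argue by induction on $N$. The base case $N=0$ holds because $\Pos(\mathbb C\otimes\mathbb C)=\mathbb R_{\ge0}$. The case $N=1$ follows from part (i) together with the definition of $\CSplus_1$ as the cone of unnormalized valid single-party processes. For $N\ge2$, both definitions ask for a PSD split $X=\sum_kX_{(k)}$ in which each $X_{(k)}$ is a valid process and each conditioned component lies in the $(N-1)$-level cone, and by the induction hypothesis these two $(N-1)$-level cones agree (on arbitrary extended spaces). The only extra requirement in Definition~\ref{def:basic-cs-ecs} is that $X_{(k)}$ be compatible with $A_k$ acting first. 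Proposition~\ref{prop:pf-first-component-order} shows this is automatic for any witnessing decomposition of Definition~\ref{def:pf-direct-ecs}: it places $X_{(k)}$ in $\mathcal L^{k\prec(\N\setminus\{k\})}$, i.e.\ Eqs.~\eqref{eq:pf-firstcompat}--\eqref{eq:pf-auto-first-rest}. Conversely, dropping the compatibility requirement can only enlarge the ordinary cone, so the two cones coincide.

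Finally, $\ECS_N^{P\to F}=\ECS_N$ follows by applying the $\CS$ equality on every extended input space $A_I^kE_k$, which is legitimate because the induction above is uniform in the local dimensions. The normalizations also match: $\Theta_X=D_O(\N)^{-1}\Tr X=1$ is exactly the deterministic condition $\Tr X=D_O(\N)$.

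The step I expect to need the most care is identifying the operational phrase ``compatible with $A_k$ acting first'' (the marginal statistics of $A_k$ being independent of the other parties' instruments) with the linear subspace cut out by Eqs.~\eqref{eq:pf-firstcompat}--\eqref{eq:pf-auto-first-rest} when $d_P=d_F=1$. I would handle it by the same multiaffine-expansion argument used in Proposition~\ref{prop:pf-validity}. One expands over local CPTP insertions $D_i+t_iH_i$ for the parties other than $k$, with arbitrary CP insertions at $k$. Independence of $k$'s marginal from those other insertions then becomes the vanishing of the $\mathcal D_{A_{IO}^{\N\setminus\{k\}}}$-reduced constraints, and this is implied by the recursion via Proposition~\ref{prop:pf-first-component-order}.
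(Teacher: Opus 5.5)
Your proposal is correct and follows essentially the same route as the paper. You unfold the $N=1$ recursion (the $C_1^{1/2}$ sandwich just makes explicit why conditioning leaves a PSD operator on $P\otimes F$) and specialize Eq.~\eqref{eq:basic-process-validity}. You then identify the cones by induction on extended spaces, with Proposition~\ref{prop:pf-first-component-order} supplying first-party compatibility in one direction and the other direction immediate, and finish with the extension and scalar normalization for ECS. Your additional multiaffine argument identifying the operational ``$A_k$ acts first'' condition with Eqs.~\eqref{eq:pf-firstcompat}--\eqref{eq:pf-auto-first-rest} is a correct refinement of a step the paper takes for granted.
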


\begin{proof}
For $N=1$, the sole component is $X$ and every conditioning leaves a PSD operator on $P\otimes F$, proving (i). For one-dimensional $P,F$, Eq.~\eqref{eq:basic-process-validity} reduces to the ordinary validity constraints. Proposition~\ref{prop:pf-first-component-order} supplies first-party compatibility for the general recursion; conversely, the ordinary recursion requires valid first-level components. Since both recursions have the same base case and conditioned-process condition, induction identifies their CS cones, including on extended inputs. The extension condition and scalar normalization then give $\ECS_N^{P\to F}=\ECS_N$.
\end{proof}

\label{sec:pf-proof-target}

By Proposition~\ref{prop:basic-pf-qccc}, necessity reduces to showing that
every $X\in\ECS_N^{+,P\to F}$ admits a PSD family
$\{T_\pi\}_{\pi\in\Pi_{\N}}$ such that
$X=\sum_{\pi\in\Pi_{\N}}T_\pi$ and all prefix sums satisfy the
ordered-sequence constraints. The isometric construction below combines the
recipient-dependent decompositions of Wechs et al.~\cite{wechs2019DefinitionCharacterisationMultipartite} into
the single compatible family required by QC-CC.

\begin{lemma}[The causally separable cone is closed and convex for fixed $P,F$]
For every $N\ge0$, $\CS_N^{+,P\to F}$ is a closed convex cone in the corresponding finite-dimensional space of Hermitian operators.
\label{lem:pf-CSclosed}
\end{lemma}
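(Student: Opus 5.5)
We argue by induction on $N$, proving closedness and convexity (cone property included) together. For $N=0$, $\CS_0^{+,P\to F}=\Pos(P\otimes F)$ is a closed convex cone. For $N=1$, Corollary~\ref{cor:pf-flat-equivalence}(i) gives $\CS_1^{+,P\to F}=\mathsf{Proc}_1^{+,P\to F}$. This set is the intersection of the closed convex cone $\Pos$ with the linear subspace cut out by Eq.~\eqref{eq:basic-process-validity}, so it is closed and convex. The same argument shows that $\mathsf{Proc}_N^{+,P\to F}$ is a closed convex cone for every $N$. We assume the claim for $N-1$, on arbitrary (possibly extended) input spaces, since the recursion in Definition~\ref{def:pf-direct-ecs} is applied on extended inputs as well.

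\emph{Cone and convexity.} Let $X,Y\in\CS_N^{+,P\to F}$ with witnessing decompositions $\{X_{(k)}\}$ and $\{Y_{(k)}\}$, and let $a,b\ge0$. Take the components $aX_{(k)}+bY_{(k)}$. They lie in $\mathsf{Proc}_N^{+,P\to F}$ because that set is a convex cone. Conditioning is linear, so $(aX_{(k)}+bY_{(k)})_{|C_k}=a(X_{(k)})_{|C_k}+b(Y_{(k)})_{|C_k}$. This lies in $\CS_{N-1}^{+,P\to F}$ by the induction hypothesis.

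\emph{Closedness.} This is the main obstacle, because the witnessing decomposition is existential and the components could in principle escape to infinity. Take $X^{(m)}\to X$ with witnesses $X^{(m)}_{(k)}$. Positivity gives $0\preceq X^{(m)}_{(k)}\preceq X^{(m)}$, so $\|X^{(m)}_{(k)}\|\le\Tr X^{(m)}$, which is uniformly bounded. By compactness in finite dimensions we pass to a subsequence along which $X^{(m)}_{(k)}\to X_{(k)}$ for every $k$, which is finitely many indices. The limit satisfies $X=\sum_kX_{(k)}$ and $X_{(k)}\in\mathsf{Proc}_N^{+,P\to F}$ because that set is closed. For each fixed $C_k$, conditioning is a continuous linear map, so $(X^{(m)}_{(k)})_{|C_k}\to(X_{(k)})_{|C_k}$. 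The left-hand sides lie in $\CS_{N-1}^{+,P\to F}$, which is closed by the induction hypothesis, so the limit does too. A single subsequence works for all $C_k$ simultaneously, because the subsequence was chosen using only the components and the limit condition is then checked pointwise in $C_k$. Hence $X\in\CS_N^{+,P\to F}$, which completes the induction.
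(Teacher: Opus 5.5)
Your proof is correct and follows the paper's argument essentially step for step. Convexity comes from adding witnessing decompositions and using linearity of conditioning, and closedness from the bound $0\preceq X^{(m)}_{(k)}\preceq X^{(m)}$, a convergent subsequence, and closedness of $\mathsf{Proc}_N^{+,P\to F}$ and of the inverse images of $\CS_{N-1}^{+,P\to F}$ under conditioning; your separate treatment of $N=1$ via Corollary~\ref{cor:pf-flat-equivalence}(i) is harmless but unnecessary, since the inductive step already covers it starting from $N=0$.
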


\begin{proof}
We proceed by induction on $N$. For $N=0$, $\CS_0^{+,P\to F}=\Pos(PF)$. Assume the claim holds for $N-1$.
Since $\mathsf{Proc}_N^{+,P\to F}$ is a closed convex cone defined by positivity and the finite family of homogeneous linear equalities
\eqref{eq:basic-process-validity}, if
$X=\sum_kX_{(k)}$ and $Y=\sum_kY_{(k)}$ each satisfy Definition~\ref{def:pf-direct-ecs}, then, for any $a,b\ge0$,
every component of $aX+bY=\sum_k(aX_{(k)}+bY_{(k)})$ still belongs to
$\mathsf{Proc}_N^{+,P\to F}$ and, after conditioning, belongs to
$\CS_{N-1}^{+,P\to F}$ by the induction hypothesis. This proves convexity.

Now let $X^{(m)}\to X$ with $X^{(m)}\in\CS_N^{+,P\to F}$. For each $m$, choose a witnessing decomposition
$X^{(m)}=\sum_{k\in\N}X_{(k)}^{(m)}, \; X_{(k)}^{(m)}\in\mathsf{Proc}_N^{+,P\to F}.$
Since $X^{(m)}\to X$ and $0\preceq X_{(k)}^{(m)}\preceq X^{(m)}$, the component sequences are bounded. In finite dimensions, a common subsequence can therefore be chosen such that
$X_{(k)}^{(m)}\longrightarrow X_{(k)}\succeq0, \; \sum_kX_{(k)}=X.$
$\mathsf{Proc}_N^{+,P\to F}$ is closed, so each limit $X_{(k)}$ remains in this cone. For a fixed CJ matrix $C_k$ of a quantum operation, conditioning is a continuous linear map, and
$\CS_{N-1}^{+,P\to F}$ is closed by the induction hypothesis. Hence
$(X_{(k)})_{|C_k} =\lim_m(X_{(k)}^{(m)})_{|C_k} \in\CS_{N-1}^{+,P\to F}.$
Taking the intersection over the CJ matrices of all quantum operations still gives a closed condition, so $X\in\CS_N^{+,P\to F}$.
\end{proof}

\begin{lemma}[Discarding input ancillas preserves the causally separable cone for fixed $P,F$]
\label{lem:pf-inputdiscard}
Suppose there are $N$ external slots and the input of party $j$ is extended to
$A_I^jE_jG_j$, where $G_j$ serves only as an input ancilla. If
$Y\in\CS_N^{+,P\to F}$ on these extended inputs, then
\begin{equation}
   \Tr_{G_1\cdots G_N}Y\in\CS_N^{+,P\to F}
   \label{eq:pf-inputdiscard}
\end{equation}
where the right-hand side retains only the inputs $A_I^jE_j$. The same holds when any subset of the $G_j$ is traced out.
\end{lemma}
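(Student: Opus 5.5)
We argue by induction on $N$, proving the slightly stronger statement that tracing out any subset of the ancillas $G_j$ maps $\CS_N^{+,P\to F}$ on the extended inputs into $\CS_N^{+,P\to F}$ on the reduced inputs. It suffices to treat the full trace $\Tr_{G_1\cdots G_N}$: a partial trace over a subset is obtained by regrouping the untouched $G_j$ into $E_j$, which the statement allows. The base case $N=0$ is immediate: there are no slots and no $G_j$, and $\CS_0^{+,P\to F}=\Pos(PF)$.

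For the inductive step with $N\ge1$, take a witnessing decomposition $Y=\sum_k Y_{(k)}$ as in Definition~\ref{def:pf-direct-ecs}, and define $X_{(k)}:=\Tr_{G_1\cdots G_N}Y_{(k)}$. We must check three things.
\begin{enumerate}[label=(\alph*)]
\item \emph{Positivity.} Each $X_{(k)}$ is PSD, since partial traces preserve positivity.
\item \emph{Validity.} Each $X_{(k)}\in\mathsf{Proc}_N^{+,P\to F}$ on the reduced inputs. The cleanest route is characterization (i) of Proposition~\ref{prop:pf-validity}. Any local CPTP map $M_j$ on $A_I^jE_j\to A_O^j$ extends to the CPTP map $M_j\otimes\Tr_{G_j}$ on $A_I^jE_jG_j\to A_O^j$, whose CJ operator is $M_j\otimes\id^{G_j}$. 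Linking this with $Y_{(k)}$ equals linking $M_j$ with $\Tr_{G_j}Y_{(k)}$. Hence the $F$-trace of $(\bigotimes_j M_j)*X_{(k)}$ equals that of $(\bigotimes_j M_j\otimes\id^{G_j})*Y_{(k)}$, which is the constant $\Theta_{Y_{(k)}}$. Because $d_{A_O}$ is unchanged, this constant also equals $\Theta_{X_{(k)}}$. Alternatively, one can verify the homogeneous constraints \eqref{eq:basic-process-validity} directly: $\Tr_{G_j}$ commutes with $\mathcal D_{A_O^i}$, and since $\mathcal D_{A_I^jE_jG_jA_O^j}$ followed by $\Tr_{G_j}$ equals $\Tr_{G_j}$ followed by $\mathcal D_{A_I^jE_jA_O^j}$, each constraint maps onto the corresponding reduced one.
\item \emph{Recursion.} For every quantum operation $C_k$ on $A_I^kE_kA_O^k$, set $\widetilde C_k:=C_k\otimes\id^{G_k}$. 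This is again a valid quantum-operation CJ matrix on the extended slot, because $\Tr_{A_O^k}\widetilde C_k\preceq\id$. Then
\[
(X_{(k)})_{|C_k}=\Tr_{\{G_j\}_{j\ne k}}\bigl[(Y_{(k)})_{|\widetilde C_k}\bigr].
\]
By hypothesis $(Y_{(k)})_{|\widetilde C_k}\in\CS_{N-1}^{+,P\to F}$ on the remaining extended inputs, so the induction hypothesis applied to the $N-1$ remaining slots gives $(X_{(k)})_{|C_k}\in\CS_{N-1}^{+,P\to F}$.
\end{enumerate}
Together with $\Tr_G Y=\sum_k X_{(k)}$, these three facts give $\Tr_GY\in\CS_N^{+,P\to F}$.

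The main obstacle is bookkeeping rather than any deep step. The conditioning identity in (c) requires that the partial trace over $G_k$ be absorbed into the extended instrument $C_k\otimes\id^{G_k}$, while the traces over the other $G_j$ commute with conditioning on slot $k$; both hold because the operations act on disjoint tensor factors. The induction hypothesis must also be stated for arbitrary slot labels and arbitrary pre-existing ancillas $E_j$, so that it applies to the conditioned process on $\N\setminus\{k\}$. This generality is why the statement includes the $E_j$ explicitly.
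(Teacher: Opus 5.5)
Your proposal is correct and follows the paper's proof. The paper also argues by induction on $N$ and traces each witness component $Y_{(k)}$. It checks validity through the commutation $\Tr_{G_j}\circ\mathcal D_{A_I^jE_jG_jA_O^j}=\mathcal D_{A_I^jE_jA_O^j}\circ\Tr_{G_j}$, which is your ``alternative'' route. It handles the recursion with the lift $C_k\otimes\id^{G_k}$, the identity $(\Tr_{G}Y_{(k)})_{|C_k}=\Tr_{\{G_j\}_{j\ne k}}\bigl[(Y_{(k)})_{|C_k\otimes\id^{G_k}}\bigr]$, and the induction hypothesis.

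Two small additions of yours are also sound. Your primary validity argument via characterization (i) of Proposition~\ref{prop:pf-validity} works. Your regrouping argument for the subset case supplies a justification the paper only asserts.
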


\begin{proof}
The case $N=0$ holds by definition. For the induction step, choose $Y=\sum_kY_{(k)}$ with $Y_{(k)}\in\mathsf{Proc}_N^{+,P\to F}$ on the extended inputs. Partial trace preserves positivity. The identity $\Tr_{G_j}\mathcal D_{A_I^jE_jG_jA_O^j}=\mathcal D_{A_I^jE_jA_O^j}\Tr_{G_j}$, together with commutation on disjoint systems, gives, for every nonempty $\mathcal K\subseteq\N$,
\begin{align}
&\left(\prod_{i\in\mathcal K}(\mathcal I-\mathcal D_{A_O^i})\right)
 \mathcal D_{(\otimes_{j\in\N\setminus\mathcal K}A_I^jE_jA_O^j)F}
 \Tr_{G_1\cdots G_N}Y_{(k)}
\notag\\
&\;=
 \Tr_{G_1\cdots G_N}\!\left[
 \left(\prod_{i\in\mathcal K}(\mathcal I-\mathcal D_{A_O^i})\right)
 \mathcal D_{(\otimes_{j\in\N\setminus\mathcal K}A_I^jE_jG_jA_O^j)F}
 Y_{(k)}\right]=0.
 \label{eq:pf-discard-proc}
\end{align}
Each first-level component therefore remains in $\mathsf{Proc}_N^{+,P\to F}$ after the partial trace.

For a quantum operation $C_k$ on the reduced input, define
\begin{equation}
C_k^{\uparrow}:=C_k\otimes\id^{G_k}.
   \label{eq:pf-operation-lift}
\end{equation}
Since
$\Tr_{A_O^k}C_k^{\uparrow} =(\Tr_{A_O^k}C_k)\otimes\id^{G_k} \preceq\id^{A_I^kE_kG_k},$
this is still a valid CJ matrix of a quantum operation, and
\begin{equation}
   \left(\Tr_{G_1\cdots G_N}Y_{(k)}\right)_{|C_k}
   =
   \Tr_{\otimes_{j\ne k}G_j}
   \left((Y_{(k)})_{|C_k^{\uparrow}}\right).
   \label{eq:pf-discard-condition}
\end{equation}
The witness condition and the induction hypothesis put the right-hand side in $\CS_{N-1}^{+,P\to F}$, proving Eq.~\eqref{eq:pf-inputdiscard}.
\end{proof}

\begin{lemma}[Isometries preserve PSD decomposition components]
\label{lem:pf-support}\label{lem:dG-support}
If  $ Z=\sum_\alpha Z_\alpha,
   \; Z_\alpha\succeq0,$
then $\supp Z_\alpha\subseteq\supp Z$. If, in addition,
\begin{equation}
   Z=\beta\mathcal V X\mathcal V^\dagger,
   \; \beta>0,
   \; \mathcal V^\dagger\mathcal V=\id,
   \label{eq:pf-support-isom}
\end{equation}
then each $Z_\alpha$ has a unique representation
\begin{equation}
   Z_\alpha=\beta\mathcal V X_\alpha\mathcal V^\dagger,
   \;
   X_\alpha=\beta^{-1}\mathcal V^\dagger Z_\alpha\mathcal V\succeq0,
   \label{eq:pf-supportpull}
\end{equation}
and $X=\sum_\alpha X_\alpha$.
\end{lemma}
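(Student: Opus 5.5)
The plan is to prove the two assertions separately: first a purely positive-semidefinite statement about supports, then the isometric pullback built on it.

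For the support inclusion, fix $\alpha$ and note that $0\preceq Z_\alpha\preceq Z$, because $Z-Z_\alpha=\sum_{\alpha'\ne\alpha}Z_{\alpha'}\succeq0$. Take any $v\in\Ker Z$. Then
\[
0\le\langle v|Z_\alpha|v\rangle\le\langle v|Z|v\rangle=0 .
\]
Since $Z_\alpha\succeq0$, we can write $\langle v|Z_\alpha|v\rangle=\|Z_\alpha^{1/2}v\|^2$, so $Z_\alpha^{1/2}v=0$ and hence $Z_\alpha v=0$. This gives $\Ker Z\subseteq\Ker Z_\alpha$. For Hermitian operators the support is the orthogonal complement of the kernel, so taking complements yields $\supp Z_\alpha\subseteq\supp Z$.

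For the pullback, set $\Pi:=\mathcal V\mathcal V^\dagger$. The hypothesis $\mathcal V^\dagger\mathcal V=\id$ makes $\Pi$ the orthogonal projector onto $\Ran\mathcal V$. From $Z=\beta\mathcal VX\mathcal V^\dagger$ we get $\supp Z\subseteq\Ran\mathcal V$, and with the first part, $\supp Z_\alpha\subseteq\Ran\mathcal V$. Because $Z_\alpha$ is Hermitian with range inside $\Ran\mathcal V$, it satisfies $\Pi Z_\alpha=Z_\alpha=Z_\alpha\Pi$, and therefore $Z_\alpha=\Pi Z_\alpha\Pi$.

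Now define $X_\alpha:=\beta^{-1}\mathcal V^\dagger Z_\alpha\mathcal V$. It is PSD, being a congruence of a PSD operator scaled by a positive number. Substituting back gives
\[
\beta\mathcal VX_\alpha\mathcal V^\dagger=\mathcal V\mathcal V^\dagger Z_\alpha\mathcal V\mathcal V^\dagger=\Pi Z_\alpha\Pi=Z_\alpha,
\]
which is the representation claimed in Eq.~\eqref{eq:pf-supportpull}.

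Uniqueness: if $Z_\alpha=\beta\mathcal VY\mathcal V^\dagger$ for some $Y$, multiply by $\mathcal V^\dagger$ on the left and $\mathcal V$ on the right. Using $\mathcal V^\dagger\mathcal V=\id$, this gives $\mathcal V^\dagger Z_\alpha\mathcal V=\beta Y$, so $Y=X_\alpha$.

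Finally, sum over $\alpha$ and use linearity of the compression:
\[
\sum_\alpha X_\alpha=\beta^{-1}\mathcal V^\dagger Z\mathcal V=\beta^{-1}\mathcal V^\dagger\bigl(\beta\mathcal VX\mathcal V^\dagger\bigr)\mathcal V=X .
\]

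The only step that needs care is $Z_\alpha=\Pi Z_\alpha\Pi$. Here the support inclusion from the first part is essential: without it, the compression $\mathcal V^\dagger Z_\alpha\mathcal V$ would lose the components of $Z_\alpha$ lying outside $\Ran\mathcal V$, and the representation would fail. Every other step is routine linear algebra.
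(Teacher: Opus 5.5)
Your proof is correct and follows essentially the same route as the paper. The paper gets the kernel inclusion from $0=\sum_\alpha\|Z_\alpha^{1/2}\xi\|^2$, where you use $\langle v|Z_\alpha|v\rangle\le\langle v|Z|v\rangle=0$; both then insert $\mathcal V\mathcal V^\dagger$ on both sides, prove uniqueness by compressing with $\mathcal V^\dagger,\mathcal V$, and sum over $\alpha$, and you simply spell out steps such as $\Pi Z_\alpha\Pi=Z_\alpha$ that the paper compresses into one line.
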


\begin{proof}
For $\ket\xi\in\Ker Z$, $0=\sum_\alpha\|Z_\alpha^{1/2}\ket\xi\|^2$ implies $Z_\alpha\ket\xi=0$, hence $\supp Z_\alpha\subseteq\supp Z$. Under Eq.~\eqref{eq:pf-support-isom}, every component is supported in $\Ran\mathcal V$. Inserting $\mathcal V\mathcal V^\dagger$ on both sides gives Eq.~\eqref{eq:pf-supportpull}; multiplication by $\mathcal V^\dagger,\mathcal V$ proves uniqueness, and summing gives $X=\sum_\alpha X_\alpha$.
\end{proof}

The following proposition uses the compactness and finite-intersection argument of Wechs et al.~\cite{wechs2019DefinitionCharacterisationMultipartite} to obtain a decomposition independent of the extension. The condition
$X_{(k)}\in\mathsf{Proc}_N^{+,P\to F}$ is defined by a finite family of closed linear constraints and positivity, so the same compactness argument applies for fixed $P,F$.
\begin{proposition}[Extension-independent first-level decomposition]
\label{lem:pf-uniform}
Let $N\ge1$ and $X\in\mathsf{Proc}_N^{+,P\to F}$. Then $X\in\ECS_N^{+,P\to F}$ if and only if there exist fixed operators
$\{X_{(k)}\}_{k\in\N}$ on the original process space such that
\begin{align}
   X&=\sum_{k\in\N}X_{(k)},
   \label{eq:pf-uniform-sum}\\
   X_{(k)}&\in\mathsf{Proc}_N^{+,P\to F}
   \;(k\in\N),
   \label{eq:pf-uniform-first}
\end{align}
These operators also satisfy the following condition: for arbitrary input extensions $E_1,\ldots,E_N$, any normalized shared state $\rho$, and any quantum operation of party $k$ with CJ matrix $C_k$,
\begin{equation}
   (X_{(k)}\otimes\rho)_{|C_k}
   \in\CS_{N-1}^{+,P\to F},
   \label{eq:pf-uniform-cond}
\end{equation}
where the right-hand side acts on the remaining labels $\N\setminus\{k\}$.

\end{proposition}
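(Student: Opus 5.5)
The plan is to prove sufficiency directly from the definitions and to obtain necessity from the compactness and finite-intersection argument of Wechs et al.~\cite{wechs2019DefinitionCharacterisationMultipartite}, with every step kept at fixed $P,F$.

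\emph{Sufficiency.} Assume a fixed family $\{X_{(k)}\}$ satisfies Eqs.~\eqref{eq:pf-uniform-sum}--\eqref{eq:pf-uniform-cond}. Fix an extension and a normalized state $\rho$, and write $X\otimes\rho=\sum_k X_{(k)}\otimes\rho$. Two facts are needed for Definition~\ref{def:pf-direct-ecs}.
\begin{enumerate}
\item Each $X_{(k)}\otimes\rho$ lies in $\mathsf{Proc}_N^{+,P\to F}$ on the extended inputs. The ancillas $E_j$ are input-only, and $\mathcal D_{A_I^jE_jA_O^j}=\mathcal D_{A_I^jA_O^j}\mathcal D_{E_j}$ holds on disjoint systems. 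So each constraint in Eq.~\eqref{eq:basic-process-validity} for $X_{(k)}\otimes\rho$ reduces to the same constraint for $X_{(k)}$, tensored with a partially depolarized version of $\rho$. Alternatively one can use criterion (i) of Proposition~\ref{prop:pf-validity}, since inserting CPTP maps on $A_I^jE_j$ amounts to inserting CPTP maps on $A_I^j$ that depend on $\rho$.
\item The conditioned components lie in $\CS_{N-1}^{+,P\to F}$, which is exactly Eq.~\eqref{eq:pf-uniform-cond}.
\end{enumerate}
Together these give $X\otimes\rho\in\CS_N^{+,P\to F}$, so $X\in\ECS_N^{+,P\to F}$.

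\emph{Necessity.} Let $\mathcal E$ range over pairs consisting of ancilla spaces $(E_1,\dots,E_N)$ and a normalized state $\rho$. For each such pair define $K_{\mathcal E}$ as the set of tuples $(X_{(k)})_k$ with $X_{(k)}\in\mathsf{Proc}_N^{+,P\to F}$, $\sum_kX_{(k)}=X$, and $(X_{(k)}\otimes\rho)_{|C_k}\in\CS_{N-1}^{+,P\to F}$ for every $C_k$.
\begin{enumerate}
\item Each $K_{\mathcal E}$ is compact. The tuples are bounded because $0\preceq X_{(k)}\preceq X$, and the defining conditions are closed: $\mathsf{Proc}$ is a closed cone, conditioning is continuous, and $\CS_{N-1}^{+,P\to F}$ is closed by Lemma~\ref{lem:pf-CSclosed}.
\item If every finite intersection $K_{\mathcal E^1}\cap\cdots\cap K_{\mathcal E^m}$ is nonempty, the finite-intersection property yields a common tuple, which is the desired extension-independent family.
\item To handle finitely many extensions at once, merge them into one extension with $E_j:=E_j^1\otimes\cdots\otimes E_j^m$ and $\rho:=\rho^1\otimes\cdots\otimes\rho^m$. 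A single-party operation $C_k$ on the $i$th extension lifts to $C_k\otimes\id$ on the other ancilla factors, as in Eq.~\eqref{eq:pf-operation-lift}. Lemma~\ref{lem:pf-inputdiscard} then says the unused ancilla factors may be traced out without leaving the CS cone.
\end{enumerate}
So one decomposition of the merged process, once the remaining ancilla factors are discarded, should serve all $m$ extensions simultaneously.

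\emph{Main obstacle.} The ECS hypothesis only provides a decomposition $X\otimes\rho=\sum_kY_{(k)}$ on the extended space, whereas $K_{\mathcal E}$ requires components of the product form $X_{(k)}\otimes\rho$ with $X_{(k)}$ on the original space. The obvious candidate is $X_{(k)}:=\Tr_EY_{(k)}$ (or a normalized partial trace), but in general $Y_{(k)}\ne X_{(k)}\otimes\rho$. I would first try to show that the needed conditioned statements depend only on data that are preserved when $Y_{(k)}$ is replaced by $X_{(k)}\otimes\rho$.
\begin{itemize}
\item First, I would pick $\rho$ in the merged extension with an extra maximally entangled register, and use Lemma~\ref{lem:pf-support} to argue that every $Y_{(k)}$ is supported where $X\otimes\rho$ is.
\item Second, I would use local conditioning on the extra register to steer the ancillas of the remaining parties into each $\rho^i$, so that each required condition is a conditioning of $Y_{(k)}$ followed by a discarding of ancillas, which Lemma~\ref{lem:pf-inputdiscard} controls.
\end{itemize}
If this reduction fails, I would fall back on reproducing Wechs et al.'s original construction verbatim. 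Their argument uses only positivity, the homogeneous constraints, and closedness, and all three survive at fixed nontrivial $P,F$.
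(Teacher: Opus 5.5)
\textbf{Verdict: there is a genuine gap in necessity.} Your sufficiency argument and the compactness/finite-intersection scaffolding agree with the paper. The gap sits exactly where you place your ``main obstacle'': you never show that a single set $K_{\mathcal E}$ is nonempty. That is, for one fixed extension $(E,\rho)$ you do not produce a decomposition whose components are exactly of the form $X_{(k)}\otimes\rho$ with $X_{(k)}$ on the original space.

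Your merging step relies on the same fact, since it needs such a decomposition for the merged extension $(\bigotimes_iE^i,\bigotimes_i\rho^i)$. Without it the finite-intersection argument never starts. The repair you sketch does not close the gap, for two reasons:
\begin{enumerate}
\item Lemma~\ref{lem:pf-support} applied to a general (mixed) $\rho$ only gives $\supp Y_{(k)}\subseteq\supp X\otimes\supp\rho$. This leaves room for components that are not of product form.
\item Steering is not the relevant operation. For component $k$ only party $k$ may be conditioned, so ancillas held by other parties cannot be measured away at this level; they must be discarded. When an extra register sits at party $k$ itself, simply tracing it out already yields the reduced state, so no steering is needed.
\end{enumerate}

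\textbf{The missing idea is purity plus purification.}
\begin{enumerate}
\item Pure states. For $\rho=|\psi\rangle\!\langle\psi|$ one has $X\otimes|\psi\rangle\!\langle\psi|=\mathcal V_\psi X\mathcal V_\psi^\dagger$ with the isometry $\mathcal V_\psi|\phi\rangle:=|\phi\rangle\otimes|\psi\rangle$. Then $0\preceq Y_{(k)}\preceq X\otimes|\psi\rangle\!\langle\psi|$ and Lemma~\ref{lem:pf-support} force $Y_{(k)}=X^{[\psi]}_{(k)}\otimes|\psi\rangle\!\langle\psi|$ exactly, with $\sum_kX^{[\psi]}_{(k)}=X$. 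Validity $X^{[\psi]}_{(k)}\in\mathsf{Proc}_N^{+,P\to F}$ follows by tracing the ancillas out of the homogeneous constraints.
\item Mixed states. Purify $\rho$ to $|\Psi\rangle$ on $E_1\cdots E_NG$ and attach $G$ to the input of party $1$, then apply the pure-state argument to obtain components $Z_k$ on the original space. For $k=1$, the lifted operation $C_1\otimes\id^G$ reproduces $(Z_1\otimes\rho)_{|C_1}$ directly. For $k\neq1$, the system $G$ remains an input ancilla of party $1$ after conditioning, and Lemma~\ref{lem:pf-inputdiscard} removes it.
\end{enumerate}
This gives nonemptiness of every $K_{\mathcal E}$, including the merged one. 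After that, your compactness and finite-intersection argument goes through unchanged.
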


\begin{proof}
We first prove necessity. Suppose $X\in\ECS_N^{+,P\to F}$.

\smallskip\noindent\emph{Pure shared states.}
Fix finite-dimensional input extensions $E_1,\ldots,E_N$, write
$E:=E_1\otimes\cdots\otimes E_N$, and first suppose that the shared state is pure,
$\rho^{E_1\cdots E_N}=\proj\psi$. Since
$X\in\ECS_N^{+,P\to F}$, we have
$X\otimes\proj\psi\in\CS_N^{+,P\to F}$. We can therefore choose
\begin{equation}
   X\otimes\proj\psi
   =\sum_{k\in\N}Y_{(k)},
   \;
   Y_{(k)}\in\mathsf{Proc}_N^{+,P\to F},
   \label{eq:pf-pure-split}
\end{equation}
such that, for any quantum operation of party $k$ on the extended input with CJ matrix $C_k$,
$(Y_{(k)})_{|C_k}\in\CS_{N-1}^{+,P\to F}$.

We have $0\preceq Y_{(k)}\preceq X\otimes\proj\psi$. Define the isometric embedding
$\mathcal V_\psi:\mathcal H_0\to\mathcal H_0\otimes E$ by
$\mathcal V_\psi\ket\phi=\ket\phi\otimes\ket\psi$, where
$\mathcal H_0:=PA_{IO}^{\N}F$. Lemma~\ref{lem:pf-support} gives unique operators
$X_{(k)}^{[\psi]}\succeq0$ on the original process space such that
\begin{equation}
   Y_{(k)}
   =\mathcal V_\psi X_{(k)}^{[\psi]}\mathcal V_\psi^\dagger
   =X_{(k)}^{[\psi]}\otimes\proj\psi,
   \;
   X=\sum_{k\in\N}X_{(k)}^{[\psi]}.
   \label{eq:pf-pure-factor}
\end{equation}
Trace out all $E_j$ in each nonempty-subset constraint of Eq.~\eqref{eq:basic-process-validity}. If the completely depolarized block contains party $j$, then
$\Tr_{E_j}\circ\mathcal D_{A_I^jE_jA_O^j}
=\mathcal D_{A_I^jA_O^j}\circ\Tr_{E_j}$. Otherwise, the partial trace commutes with the output projections and depolarization of the other slots. Thus $X_{(k)}^{[\psi]}\in\mathsf{Proc}_N^{+,P\to F}.$
Moreover, $Y_{(k)}=X_{(k)}^{[\psi]}\otimes\proj\psi$ and the recursive condition give
\begin{equation}
   (X_{(k)}^{[\psi]}\otimes\proj\psi)_{|C_k}
   \in\CS_{N-1}^{+,P\to F}
   \;\forall k,\ \forall C_k.
   \label{eq:pf-pure-rec}
\end{equation}

\smallskip\noindent\emph{Mixed shared states.}
Fix a normalized state
$\rho^{E_1\cdots E_N}$, choose a finite-dimensional purification
$\proj\Psi^{E_1\cdots E_NG}$, and attach $G$ to the input of party $1$. Applying the preceding argument to the pure state
$\proj\Psi$ gives components $\{Z_k\}$ on the original space satisfying
$\sum_kZ_k=X$ and $Z_k\in\mathsf{Proc}_N^{+,P\to F}$, as well as the recursive condition for the CJ matrices of all local quantum operations on the purified extension. If $k=1$, take
$C_1^\uparrow=C_1\otimes\id^G$. If $k\ne1$, then, after conditioning, $G$ remains an input ancilla of the remaining party $1$. Lemma~\ref{lem:pf-inputdiscard} gives, in both cases,
\begin{equation}
   (Z_k\otimes\rho^{E_1\cdots E_N})_{|C_k}
   \in\CS_{N-1}^{+,P\to F}.
   \label{eq:pf-mixed-cond}
\end{equation}
Thus, for each fixed extension state, there is at least one first-level decomposition on the original space satisfying all the required conditions.

\smallskip\noindent\emph{A common decomposition for all extension states.}
We now remove the dependence of the first-level decomposition on the chosen extension state. Consider the finite-dimensional real vector space
\begin{equation}
   \mathsf X_N:=
   \{(Z_1,\ldots,Z_N):Z_k\in\Herm(PA_{IO}^{\N}F)\},
   \label{eq:pf-Kspace}
\end{equation}
and, for arbitrary finite-dimensional $E_1,\ldots,E_N$ and any normalized shared state
$\rho^{E_1\cdots E_N}$, let
$\mathfrak F_{\rho^{E_1\cdots E_N}}\subseteq\mathsf X_N$ be the set of all tuples satisfying
\begin{align}
   &Z_k\succeq0,
   \; \sum_{k\in\N}Z_k=X,
   \label{eq:pf-K1}\\
   &Z_k\in\mathsf{Proc}_N^{+,P\to F},
   \label{eq:pf-K2}\\
   &(Z_k\otimes\rho^{E_1\cdots E_N})_{|C_k}
   \in\CS_{N-1}^{+,P\to F}
   \;\forall k,\ \forall C_k
   \label{eq:pf-K3}
\end{align}
The preceding purification argument establishes nonemptiness. Since
$0\preceq Z_k\preceq X$, all these sets lie in the fixed compact set $\prod_{k\in\N}\{Z_k:0\preceq Z_k\preceq X\}$ in the finite-dimensional space $\mathsf X_N$. Positivity, the sum constraint,
the finite family of closed linear constraints defining $\mathsf{Proc}_N^{+,P\to F}$, and, for fixed
$\rho,k,C_k$, the inverse image of $\CS_{N-1}^{+,P\to F}$ under the continuous linear conditioning map are all closed conditions. Thus each
$\mathfrak F_\rho$ is compact.

We establish the finite intersection property as in Wechs et al.~\cite{wechs2019DefinitionCharacterisationMultipartite}. Take any finite collection of extension states
$\rho_t^{E_{1,t}\cdots E_{N,t}}$, assign to each party the ancillary input
$\bigotimes_tE_{j,t}$, and use the shared state $\bigotimes_t\rho_t$. Nonemptiness for this joint extension gives a family
$\{\widehat Z_k\}$ still acting on the original process space. Fix $t$ and lift the CJ matrix of the quantum operation of party $k$ to
$C_{k,t}\otimes\id^{\otimes_{u\ne t}E_{k,u}}$. Tracing out all remaining input ancilla factors with $u\ne t$ and applying Lemma~\ref{lem:pf-inputdiscard} gives
\begin{equation}
   (\widehat Z_k\otimes\rho_t)_{|C_{k,t}}
   \in\CS_{N-1}^{+,P\to F}.
   \label{eq:pf-finite-reduction}
\end{equation}
For each $t$, the same original-space components $\widehat Z_k\in\mathsf{Proc}_N^{+,P\to F}$ satisfy Eq.~\eqref{eq:pf-finite-reduction}. Hence the family
$\{\widehat Z_k\}$ belongs to all the finitely many sets $\mathfrak F_{\rho_t}$. This family of compact sets therefore has the finite intersection property, and hence
$\bigcap_{E_1,\ldots,E_N,\rho}\mathfrak F_\rho\ne\varnothing.$
Any tuple in the total intersection satisfies Eqs.~\eqref{eq:pf-uniform-sum}--\eqref{eq:pf-uniform-cond}.

\smallskip\noindent\emph{Sufficiency.}

Conversely, suppose fixed components $X_{(k)}$ satisfying these conditions exist. For any input extension and shared state $\rho$, we have $X\otimes\rho=\sum_kX_{(k)}\otimes\rho$. Tensoring with an input state preserves the process validity of each first-level component, and Eq.~\eqref{eq:pf-uniform-cond} then implies $X\otimes\rho\in\CS_N^{+,P\to F}$. Since the extension and $\rho$ are arbitrary, $X\in\ECS_N^{+,P\to F}$.
\end{proof}

\label{sec:pf-route}

For any fixed
$k\in\N$, we jointly encode the entire slot of party $k$ into the input ancillas of the other parties. Write
$d_k:=\dim A_{IO}^k$.
All tensor products over $\N\setminus\{k\}$ follow a fixed order, with any required permutations of tensor factors implicit in the notation. Fix
an orthonormal basis of $A_{IO}^k$, denoted by $\{\ket a_{A_{IO}^k}\}_{a=1}^{d_k}$. For each $r\in\N\setminus\{k\}$, add to the input of party $r$ the system
\begin{equation}
   G_r:=\mathbb C\ket{\mathrm{vac}}_{G_r}\oplus S_r,
   \;
   S_r:=\operatorname{span}\{\ket a_{S_r}\}_{a=1}^{d_k},
   \label{eq:pf-LG}
\end{equation}
where all the indicated bases are orthonormal and $S_r\perp\mathbb C\ket{\mathrm{vac}}_{G_r}$. In this construction, we call $S_r$ the data sector and $\mathbb C\ket{\mathrm{vac}}_{G_r}$ the orthogonal vacuum sector. For each $u\in\N\setminus\{k\}$, let
\begin{equation}
   J_u:A_{IO}^k\longrightarrow G_u,
   \; J_u\ket a_{A_{IO}^k}:=\ket a_{S_u}\;(a=1,\ldots,d_k),
   \label{eq:pf-Ju-explicit}
\end{equation}
and define
\begin{equation}
   V_u:A_{IO}^k\longrightarrow\bigotimes_{r\in\N\setminus\{k\}}G_r,
   \;
   V_u\ket\phi_{A_{IO}^k}
   :=J_u\ket\phi_{A_{IO}^k}
      \otimes\bigotimes_{r\in\N\setminus\{k,u\}}\ket{\mathrm{vac}}_{G_r}.
   \label{eq:pf-Vu}
\end{equation}
For $u\ne v$, on $G_u$, the range of $V_u$ lies in the data sector $S_u$, whereas that of $V_v$ lies in the vacuum sector. Together with $J_u^\dagger J_u=\id^{A_{IO}^k}$, this gives
\begin{equation}
   V_u^\dagger V_v=\delta_{uv}\id^{A_{IO}^k},
   \;
   V:=({N-1})^{-1/2}\sum_{u\in\N\setminus\{k\}}V_u,
   \; V^\dagger V=\id^{A_{IO}^k}.
   \label{eq:pf-V}
\end{equation}
$V$ coherently superposes the encodings $V_u$, each placing $A_{IO}^k$ in party $u$'s data sector and all other $G_r$ in the vacuum state.
Take two mutually independent ancillary copies isomorphic to the actual process slot $A_{IO}^k$,
$\widetilde A_{IO}^k\simeq A_{IO}^k$ and $S\simeq A_{IO}^k$. Fix orthonormal bases of these spaces corresponding to the basis of the actual slot. Transport $V$ to the ancillary copy using this fixed basis isomorphism, and denote the resulting isometry by
$\widetilde V:\widetilde A_{IO}^k\to\bigotimes_{r\in\N\setminus\{k\}}G_r$. $V$ continues to denote the isometry on the actual slot defined in Eq.~\eqref{eq:pf-V}.
Write
\begin{equation}
   \ket{\Phi_{d_k}}_{S\,\widetilde A_{IO}^k}
   :=d_k^{-1/2}\sum_{a=1}^{d_k}\ket a_S\ket a_{\widetilde A_{IO}^k},
   \; 
   \ket{\Omega_V}_{S\,\bigotimes_{r\in\N\setminus\{k\}}G_r}
   :=(\id^S\otimes\widetilde V)\ket{\Phi_{d_k}}_{S\,\widetilde A_{IO}^k}.
\label{eq:pf-Omega}
\end{equation}
In the same corresponding bases, we also use $\ket{\Phi_{d_k}}_{S A_{IO}^k}$ to denote the maximally entangled state of $S$ and the actual slot $A_{IO}^k$.
Below, $S$ serves as an additional input ancilla of party $k$, and $G_r$ as an input ancilla of party $r\in\N\setminus\{k\}$.
We write
\begin{equation}
   \mathcal V:=\id^P\otimes V\otimes\id^{A_{IO}^{\N\setminus\{k\}}F}.
   \label{eq:pf-bigV}
\end{equation}
\begin{lemma}[Conditioning on a maximally entangled projection yields an ECS process for the remaining $N-1$ parties]
\label{lem:pf-route}\label{lem:dG-teleport}
On the finite-dimensional systems specified above, let $X\in\ECS_N^{+,P\to F}$ and let
$\{X_{(j)}\}_{j\in\N}$ be the
first-level decomposition supplied by Proposition~\ref{lem:pf-uniform}, namely,
\begin{align}
   X&=\sum_{j\in\N}X_{(j)},
   \label{eq:pf-route-assumption-sum}\\
   X_{(j)}&\in\mathsf{Proc}_N^{+,P\to F}
   \;(j\in\N).
   \label{eq:pf-route-assumption-first}
\end{align}
and, for arbitrary finite-dimensional input ancillas $E_1,\ldots,E_N$, any normalized shared state
$\rho^{E_1\cdots E_N}$, any $j\in\N$, and any quantum operation of party $j$ with CJ matrix $C_j$, we have
\begin{equation}
   (X_{(j)}\otimes\rho)_{|C_j}\in\CS_{N-1}^{+,P\to F}.
   \label{eq:pf-route-assumption-uniform-cond}
\end{equation}
For any $k$ such that $X_{(k)}\ne0$, take, under the
fixed identification of tensor factors $A_I^kSA_O^k\simeq S A_{IO}^k$,
\begin{equation}
   C_k:=\lambda\proj{\Phi_{d_k}}_{S A_{IO}^k},\; 0<\lambda\le1.
   \label{eq:pf-me-operation}
\end{equation}
Then
\begin{equation}
   Z_k:=(X_{(k)}\otimes\proj{\Omega_V})_{|C_k}
   =\beta\,\mathcal V X_{(k)}\mathcal V^\dagger,
   \; \beta:=\lambda/d_k^2>0,
   \label{eq:pf-teleport}
\end{equation}
and, for the remaining slots with input spaces $A_I^rG_r$ and output spaces
$A_O^r$ ($r\in\N\setminus\{k\}$), forming an $N-1$-slot process, we have
\begin{equation}
   Z_k\in\ECS_{N-1}^{+,P\to F}.
   \label{eq:pf-Z-ECS}
\end{equation}
\end{lemma}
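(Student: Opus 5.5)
The proof splits into two independent parts: the explicit identity \eqref{eq:pf-teleport}, and membership of $Z_k$ in $\ECS_{N-1}^{+,P\to F}$. Before either, we check that $C_k$ is a legitimate quantum-operation CJ matrix on $A_I^kSA_O^k$. Positivity is immediate. For the trace condition, note that $\Tr_{A_O^k}\proj{\Phi_{d_k}}_{SA_{IO}^k}$ is a positive operator whose largest eigenvalue is at most $1/d_{A_I^k}\le1$, so it is bounded by $\id^{A_I^kS}$.

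\emph{The identity \eqref{eq:pf-teleport}.} This is the standard teleportation/link-product computation, done by expanding in the fixed bases. Write $\ket{\Omega_V}=d_k^{-1/2}\sum_a\ket a_S\otimes\widetilde V\ket a$. The projection $\proj{\Phi_{d_k}}_{SA_{IO}^k}$ contracts $S$ with the actual slot $A_{IO}^k$. Tracing over $A_{IO}^kS$ then produces
\[
d_k^{-1}\sum_{a,b}\bra a X_{(k)}\ket b\otimes \widetilde V\ket a\bra b\widetilde V^\dagger .
\]
The overall prefactor is $\lambda d_k^{-1}$ from $C_k$ times $d_k^{-1}$ from $\Omega_V$. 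Since $\widetilde V$ has the same matrix elements as $V$ in the corresponding bases, the double sum equals $\mathcal V X_{(k)}\mathcal V^\dagger$, giving $\beta=\lambda/d_k^2$. One subtlety is the transpose convention for conditioning. Here $C_k$ is real in the chosen basis, so $C_k^T=C_k$, and the sum reassembles $X_{(k)}$ rather than its transpose. I would state this explicitly.

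\emph{Validity of $Z_k$.} Membership in $\ECS_{N-1}^{+,P\to F}$ also requires $Z_k\in\mathsf{Proc}_{N-1}^{+,P\to F}$ on the extended inputs $A_I^rG_r$. I would obtain this from the extension condition itself: $Z_k\otimes\tau\in\CS_{N-1}^{+,P\to F}$ for a trivial $\tau$, and $\CS\subseteq\mathsf{Proc}$, which is the convex-cone argument used in Proposition~\ref{prop:pf-first-component-order}.

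\emph{Extension condition.} Fix arbitrary finite-dimensional ancillas $H_r$ for $r\ne k$ and a normalized state $\tau$ on $\bigotimes_rH_r$. Apply \eqref{eq:pf-route-assumption-uniform-cond} with $j=k$, extensions $E_k=S$ and $E_r=G_rH_r$, and shared state $\rho=\proj{\Omega_V}\otimes\tau$; this $\rho$ is normalized because $\widetilde V$ is an isometry. Conditioning acts only on party $k$'s systems, and $\tau$ does not involve them, so
\[
(X_{(k)}\otimes\rho)_{|C_k}=Z_k\otimes\tau .
\]
Hence $Z_k\otimes\tau\in\CS_{N-1}^{+,P\to F}$, with party $r$ having input $A_I^rG_rH_r$. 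Since $H_r$ and $\tau$ were arbitrary, Definition~\ref{def:pf-direct-ecs} gives \eqref{eq:pf-Z-ECS}.

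\emph{Expected difficulty.} I expect the only delicate step to be the bookkeeping in the identity computation: keeping track of which copy ($S$, $\widetilde A_{IO}^k$, or the actual slot) is contracted, and making sure the CJ transpose is harmless. The ECS part is essentially a direct substitution into Proposition~\ref{lem:pf-uniform}.
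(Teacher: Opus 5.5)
Your proposal is correct and follows the paper's proof step for step. It uses the same basis-expansion computation giving $\beta=\lambda/d_k^2$, and the same substitution $E_k=S$, $E_r=G_r\otimes H_r$, $\rho=\proj{\Omega_V}\otimes\tau$ into the extension-independent first-level condition. Your remark that $Z_k\in\mathsf{Proc}_{N-1}^{+,P\to F}$ follows from $\CS_{N-1}^{+,P\to F}\subseteq\mathsf{Proc}_{N-1}^{+,P\to F}$ usefully makes explicit a point the paper leaves implicit.

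One small slip: the largest eigenvalue of $\Tr_{A_O^k}\proj{\Phi_{d_k}}_{SA_{IO}^k}$ is $1/d_{A_O^k}$, not at most $1/d_{A_I^k}$. For example, it equals $1$ when $A_O^k$ is trivial. The needed bound $\preceq\id^{A_I^kS}$ still holds, because the operator is PSD with unit trace, which is exactly how the paper argues.
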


\begin{proof}
Since $C_k\succeq0$ and $\Tr_{A_O^k}\proj{\Phi_{d_k}}$ is PSD with trace $1$, $0<\lambda\le1$ implies $\Tr_{A_O^k}C_k\preceq\id^{A_I^kS}$. Thus $C_k$ is an allowed quantum operation.

We now evaluate the conditioning on the maximally entangled projection on $S A_{IO}^k$. Let $H$ be any system not involved in this projection, and write $Y=\sum_{a,b=1}^{d_k}\oprod a b_{A_{IO}^k}\otimes Y_{ab}.$
We use the expansions
\begin{align}
   \proj{\Phi_{d_k}}_{S A_{IO}^k}
   &=\frac1{d_k}\sum_{\mu,\nu=1}^{d_k}
     \oprod\mu\nu_S\otimes\oprod\mu\nu_{A_{IO}^k},
   \label{eq:pf-Phi-expand}\\
   \proj{\Omega_V}_{S\,\bigotimes_{r\in\N\setminus\{k\}}G_r}
   &=\frac1{d_k}\sum_{i,j=1}^{d_k}
     \oprod i j_S\otimes \widetilde V\oprod i j_{\widetilde A_{IO}^k}\widetilde V^\dagger,
   \label{eq:pf-Omega-expand}
\end{align}
Under the fixed basis isomorphism, $\widetilde V\oprod i j_{\widetilde A_{IO}^k}\widetilde V^\dagger
=V\oprod i j_{A_{IO}^k}V^\dagger$. Applying
$\Tr_{A_{IO}^k}(\oprod\mu\nu\oprod a b)=\delta_{\nu a}\delta_{\mu b}$ and
$\Tr_S(\oprod\mu\nu\oprod i j)=\delta_{\nu i}\delta_{\mu j}$ yields
\begin{equation}
   \Tr_{S A_{IO}^k}\!\left[(C_k\otimes\id)
   (Y\otimes\proj{\Omega_V})\right]
   =\frac{\lambda}{d_k^2}(V\otimes\id^H)Y(V^\dagger\otimes\id^H).
   \label{eq:pf-me-transport}
\end{equation}
Taking $H=P\otimes A_{IO}^{\N\setminus\{k\}}\otimes F$ and $Y=X_{(k)}$ gives Eq.~\eqref{eq:pf-teleport}.

For arbitrary further input ancillas $H_r$ and a normalized shared state $\tau$, apply Eq.~\eqref{eq:pf-route-assumption-uniform-cond} with
\begin{equation}
   E_k:=S,\;
   E_r:=G_r\otimes H_r\ (r\in\N\setminus\{k\}),
   \;
   \rho:=\proj{\Omega_V}\otimes\tau,
   \label{eq:pf-Omega-tau}
\end{equation}
and the same operation $C_k$. Equation~\eqref{eq:pf-me-transport} gives $Z_k\otimes\tau\in\CS_{N-1}^{+,P\to F}$. Since the $H_r$ and $\tau$ are arbitrary, Eq.~\eqref{eq:pf-Z-ECS} follows.
\end{proof}

\begin{lemma}[Compression by the adjoint isometry preserves the QC-CC linear constraints]
\label{lem:pf-compress}
Use the construction of Lemma~\ref{lem:pf-route}. Fix the first party $k$ and choose
a nonempty ordered sequence in $\N\setminus\{k\}$, denoted by $h=(r_1,\ldots,r_n).$ 
Let
\begin{equation}
   \mathcal V_u:=\id^P\otimes V_u\otimes\id^{A_{IO}^{\N\setminus\{k\}}F}\;(u\in\N\setminus\{k\}),
   \; \mathcal V_{r_1}:=\mathcal V_u\big|_{u=r_1}.
   \label{eq:pf-Vr1-big}
\end{equation}
Then, for every
$Y\in\Herm\!\left(P\otimes A_{IO}^{\mathcal N}\otimes F\right)$,
\begin{equation}
\begin{aligned}
&\mathcal V_{r_1}^\dagger
(\mathcal I-\mathcal D_{A_O^{r_n}})
\mathcal D_{(\otimes_{j\in\N\setminus\{k,r_1,\ldots,r_n\}}A_{IO}^jG_j)F}
\bigl(\mathcal VY\mathcal V^\dagger\bigr)
\mathcal V_{r_1}
\\[1mm]
&\;=
\frac{\displaystyle\prod_{j\in\N\setminus\{k,r_1,\ldots,r_n\}}d_{G_j}^{-1}}{{N-1}}
(\mathcal I-\mathcal D_{A_O^{r_n}})
\mathcal D_{A_{IO}^{\N\setminus\{k,r_1,\ldots,r_n\}}F}Y.
\end{aligned}
   \label{eq:pf-pullback}
\end{equation}
When $\N\setminus\{k,r_1,\ldots,r_n\}=\varnothing$, the product above equals $1$ by convention.
\end{lemma}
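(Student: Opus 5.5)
The identity is linear in $Y$, so I would verify it directly on the sector structure of $\bigotimes_{r\in\N\setminus\{k\}}G_r$. The plan is to expand $\mathcal VY\mathcal V^\dagger$ into pieces, let the depolarizing maps act on each piece, and then sandwich with $\mathcal V_{r_1}$. Using Eq.~\eqref{eq:pf-V}, write
\[
\mathcal VY\mathcal V^\dagger=\frac1{N-1}\sum_{u,v\in\N\setminus\{k\}}\mathcal V_uY\mathcal V_v^\dagger .
\]
Set $R:=\N\setminus\{k,r_1,\ldots,r_n\}$ for the uncalled labels. The superoperator $(\mathcal I-\mathcal D_{A_O^{r_n}})\mathcal D_{(\otimes_{j\in R}A_{IO}^jG_j)F}$ acts as the identity on every $G_s$ with $s\notin R$. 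On each $G_j$ with $j\in R$, it acts as the completely depolarizing map $\mathcal D_{G_j}$; since $\mathcal D_{A_{IO}^jG_j}=\mathcal D_{A_{IO}^j}\mathcal D_{G_j}$, the $G_j$ factor separates.

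The key step is a case analysis on the pair $(u,v)$ after conjugating by $\mathcal V_{r_1}$. Note that $r_1\notin R$, so $G_{r_1}$ is untouched by the superoperator.

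\textbf{Case $u\ne r_1$.} In this term the $G_{r_1}$ factor on the ket side is $\ket{\mathrm{vac}}$. The $\mathcal V_{r_1}^\dagger$ projection onto the data sector $S_{r_1}$ therefore kills the term, because $S_{r_1}\perp\ket{\mathrm{vac}}$ and the map leaves $G_{r_1}$ alone. The case $v\ne r_1$ is killed symmetrically on the bra side.

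\textbf{Case $u=v=r_1$.} Only this term survives. Here every $G_j$ with $j\in R$ carries $\proj{\mathrm{vac}}$. Depolarizing it gives $\id^{G_j}/d_{G_j}$, and sandwiching with $\bra{\mathrm{vac}}\cdot\ket{\mathrm{vac}}$ returns the factor $d_{G_j}^{-1}$. The remaining $G_s$ with $s\notin R\cup\{r_1\}$ carry $\proj{\mathrm{vac}}$ and are left unchanged, so sandwiching yields $1$. On $G_{r_1}$, $J_{r_1}^\dagger J_{r_1}=\id$ recovers the $A_{IO}^k$ operator exactly.

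Since $V_{r_1}$ acts only on the $A_{IO}^k$ slot and $G$ factors, it commutes with the action on $A_{IO}^{R}$, $F$, and $A_O^{r_n}$. Here I would note that $r_n\ne k$ and that $A_O^{r_n}$ is distinct from $A_{IO}^k$. The surviving term is therefore
\[
\frac{\prod_{j\in R}d_{G_j}^{-1}}{N-1}\,(\mathcal I-\mathcal D_{A_O^{r_n}})\mathcal D_{A_{IO}^{R}F}Y,
\]
which is the right-hand side of Eq.~\eqref{eq:pf-pullback}. When $R=\varnothing$, the product is $1$.

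The main obstacle is the bookkeeping in the cross terms. I must check that the maps acting on $G_j$ for $j\in R$ cannot move an off-diagonal vacuum--data coherence into the support hit by $\mathcal V_{r_1}$. This holds because the $G_{r_1}$ factor is never touched, and the $r_1$-sector test alone already forces $u=v=r_1$. I would also verify that $\mathcal D_{G_j}$ applied to a cross term $\ket{\mathrm{vac}}\bra{a}$ or $\ket{a}\bra{\mathrm{vac}}$ gives zero, since such a term is traceless. This confirms that the surviving coefficient is exactly the stated one.
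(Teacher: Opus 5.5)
Your proposal is correct and follows the paper's route: expand $\mathcal VY\mathcal V^\dagger$ into the blocks $\mathcal V_uY\mathcal V_v^\dagger$, show that only $(u,v)=(r_1,r_1)$ survives the compression, and read off the factor $d_{G_j}^{-1}$ from $\bra{\mathrm{vac}}\id^{G_j}/d_{G_j}\ket{\mathrm{vac}}$ for each $j\in\N\setminus\{k,r_1,\ldots,r_n\}$. Your argument is also simpler than the paper's: since the superoperator never acts on $G_{r_1}$, vacuum--data orthogonality on $G_{r_1}$ alone removes every block with $u\ne r_1$ or $v\ne r_1$, so you do not need the paper's four-way split of off-diagonal and diagonal blocks by membership in the uncalled set, nor the tracelessness check in your last paragraph.
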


\begin{proof}
Using $V=(N-1)^{-1/2}\sum_{a\in\N\setminus\{k\}}V_a$, we expand
\begin{equation}
   \mathcal VY\mathcal V^\dagger
   =\frac{1}{N-1}\sum_{a,b\in\N\setminus\{k\}}\mathcal V_aY\mathcal V_b^\dagger.
   \label{eq:pf-route-cross}
\end{equation}

First consider the off-diagonal blocks $a\ne b$. If $a\in\N\setminus\{k,r_1,\ldots,r_n\}$, then, on $G_a$,
$\mathcal V_aY\mathcal V_b^\dagger$ has left support in the data sector $S_a$ and right support in the vacuum sector, so its partial trace over $G_a$ vanishes. If $b\in\N\setminus\{k,r_1,\ldots,r_n\}$, the partial trace over $G_b$ likewise vanishes. Thus
\begin{equation}
\mathcal D_{(\otimes_{j\in\N\setminus\{k,r_1,\ldots,r_n\}}A_{IO}^jG_j)F}
   (\mathcal V_aY\mathcal V_b^\dagger)=0
   \;
   (a\ne b,\ \{a,b\}\cap(\N\setminus\{k,r_1,\ldots,r_n\})\ne\varnothing).
   \label{eq:pf-route-offdiag-future}
\end{equation}
If $a,b\notin\N\setminus\{k,r_1,\ldots,r_n\}$, every completely depolarized $G_j$ lies in the vacuum sector in the isometric components $a,b,r_1$. These systems contribute only positive scalar factors, while the isometric ranges on the systems that are not depolarized satisfy
$V_{r_1}^\dagger V_a=\delta_{r_1a}\id^{A_{IO}^k}$ and
$V_b^\dagger V_{r_1}=\delta_{br_1}\id^{A_{IO}^k}$. Since $a\ne b$, these two factors cannot both be nonzero. Hence
\begin{equation}
   \mathcal V_{r_1}^\dagger
   \mathcal D_{(\otimes_{j\in\N\setminus\{k,r_1,\ldots,r_n\}}A_{IO}^jG_j)F}
   (\mathcal V_aY\mathcal V_b^\dagger)
   \mathcal V_{r_1}=0
   \;(a\ne b,\ a,b\notin\N\setminus\{k,r_1,\ldots,r_n\}).
   \label{eq:pf-route-offdiag-called}
\end{equation}
Equations~\eqref{eq:pf-route-offdiag-future}--\eqref{eq:pf-route-offdiag-called} therefore eliminate all off-diagonal encoding blocks.

Next consider the diagonal blocks $a=b\ne r_1$. If $a\notin\N\setminus\{k,r_1,\ldots,r_n\}$, then $G_a$ is not depolarized by the future depolarization map. On this system, isometric component $a$ lies in the data sector and component $r_1$ in the vacuum sector, so compression on both sides annihilates the block. If $a\in\N\setminus\{k,r_1,\ldots,r_n\}$, then $G_{r_1}$ is not depolarized by that map. On this system, component $a$ lies in the vacuum sector and component $r_1$ in the data sector, so compression on both sides again annihilates the block. The only surviving block in Eq.~\eqref{eq:pf-route-cross} is therefore $(a,b)=(r_1,r_1)$.

In this block, for every $j\in\N\setminus\{k,r_1,\ldots,r_n\}$, $G_j$ is in the vacuum state. Applying to $G_j$ the map $\mathcal D_{G_j}$ and then taking the matrix element with the vacuum factor of $V_{r_1}$ gives $\bra{\mathrm{vac}}\frac{\id^{G_j}}{d_{G_j}}\ket{\mathrm{vac}}
   =\frac1{d_{G_j}}.$
Depolarizing maps on distinct systems commute, and $\mathcal V_{r_1}$ acts as the identity on the original systems
$A_{IO}^{\N\setminus\{k,r_1,\ldots,r_n\}}F$. Therefore
\begin{equation}
\begin{aligned}
&\mathcal V_{r_1}^\dagger
  \mathcal D_{(\otimes_{j\in\N\setminus\{k,r_1,\ldots,r_n\}}A_{IO}^jG_j)F}
  (\mathcal VY\mathcal V^\dagger)
  \mathcal V_{r_1}
\\
&\;=
\frac1{N-1}\left(\prod_{j\in\N\setminus\{k,r_1,\ldots,r_n\}}\frac1{d_{G_j}}\right)
\mathcal D_{A_{IO}^{\N\setminus\{k,r_1,\ldots,r_n\}}F}Y.
\end{aligned}
   \label{eq:pf-joint-depol}
\end{equation}
When $\N\setminus\{k,r_1,\ldots,r_n\}=\varnothing$, the product equals $1$ by convention, and the same conclusion follows directly from
$V_{r_1}^\dagger V_a=\delta_{r_1a}\id^{A_{IO}^k}$, which selects the unique surviving block.

Finally, $A_O^{r_n}$ is disjoint from the systems depolarized by the future depolarization map, and $\mathcal V_{r_1}$ acts as the identity on
$A_O^{r_n}$. We can therefore commute
$\mathcal I-\mathcal D_{A_O^{r_n}}$ through the compression on both sides and apply it to
Eq.~\eqref{eq:pf-joint-depol}, obtaining Eq.~\eqref{eq:pf-pullback}.
\end{proof}

\begin{proposition}[Inductive construction of a common decomposition into components associated with complete orders]
\label{prop:pf-Hm}
For any $N\ge1$, 
if $X\in\ECS_N^{+,P\to F},$
then there exist PSD operators indexed by each full permutation $\pi\in\Pi_{\N}$, denoted by $T_\pi$, such that $X=\sum_{\pi\in\Pi_{\N}}T_\pi,$
and, for every nonempty ordered sequence $h$,
\begin{equation}
   (\mathcal I-\mathcal D_{A_O^{\ell(h)}})
   \mathcal D_{A_{IO}^{\N\setminus\{k_1,\ldots,k_n\}}F}
   \left(\sum_{\pi\succeq h}T_\pi\right)=0.
   \label{eq:pf-H-prefix}
\end{equation}
\end{proposition}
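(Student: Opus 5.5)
We argue by induction on $N$, following the route sketched in the main text. For $N=1$ the only permutation is $(k)$. We set $T_{(k)}:=X$. The single constraint, for $h=(k)$, is the $\mathcal K=\{k\}$ instance of Eq.~\eqref{eq:basic-process-validity}, which holds because $X\in\mathsf{Proc}_1^{+,P\to F}$.

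For $N\ge2$ we assume the statement for $N-1$ parties, for arbitrary finite-dimensional input spaces and fixed $P,F$. Proposition~\ref{lem:pf-uniform} supplies an extension-independent first-level decomposition $X=\sum_kX_{(k)}$. If $X_{(k)}=0$ we set $T_\pi:=0$ for all $\pi\succeq(k)$. Otherwise we apply Lemma~\ref{lem:pf-route}. It gives $Z_k=\beta\mathcal VX_{(k)}\mathcal V^\dagger\in\ECS_{N-1}^{+,P\to F}$ on the remaining slots, whose inputs are $A_I^rG_r$. The induction hypothesis then yields PSD components $Z_\pi$, indexed by permutations $\pi$ of $\N\setminus\{k\}$ and relabelled as $\pi\succeq(k)$. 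They sum to $Z_k$ and satisfy the prefix constraints on the extended space, namely
\[
(\mathcal I-\mathcal D_{A_O^{r_n}})\mathcal D_{(\otimes_{j\notin\{k,r_1,\ldots,r_n\}}A_{IO}^jG_j)F}Z_{(k,h)}=0
\]
for every nonempty $h=(r_1,\ldots,r_n)$ in $\N\setminus\{k\}$.

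Lemma~\ref{lem:pf-support} then gives unique $X_\pi:=\beta^{-1}\mathcal V^\dagger Z_\pi\mathcal V\succeq0$ with $Z_\pi=\beta\mathcal VX_\pi\mathcal V^\dagger$ and $\sum_{\pi\succeq(k)}X_\pi=X_{(k)}$. Prefix sums commute with this compression, so $Z_{(k,h)}=\beta\mathcal VX_{(k,h)}\mathcal V^\dagger$. We insert $Y=X_{(k,h)}$ into Lemma~\ref{lem:pf-compress}. Its left-hand side is $\beta^{-1}\mathcal V_{r_1}^\dagger(\cdots)\mathcal V_{r_1}$ applied to the vanishing extended constraint, so it is zero. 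Since the scalar prefactor on the right-hand side is strictly positive, we obtain
\[
(\mathcal I-\mathcal D_{A_O^{r_n}})\mathcal D_{A_{IO}^{\N\setminus\{k,r_1,\ldots,r_n\}}F}X_{(k,h)}=0,
\]
which is Eq.~\eqref{eq:pf-H-prefix} for the prefix $(k,h)$. For the length-one prefix $(k)$, the required constraint is the $\mathcal K=\{k\}$ validity condition of $X_{(k)}\in\mathsf{Proc}_N^{+,P\to F}$.

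Setting $T_\pi:=X_\pi$ and summing over $k$ gives $X=\sum_\pi T_\pi$. Every nonempty $h$ has a unique first label $k$, and its prefix sum is exactly $X_{(k,h')}$, which we have just controlled.

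The main obstacle is the bookkeeping needed to apply the induction hypothesis legitimately. The hypothesis must be applied to $Z_k$ viewed as an $(N-1)$-party process on enlarged input spaces $A_I^rG_r$. We must therefore check that the hypothesis is formulated uniformly over input dimensions, and that the constraints it returns are the depolarizations over $A_{IO}^jG_j$ that appear in Lemma~\ref{lem:pf-compress}. The compression identity itself is already established by that lemma. The other delicate point is that Lemma~\ref{lem:pf-compress} uses $\mathcal V_{r_1}$, and $r_1$ depends on $h$; this is harmless because each constraint is pulled back separately.
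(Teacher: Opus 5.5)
Your proposal is correct and follows the paper's proof essentially step for step. The shared steps are: induction on $N$ with $T_{(k)}:=X$ as the base case; the extension-independent first-level decomposition of Proposition~\ref{lem:pf-uniform}; the maximally entangled conditioning of Lemma~\ref{lem:pf-route}, which gives $Z_k\in\ECS_{N-1}^{+,P\to F}$ on inputs $A_I^rG_r$; pullback of the components via Lemma~\ref{lem:pf-support} and of the prefix constraints via Lemma~\ref{lem:pf-compress}; the length-one constraint from $X_{(k)}\in\mathsf{Proc}_N^{+,P\to F}$; and zero branches set to zero. The bookkeeping points you flag (the induction hypothesis holding for arbitrary input dimensions, and each constraint pulled back separately with its own $\mathcal V_{r_1}$) are handled in the paper the same way you handle them.
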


\begin{proof}
We proceed by induction on $N$.

Denote the sole external label by $k$. By Definition~\ref{def:pf-direct-ecs}, the only first-level component is $X$, and
$X\in\mathsf{Proc}_1^{+,P\to F}$. Taking in Eq.~\eqref{eq:basic-process-validity}
$\mathcal K=\{k\}$ gives
$(\mathcal I-\mathcal D_{A_O^k})\mathcal D_FX=0.$
Setting the PSD operator corresponding to the only complete order to $T_{(k)}:=X$ proves the claim.

Let $N\ge2$ and assume the proposition holds for arbitrary input-output spaces and $P,F$ with $N-1$ external operations. Thus every
$\ECS_{N-1}^{+,P\to F}$ process admits an expansion in PSD operators indexed by complete orders and satisfying the corresponding sum and prefix constraints. For
$X\in\ECS_N^{+,P\to F}$,
Proposition~\ref{lem:pf-uniform} gives a fixed decomposition
\begin{equation}
   X=\sum_{k\in\N}X_{(k)}.
   \label{eq:pf-ind-firstsplit}
\end{equation}
Fix $k$. If $X_{(k)}\ne0$, Lemma~\ref{lem:pf-route} gives
\begin{equation}
   Z_k=\beta\mathcal V X_{(k)}\mathcal V^\dagger
   \in\ECS_{N-1}^{+,P\to F}.
   \label{eq:pf-ind-Z}
\end{equation}
Here $\ECS_{N-1}^{+,P\to F}$ acts on the remaining slots labeled by $\N\setminus\{k\}$, of which there are $N-1$.
By the induction hypothesis, prepending $k$ to each permutation of the remaining parties gives components indexed by full permutations $\pi\in\Pi_\N$ with $\pi\succeq(k)$ such that
\begin{equation}
   Z_k=\sum_{\pi\succeq(k)}Z_\pi,
   \; Z_\pi\succeq0,
   \label{eq:pf-ind-Zcomponents}
\end{equation}
and, for every nonempty ordered sequence in $\N\setminus\{k\}$,
$h=(r_1,\ldots,r_n)$, we have
\begin{equation}
   (\mathcal I-\mathcal D_{A_O^{r_n}})
   \mathcal D_{(\otimes_{j\in\N\setminus\{k,r_1,\ldots,r_n\}}A_{IO}^jG_j)F}
   \left(\sum_{\pi\succeq(k,h)}Z_\pi\right)=0.
   \label{eq:pf-ind-expanded-prefix}
\end{equation}
Since $0\preceq Z_\pi\preceq Z_k$,
Lemma~\ref{lem:pf-support} gives, for each $\pi\succeq(k)$, a unique $X_\pi$ such that
\begin{equation}
   Z_\pi=\beta\mathcal V X_\pi\mathcal V^\dagger,
   \; X_\pi\succeq0,
   \;
   X_{(k)}=\sum_{\pi\succeq(k)}X_\pi.
   \label{eq:pf-ind-pull-components}
\end{equation}
Define the partial sum specified by the entire prefix $(k,h)$ as $X_{(k,h)}:=\sum_{\pi\succeq(k,h)}X_\pi$.
Then
\begin{equation}
   \sum_{\pi\succeq(k,h)}Z_\pi
   =\beta\mathcal V X_{(k,h)}\mathcal V^\dagger.
   \label{eq:pf-Zh-Xh}
\end{equation}
Substitute this into Eq.~\eqref{eq:pf-ind-expanded-prefix} and compress by applying
$\mathcal V_{r_1}^\dagger,\mathcal V_{r_1}$ on the left and right, respectively. Lemma~\ref{lem:pf-compress} gives
\begin{equation}
   0=
   \frac{\beta}{|\N\setminus\{k\}|} \displaystyle\prod_{j\in\N\setminus\{k,r_1,\ldots,r_n\}}d_{G_j}^{-1}
   (\mathcal I-\mathcal D_{A_O^{r_n}})
   \mathcal D_{A_{IO}^{\N\setminus\{k,r_1,\ldots,r_n\}}F}X_{(k,h)}.
   \label{eq:pf-compress-zero}
\end{equation}
Since the prefactor is strictly positive, we obtain
\begin{equation}
   (\mathcal I-\mathcal D_{A_O^{r_n}})
   \mathcal D_{A_{IO}^{\N\setminus\{k,r_1,\ldots,r_n\}}F}X_{(k,h)}=0.
   \label{eq:pf-original-local-prefix}
\end{equation}
Equation~\eqref{eq:pf-original-local-prefix} gives the constraint for the global ordered sequence $(k,r_1,\ldots,r_n)$. By Proposition~\ref{lem:pf-uniform},
$X_{(k)}\in\mathsf{Proc}_N^{+,P\to F}$. Taking in Eq.~\eqref{eq:basic-process-validity}
$\mathcal K=\{k\}$ gives the ordered-sequence constraint of length $1$,
\begin{equation}
   (\mathcal I-\mathcal D_{A_O^k})
   \mathcal D_{A_{IO}^{\N\setminus\{k\}}F}X_{(k)}=0.
   \label{eq:pf-length1-prefix}
\end{equation}

If $X_{(k)}=0$, simply set, for all $\pi\succeq(k)$, $X_\pi=0$. All corresponding ordered-sequence constraints then hold automatically.
Repeat this construction for all $k\in\N$ and define, for each full permutation, $T_\pi:=X_\pi\succeq0.$
Equations~\eqref{eq:pf-ind-firstsplit} and \eqref{eq:pf-ind-pull-components}, together with the preceding convention for zero components, give
\begin{equation}
   \sum_{\pi\in\Pi_{\N}}T_\pi=\sum_{k\in\N}\sum_{\pi\succeq(k)}X_\pi
   =\sum_{k\in\N}X_{(k)}=X.
   \label{eq:pf-global-permutation-sum}
\end{equation}
Equation~\eqref{eq:pf-length1-prefix} covers ordered sequences of length $1$,
and Eq.~\eqref{eq:pf-original-local-prefix} covers all longer ordered sequences. Hence
Eq.~\eqref{eq:pf-H-prefix} holds for every nonempty ordered sequence.
\end{proof}

\begin{proof}[Proof of Theorem~\ref{thm:pf-ECS-QCCC}]
(i)$\Longrightarrow$(ii) follows from Proposition~\ref{prop:pf-Hm} and the QC-CC characterization in Proposition~\ref{prop:basic-pf-qccc}. We prove (ii)$\Longrightarrow$(i).

Fix the components of $W\in\QCCC_N^{P\to F}$ corresponding to complete orders, denoted by $T_\pi$. Choose arbitrary input ancillas
$E_1,\ldots,E_N$ and a normalized shared state $\rho$. Incorporate $\rho$ into the initial internal map and transmit each
$E_k$ through the circuit, obtaining the extended process $W\otimes\rho=\sum_\pi(T_\pi\otimes\rho)$. This extension also follows from the reduced-history constraints: for $h=(k_1,\ldots,k_n)$ set
\[
\widetilde Q_h:=Q_h\otimes\rho_{E_{k_1}\cdots E_{k_n}},
\qquad \widetilde T_\pi:=T_\pi\otimes\rho,
\]
where the subscript on $\rho$ denotes its reduced state. Tracing out $A_I^jE_j$ at a successor history uses $\Tr_{E_j}\rho_{E_{k_1}\cdots E_{k_n}E_j}=\rho_{E_{k_1}\cdots E_{k_n}}$. Hence the initial, recursive, and terminal equations in Proposition~\ref{prop:basic-pf-qccc} are preserved.
Under the conditioning convention of Eq.~\eqref{eq:basic-condition}, each fixed $C_k$ corresponds to inserting the allowed quantum operation with CJ matrix $C_k^T$ into this circuit.

For a sequence of parties already called, $h=(k_1,\ldots,k_r)$, and the corresponding quantum operations
$\boldsymbol C_h=(C_{k_1},\ldots,C_{k_r})$, write
\begin{equation}
Y_{h,\boldsymbol C_h}:=
\sum_{\pi\succeq h}(T_\pi\otimes\rho)_{|\boldsymbol C_h},
\;
m:=|\N\setminus\{k_1,\ldots,k_r\}|=N-r.
\label{eq:pf-qccc-ecs-future}
\end{equation}
This operator is PSD. After arbitrary CPTP CJ matrices $M_j$ are inserted into the remaining slots, each QC-CC layer sums over all
future classical outcomes and preserves the trace. Thus there exists, depending only on $(h,\boldsymbol C_h)$, an operator
$Q_{h,\boldsymbol C_h}\in\Pos(P)$ such that
\begin{equation}
\Tr_F\!\left[
   \left(\bigotimes_{j\in\N\setminus\{k_1,\ldots,k_r\}}M_j\right)
   *Y_{h,\boldsymbol C_h}
\right]
=Q_{h,\boldsymbol C_h}
\;\text{for all CPTP }M_j.
\label{eq:pf-qccc-ecs-trace-effect-rewrite}
\end{equation}
Proposition~\ref{prop:pf-validity} therefore gives
$Y_{h,\boldsymbol C_h}\in\mathsf{Proc}_m^{+,P\to F}$.

We proceed by induction on $m$. For $m=0$, $Y_{h,\boldsymbol C_h}\in\Pos(P\otimes F)=
\CS_0^{+,P\to F}$. If $m\ge1$, let
\[
Y_{h,\boldsymbol C_h}^{(j)}:=
\sum_{\pi\succeq(h,j)}(T_\pi\otimes\rho)_{|\boldsymbol C_h},
\;
Y_{h,\boldsymbol C_h}=\sum_{j\in\N\setminus\{k_1,\ldots,k_r\}}
Y_{h,\boldsymbol C_h}^{(j)}.
\]
Every component is PSD, and
$(Y_{h,\boldsymbol C_h}^{(j)})_{|C_j}=Y_{(h,j),(\boldsymbol C_h,C_j)}$.
The induction hypothesis places the latter in
$\CS_{m-1}^{+,P\to F}$. Using the validity of the total operator, Proposition~\ref{prop:pf-recursive-completion} then gives
$Y_{h,\boldsymbol C_h}\in\CS_m^{+,P\to F}$. Taking $h=\varnothing$ yields
$W\otimes\rho\in\CS_N^{+,P\to F}$. Since the extension and $\rho$ are arbitrary,
$W\in\ECS_N^{P\to F}$.
\end{proof}

\section{Consistency with the ECS definition for one-dimensional $P,F$}
\label{sec:pf-consistency}
This section proves that, when the same operator is viewed as an ordinary
$(N+2)$-party process, the direct recursive definition for general
$P,F$ agrees exactly with the ECS definition for one-dimensional $P,F$.

Represent the global past $P$ and global future $F$ as two ordinary parties $\mathsf P,\mathsf F$, respectively:
\begin{equation}
   A_I^{\mathsf P}\simeq\mathbb C,\; A_O^{\mathsf P}=P,
   \;
   A_I^{\mathsf F}=F,\; A_O^{\mathsf F}\simeq\mathbb C.
   \label{eq:pf-participant-representation}
\end{equation}
Then
\begin{equation}
   A_I^{\mathsf P}\otimes A_O^{\mathsf P}\otimes
   \bigotimes_{k\in\N}(A_I^k\otimes A_O^k)
   \otimes A_I^{\mathsf F}\otimes A_O^{\mathsf F}
   \simeq
   P\otimes A_{IO}^{\N}\otimes F.
   \label{eq:pf-participant-space}
\end{equation}
The same CJ matrix admits two interpretations: $W$ is a $P\to F$ supermap, and $\widehat W$ is an ordinary $(N+2)$-party process matrix.
For this CJ correspondence between the global-past/global-future representation and ordinary multipartite processes, see Wechs et al.~\cite{wechs2021QuantumCircuitsClassical}.

With the slot set $\N\cup\{\mathsf P,\mathsf F\}$ and one-dimensional global systems outside the slots, Eq.~\eqref{eq:basic-process-validity} gives the ordinary process-validity conditions for $\widehat W$. The original $P,F$ remain the output of $\mathsf P$ and the input of $\mathsf F$, respectively. Party $\mathsf P$, with a one-dimensional input, is compatible with acting first, and party $\mathsf F$, with a one-dimensional output, is compatible with acting last~\cite{wechs2019DefinitionCharacterisationMultipartite}. If $\widehat W\in\QCCC_{N+2}$, Proposition~\ref{prop:pf-qccc-endpoint-reordering} below further guarantees a circuit realization in which every nonzero complete call sequence starts with $\mathsf P$ and ends with $\mathsf F$. Applying the ordinary multipartite ECS definition to $\widehat W$ amounts to testing
\begin{equation}
   \widehat W\in\ECS_{N+2},
   \label{eq:pf-flat-ecs-Nplus2}
\end{equation}
where $\ECS_{N+2}$ uses precisely the ordinary multipartite definition of Sec.~\ref{sec:foundations}~\cite{wechs2019DefinitionCharacterisationMultipartite}. We retain the notation $\ECS_{N+2}$ for this representation.

\begin{proposition}[Reordering the $P,F$ parties to the first and last positions]
\label{prop:pf-qccc-endpoint-reordering}
\label{lem:pf-qccc-endpoint-reordering}
Let $N\ge1$. Under the natural identification in Eqs.~\eqref{eq:pf-participant-representation}--\eqref{eq:pf-participant-space},
\begin{equation}
   \widehat W\in\QCCC_{N+2}
   \;\text{iff}\;
   W\in\QCCC_N^{P\to F}.
   \label{eq:pf-qccc-participant-equivalence}
\end{equation}
The left-hand side treats
$(A_I^{\mathsf P},A_O^{\mathsf P})=(\mathbb C,P)$ and
$(A_I^{\mathsf F},A_O^{\mathsf F})=(F,\mathbb C)$
as two parties. If the left-hand side holds, the same $\widehat W$ admits a QC-CC realization in which every nonzero complete call sequence has the form
\begin{equation}
   (\mathsf P,\pi_1,\ldots,\pi_N,\mathsf F),
   \; \pi\in\Pi_{\N}.
   \label{eq:pf-endpoint-order-sequence}
\end{equation}
\end{proposition}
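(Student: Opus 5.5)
The plan is to work entirely with the operator characterization of QC-CC in Proposition~\ref{prop:basic-pf-qccc}(iii), applied both to the $N$-party $P\to F$ setting and to the $(N+2)$-party setting with trivial global systems.

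\emph{Direction ($\Leftarrow$).} Given $W\in\QCCC_N^{P\to F}$ with components $\{T_\pi\}_{\pi\in\Pi_\N}$, define $(N+2)$-party components by $\widehat T_{(\mathsf P,\pi,\mathsf F)}:=T_\pi$ and set all other complete orders to zero. The prefix constraints are then checked directly.
\begin{itemize}
\item For $h=(\mathsf P)$, the constraint reads $(\I-\D_P)\D_{A_{IO}^\N F}W=0$. This is exactly the initial normalization $\Tr_{A_{IO}^\N F}W\propto\id^P$.
\item For $h=(\mathsf P,k_1,\ldots,k_n)$, the constraint coincides with Eq.~\eqref{eq:pf-main-prefix}, since $A_{IO}^{\mathsf F}=F$.
\item For the full sequence ending in $\mathsf F$, the constraint involves $\I-\D_{A_O^{\mathsf F}}$ with $A_O^{\mathsf F}\simeq\mathbb C$, which vanishes identically.
\end{itemize}
Prefixes not starting with $\mathsf P$ have zero prefix sum. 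Hence $\widehat W\in\QCCC_{N+2}$, and this realization already has the endpoint form of Eq.~\eqref{eq:pf-endpoint-order-sequence}.

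\emph{Direction ($\Rightarrow$).} Start from a decomposition $\{\widehat T_\sigma\}_{\sigma\in\Pi_{\N\cup\{\mathsf P,\mathsf F\}}}$. Define $T_\pi:=\sum_{\sigma|_\N=\pi}\widehat T_\sigma$, where $\sigma|_\N$ is $\sigma$ with $\mathsf P,\mathsf F$ deleted. Then $W=\sum_\pi T_\pi$. It remains to verify Eq.~\eqref{eq:pf-main-prefix} for each $h=(k_1,\ldots,k_n)$ in $\N$, and this is the main obstacle.

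I would not try to manipulate the linear constraints directly. Instead I would pass to the reduced-history operators $Q$ of Proposition~\ref{prop:basic-pf-qccc}(ii) and argue operationally through the circuit of Definition~\ref{def:basic-qccc}.

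\emph{Reordering $\mathsf P$.} Party $\mathsf P$ has trivial input, so the instrument feeding $\mathsf P$ outputs only into memory. The instrument after $\mathsf P$ receives $P$ and forwards it. In the reordered circuit, $\mathsf P$ is called first, and its output $P$ is stored in an extra memory register carried along all branches, as in Fig.~\ref{fig:pf-endpoint-reordering}. All instruments that originally acted before $\mathsf P$ are kept unchanged. At the moment the original circuit would call $\mathsf P$, the stored $P$ is swapped in instead.

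\emph{Reordering $\mathsf F$.} Party $\mathsf F$ has trivial output, so calling it amounts to the circuit discarding $F$ while continuing on memory. In the reordered circuit, $F$ is stored in memory at that moment and emitted only at the end.

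\emph{Checks.} Trace preservation of each new instrument follows from that of the old ones, because inserting an identity on stored registers preserves the TP conditions of Eqs.~\eqref{eq:basic-qccc-tp-initial}--\eqref{eq:basic-qccc-tp-final}. The link products telescope to the same $\widehat W$, because only wires are relabeled. Grouping branches by their relative order on $\N$ then yields a QC-CC realization of $W$ as a $P\to F$ circuit. The most delicate point is the case where $\mathsf P$ is not first: I would need to verify that the initial instrument, which originally had no input, becomes a valid instrument taking $P$ as input. The repair is to tensor the original initial instrument with the identity channel on $P$ into the storage register; this gives an instrument taking $P$ as input whose TP condition becomes $\id^P$, matching Eq.~\eqref{eq:basic-qccc-tp-initial}.
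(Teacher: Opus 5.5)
Your ($\Leftarrow$) direction is correct and differs from the paper only in formalism. The paper writes down explicit $(N+2)$-party circuit maps ($\widehat M_{\varnothing}^{\to\mathsf P}=1$, $\widehat M_{(\mathsf P)}^{\to k}=M_{\varnothing}^{\to k}$, \dots) and inherits trace preservation from them. You instead set $\widehat T_{(\mathsf P,\pi,\mathsf F)}:=T_\pi$ and check the prefix constraints of Proposition~\ref{prop:basic-pf-qccc}(iii), which is shorter, and your three cases are exactly the right ones.

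Your ($\Rightarrow$) direction follows the paper's route: call $\mathsf P$ first and store $P$, postpone $\mathsf F$ and store $F$, then group branches by their restriction to $\N$. As written, however, it omits the bookkeeping that makes this reordering well defined.
\begin{itemize}
\item The outcome ``$\to\mathsf P$'' of an original instrument cannot be kept as is. No external call occurs in between, so inside one instrument of the new circuit it must be composed with the original successor instrument acting on the stored $P$. These are the paper's blocks $\widehat M_\pi^{\to s}*\widehat M_{(\pi,s)}^{\to y}$; the analogous merging is needed for $\mathsf F$.
\item Because the order is dynamic, the new memory must carry orthogonal sectors recording whether, and at which position $r$, the virtual call occurred. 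The subsequent original instruments depend on the full history $\pi[\mathsf P.r]$. These sectors are the direct sums defining $\eta_{q+1}$ and $\zeta_q$ in the paper.
\end{itemize}
Your ``repair'' of the root must be read with this in mind. The original root tensored with the identity on $P$, restricted to outcomes in $\N$, has trace only $\id^P\bigl(1-\Tr\widehat M_\varnothing^{\to\mathsf P}-\Tr\widehat M_\varnothing^{\to\mathsf F}\bigr)$. The missing weight is supplied precisely by the merged blocks.

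The step you chose to avoid is in fact short at the operator level. Doing it there would make your proof uniform, and the endpoint form would follow by composing the two directions. Normalization of $W$ comes from the $\{\mathsf P\}$ validity constraint of $\widehat W$. For $h=(k_1,\ldots,k_n)$ in $\N$, write $\sum_{\pi\succeq h}T_\pi=\sum_g\widehat T_g$. Here $g$ runs over interleavings of $h$ with a subset of $\{\mathsf P,\mathsf F\}$ inserted before $k_n$.
\begin{itemize}
\item If $\mathsf P\in g$, the constraint for $g$ is already the required one, after applying a commuting $\D_F$ when $\mathsf F\in g$.
\item If $\mathsf P\notin g$, every completion calls $\mathsf P$ later, so $\widehat T_g=\sum_x\widehat T_{(g,x,\mathsf P)}$. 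Apply $\D_{A_{IO}^x}$ to the constraints for $(g,x,\mathsf P)$ and sum over $x$. This shows that $\D_{A_{IO}^{\N\setminus h}F}\widehat T_g$ is invariant under $\D_P$, so the extra $\D_P$ in the constraint for $g$ can be dropped.
\end{itemize}
The paper's circuit surgery is longer, but in exchange it produces an explicit reordered realization of the same $\widehat W$.
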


\begin{proof}
First suppose that $\widehat W\in\QCCC_{N+2}$ and fix an $(N+2)$-party QC-CC realization. Write $\widehat\N:=\{\mathsf P,\mathsf F\}\cup\N$, and denote the input and output spaces of each $x\in\widehat\N$ by $A_I^x,A_O^x$, respectively. After completing $q$ external slots, denote the internal memory by $\alpha_q$. For an incomplete history $h$ of length $q\ge1$, write the Choi matrix of the corresponding internal instrument element as
\[
\widehat M_h^{\to y}\in\Pos\!\bigl(A_O^{\ell(h)}\alpha_q A_I^y\alpha_{q+1}\bigr).
\]
At the root, $\widehat M_{\varnothing}^{\to y}\in\Pos(A_I^y\alpha_1)$, and the final internal map after a complete history $h$ is $\widehat M_h^{\rm fin}\in\Pos(A_O^{\ell(h)}\alpha_{N+2})$. In the reordering construction below, we also use $\pi=(\pi_1,\ldots,\pi_q)$ for a partial history that does not contain $x$, and write
\[
\pi[x.r]:=(\pi_1,\ldots,\pi_r,x,\pi_{r+1},\ldots,\pi_q),
\; 0\le r\le q.
\]
The proof has three steps.

\smallskip
\noindent\emph{(a) If $A_I^s\simeq\mathbb C$, then $s$ can be moved to the first position in every nonzero complete history.}
For each $j=2,\ldots,N+2$, take a copy $\bar A_{O,j}^s\simeq A_O^s$, and for each $q=2,\ldots,N+2$ and $0\le r\le q-2$, take a copy $\alpha_q^{(r)}\simeq\alpha_q$. The stage memories of the new circuit are
\begin{equation}
\eta_1:=\mathbb C,
\;
\eta_{q+1}:=\alpha_q\bar A_{O,q+1}^s
\oplus\bigoplus_{r=0}^{q-1}\alpha_{q+1}^{(r)}
\;(1\le q\le N+1).
\label{eq:pf-firstslot-eta-new}
\end{equation}
The first term is the pre-$s$ sector, which stores the output of the actual first call to $s$, namely $A_O^s$. The $r$th direct-sum term is the post-$s$ sector with index $r$, corresponding to the original history $\pi[s.r]$. The new root instrument has only the outcome $s$. Set $\widetilde M_{\varnothing}^{\to s}=1$. After calling $s$, for each $y\ne s$, the operator $\widetilde M_{(s)}^{\to y}\in\Pos(A_O^sA_I^y\eta_2)$ has only two types of nonzero blocks:
\begin{align}
\left.\widetilde M_{(s)}^{\to y}\right|_{A_O^s A_I^y\alpha_1\bar A_{O,2}^s}
&=|\id\rangle\!\rangle\langle\!\langle\id|_{A_O^s\bar A_{O,2}^s}
  \otimes\widehat M_{\varnothing}^{\to y},
\label{eq:pf-firstslot-initial-direct}\\
\left.\widetilde M_{(s)}^{\to y}\right|_{A_O^s A_I^y\alpha_2^{(0)}}
&=\widehat M_{\varnothing}^{\to s}*\widehat M_{(s)}^{\to y}.
\label{eq:pf-firstslot-initial-cross}
\end{align}
If $1\le q\le N$, $\pi=(\pi_1,\ldots,\pi_q)$ does not contain $s$, and $y\notin\{s,\pi_1,\ldots,\pi_q\}$, then $\widetilde M_{(s,\pi)}^{\to y}\in\Pos(A_O^{\pi_q}\eta_{q+1}A_I^y\eta_{q+2})$ has only three types of nonzero blocks:
\begin{align}
\left.\widetilde M_{(s,\pi)}^{\to y}\right|_{A_O^{\pi_q}\alpha_q\bar A_{O,q+1}^s A_I^y\alpha_{q+1}\bar A_{O,q+2}^s}
&=\widehat M_\pi^{\to y}\otimes
|\id\rangle\!\rangle\langle\!\langle\id|_{\bar A_{O,q+1}^s\bar A_{O,q+2}^s},
\label{eq:pf-firstslot-direct}\\
\left.\widetilde M_{(s,\pi)}^{\to y}\right|_{A_O^{\pi_q}\alpha_q\bar A_{O,q+1}^s A_I^y\alpha_{q+2}^{(q)}}
&=\widehat M_\pi^{\to s}*\widehat M_{(\pi,s)}^{\to y},
\label{eq:pf-firstslot-cross}\\
\left.\widetilde M_{(s,\pi)}^{\to y}\right|_{A_O^{\pi_q}\alpha_{q+1}^{(r)}A_I^y\alpha_{q+2}^{(r)}}
&=\widehat M_{\pi[s.r]}^{\to y}
\;(0\le r\le q-1).
\label{eq:pf-firstslot-passed}
\end{align}
In Eq.~\eqref{eq:pf-firstslot-cross}, the unprojected $A_O^s$ is identified with the stored copy $\bar A_{O,q+1}^s$ through the fixed isomorphism. If $|\pi|=N+1$, the nonzero blocks of the final map are
\begin{align}
\left.\widetilde M_{(s,\pi)}^{\rm fin}\right|_{A_O^{\pi_{N+1}}\alpha_{N+1}\bar A_{O,N+2}^s}
&=\widehat M_\pi^{\to s}*\widehat M_{(\pi,s)}^{\rm fin},
\nonumber\\
\left.\widetilde M_{(s,\pi)}^{\rm fin}\right|_{A_O^{\pi_{N+1}}\alpha_{N+2}^{(r)}}
&=\widehat M_{\pi[s.r]}^{\rm fin}
\;(0\le r\le N).
\label{eq:pf-firstslot-final}
\end{align}

On a pre-$s$ input $\rho_s\otimes Z$, summing the new instrument combines the original alternatives $y\ne s$ with the branch through $s$. Trace preservation at $\pi$ and $(\pi,s)$ therefore gives $(\Tr\rho_s)\bigl(\sum_{y\notin\{s,\pi_1,\ldots,\pi_q\}}\Tr[Z*\widehat M_\pi^{\to y}]+\Tr[Z*\widehat M_\pi^{\to s}]\bigr)=\Tr(\rho_s\otimes Z)$. Eq.~\eqref{eq:pf-firstslot-final} gives the terminal case. Post-$s$ sectors inherit the original maps, so linearity gives trace preservation on the effective-input cone.

Denote the branch contribution of a complete history $\sigma$ in the original circuit by
\[
\widehat W_\sigma:=
\widehat M_{\varnothing}^{\to\sigma_1}*
\widehat M_{(\sigma_1)}^{\to\sigma_2}*\cdots *
\widehat M_{(\sigma_1,\ldots,\sigma_{N+1})}^{\to\sigma_{N+2}}*
\widehat M_\sigma^{\rm fin}.
\]
Fix a permutation not containing $s$, denoted by $\pi\in\Pi_{\widehat\N\setminus\{s\}}$. The sector paths of the complete history $(s,\pi)$ in the new circuit are uniquely labeled by the original position of $s$, namely $r=0,\ldots,N+1$. Repeatedly applying
\[
\bigl(X\otimes|\id\rangle\!\rangle\langle\!\langle\id|_{UV}\bigr)*Y_V=X*Y_U
\]
eliminates the storage wires and gives the branch contribution $\widetilde W_{(s,\pi).r}=\widehat W_{\pi[s.r]}$ for each path. Thus
\begin{align}
\widetilde W_{(s,\pi)}
&=\sum_{r=0}^{N+1}\widehat W_{\pi[s.r]},
\label{eq:pf-firstslot-branch-r-decomposition}\\
\widetilde W
&=\sum_{\pi\in\Pi_{\widehat\N\setminus\{s\}}}
  \sum_{r=0}^{N+1}\widehat W_{\pi[s.r]}
=\sum_{\sigma\in\Pi_{\widehat\N}}\widehat W_\sigma
=\widehat W.
\label{eq:pf-firstslot-reorder-process-invariant}
\end{align}
The last equality uses the bijection $(\pi,r)\mapsto\pi[s.r]$. Hence moving a slot with a one-dimensional input to the first position leaves the process matrix unchanged.

\smallskip
\noindent\emph{(b) If $A_O^t\simeq\mathbb C$, then $t$ can be moved to the last position in every nonzero complete history.}
For each $j=2,\ldots,N+2$, take a copy $\bar A_{I,j}^t\simeq A_I^t$, and again take $\alpha_q^{(r)}\simeq\alpha_q$. The stage memories of the new circuit are
\begin{equation}
\zeta_0:=\mathbb C,
\;
\zeta_q:=\alpha_q\oplus
\bigoplus_{r=0}^{q-1}(\bar A_{I,q+1}^t\alpha_{q+1}^{(r)})
\;(1\le q\le N+1),
\;
\zeta_{N+2}:=\alpha_{N+2}\oplus\bigoplus_{r=0}^{N}\mathbb C^{(r)}.
\label{eq:pf-lastslot-zeta-new}
\end{equation}
Here $\mathbb C^{(r)}$ are mutually orthogonal one-dimensional sectors labeled by $r$.
The first term is the pre-$t$ sector. The $r$th direct-sum term is the post-$t$ sector with index $r$ and stores $A_I^t$, whose slot has not yet actually been called. These pre/post labels refer to the original circuit history. The reordered circuit has a different call order. For each $y\ne t$, the nonzero root blocks are
\begin{align}
\left.\widetilde N_{\varnothing}^{\to y}\right|_{A_I^y\alpha_1}
&=\widehat M_{\varnothing}^{\to y},
\nonumber\\
\left.\widetilde N_{\varnothing}^{\to y}\right|_{A_I^y\bar A_{I,2}^t\alpha_2^{(0)}}
&=\widehat M_{\varnothing}^{\to t}*\widehat M_{(t)}^{\to y}.
\label{eq:pf-lastslot-initial}
\end{align}
If $1\le q\le N$, $\pi$ does not contain $t$, and $y\notin\{t,\pi_1,\ldots,\pi_q\}$, then the nonzero blocks of $\widetilde N_\pi^{\to y}$ are
\begin{align}
\left.\widetilde N_\pi^{\to y}\right|_{A_O^{\pi_q}\alpha_q A_I^y\alpha_{q+1}}
&=\widehat M_\pi^{\to y},
\label{eq:pf-lastslot-direct}\\
\left.\widetilde N_\pi^{\to y}\right|_{A_O^{\pi_q}\alpha_q A_I^y\bar A_{I,q+2}^t\alpha_{q+2}^{(q)}}
&=\widehat M_\pi^{\to t}*\widehat M_{(\pi,t)}^{\to y},
\label{eq:pf-lastslot-cross}\\
\left.\widetilde N_\pi^{\to y}\right|_{A_O^{\pi_q}\bar A_{I,q+1}^t\alpha_{q+1}^{(r)}A_I^y\bar A_{I,q+2}^t\alpha_{q+2}^{(r)}}
&=|\id\rangle\!\rangle\langle\!\langle\id|_{\bar A_{I,q+1}^t\bar A_{I,q+2}^t}
\otimes\widehat M_{\pi[t.r]}^{\to y}
\;(0\le r\le q-1).
\label{eq:pf-lastslot-passed}
\end{align}
If $|\pi|=N+1$, the actual call to $t$ occurs only at this stage. Take
$\widetilde N_\pi^{\to t}\in\Pos(A_O^{\pi_{N+1}}\zeta_{N+1}A_I^t\zeta_{N+2})$, with nonzero blocks
\begin{align}
\left.\widetilde N_\pi^{\to t}\right|_{A_O^{\pi_{N+1}}\alpha_{N+1}A_I^t\alpha_{N+2}}
&=\widehat M_\pi^{\to t},
\nonumber\\
\left.\widetilde N_\pi^{\to t}\right|_{A_O^{\pi_{N+1}}\bar A_{I,N+2}^t\alpha_{N+2}^{(r)}A_I^t\mathbb C^{(r)}}
&=|\id\rangle\!\rangle\langle\!\langle\id|_{\bar A_{I,N+2}^tA_I^t}
\otimes\widehat M_{\pi[t.r]}^{\rm fin}
\;(0\le r\le N).
\label{eq:pf-lastslot-finalstep}
\end{align}
After calling $t$, take the final internal map on $\alpha_{N+2}$ to be $\widehat M_{(\pi,t)}^{\rm fin}$ and the map on each one-dimensional sector to be the scalar identity.

In the pre-$t$ sector, the new outcome sum combines the original alternatives $y\ne t$ with the branch through $t$, preserving the trace. In a post-$t$ sector, let $Y$ be an effective input and $\mathcal E$ the corresponding original outcome-summed map. Its marginal is an effective input of the original circuit with the trace map inserted at $t$, so $\Tr[(\mathcal I_{\bar A_{I,q+1}^t}\otimes\mathcal E)(Y)]=\Tr[\mathcal E(\Tr_{\bar A_{I,q+1}^t}Y)]=\Tr Y$.

Fix a permutation not containing $t$, denoted by $\pi$. Connecting the identity storage wires segment by segment gives $\widetilde W_{(\pi,t).r}=\widehat W_{\pi[t.r]}$. The total branch contribution for the fixed actual call sequence $(\pi,t)$ is therefore $\sum_{r=0}^{N+1}\widehat W_{\pi[t.r]}$. Summing over all $\pi$ and again using the bijection $(\pi,r)\mapsto\pi[t.r]$ shows that the reordered process matrix is still $\widehat W$.

\smallskip
\noindent\emph{(c) Reorder the $P,F$ parties to the first and last positions and establish the equivalence of the two representations.}
First take $s=\mathsf P$. Since $A_I^{\mathsf P}=\mathbb C$, step (a) moves $\mathsf P$ to the first position in every nonzero complete call sequence. Next take $t=\mathsf F$. Since $A_O^{\mathsf F}=\mathbb C$, step (b) moves $\mathsf F$ to the last position while preserving the relative order of all other labels. Thus the same $\widehat W$ has a QC-CC realization whose nonzero complete call sequences all have the form of Eq.~\eqref{eq:pf-endpoint-order-sequence}.

Consider this realization with reordered endpoints. Step (a) gives a one-dimensional root memory and makes $\mathsf P$ the only initial outcome. In step (b), the additional root block associated with an initial call to $\mathsf F$ is therefore zero. Restricting the root memory to its occupied one-dimensional sector gives $\widehat M_{\varnothing}^{\to\mathsf P}=1$. Denote the internal CJ operators and stage memories by $\widehat M$ and $\alpha_q$, with $\alpha_1\simeq\mathbb C$. Removing the two one-dimensional ports of $\mathsf P,\mathsf F$ defines a $P\to F$ circuit by
\begin{align}
M_{\varnothing}^{\to k}&:=\widehat M_{(\mathsf P)}^{\to k},
\nonumber\\
M_h^{\to j}&:=\widehat M_{(\mathsf P,h)}^{\to j},
\nonumber\\
M_\pi^{\rm fin}&:=\widehat M_{(\mathsf P,\pi)}^{\to\mathsf F}*
\widehat M_{(\mathsf P,\pi,\mathsf F)}^{\rm fin}.
\label{eq:pf-endpoint-open-maps}
\end{align}
These CJ operators act on $PA_I^k\alpha_2$, $A_O^{\ell(h)}\alpha_{|h|+1}A_I^j\alpha_{|h|+2}$, and $A_O^{\pi_N}\alpha_{N+1}F$, respectively. The root and intermediate trace-preservation conditions are inherited directly. At a complete history $\pi$, for an effective input $Z$, set $X:=Z*\widehat M_{(\mathsf P,\pi)}^{\to\mathsf F}$. This node has only the outcome $\mathsf F$, so $\Tr X=\Tr Z$. Moreover, the deterministic map inserted at $\mathsf F:F\to\mathbb C$ is $\Tr_F$, and the original final internal map preserves the trace on $\Tr_FX$. Hence
\[
\Tr[Z*M_\pi^{\rm fin}]
=\Tr[(\Tr_FX)*\widehat M_{(\mathsf P,\pi,\mathsf F)}^{\rm fin}]
=\Tr Z.
\]
This gives a valid $P\to F$ QC-CC. The link product of the internal maps along each complete call sequence equals the $(\mathsf P,\pi,\mathsf F)$ branch of the reordered circuit~\cite{wechs2021QuantumCircuitsClassical}. After removing the two one-dimensional ports, its process matrix is $W=\widehat W$, so $W\in\QCCC_N^{P\to F}$.

Conversely, if $W\in\QCCC_N^{P\to F}$, take a circuit from Definition~\ref{def:basic-qccc} and define
\begin{align}
\widehat M_{\varnothing}^{\to\mathsf P}&=1,
\nonumber\\
\widehat M_{(\mathsf P)}^{\to k}&=M_{\varnothing}^{\to k},
\nonumber\\
\widehat M_{(\mathsf P,h)}^{\to j}&=M_h^{\to j},
\nonumber\\
\widehat M_{(\mathsf P,\pi)}^{\to\mathsf F}&=M_\pi^{\rm fin},
\;
\widehat M_{(\mathsf P,\pi,\mathsf F)}^{\rm fin}=1.
\label{eq:pf-endpoint-ordinary-maps}
\end{align}
All other sequence blocks are zero. The original $M_\pi^{\rm fin}$ has output $F$, which becomes the input of party $\mathsf F$, and the additional memory is one-dimensional. The trace-preservation conditions at every stage are those of the original $P\to F$ circuit, and the link product for each sequence is unchanged. The resulting process matrix $\widehat W=W$ therefore has an $(N+2)$-party QC-CC realization, proving Eq.~\eqref{eq:pf-qccc-participant-equivalence}. For $N=1$, there are no intermediate sequences, and every formula above remains well defined.
\end{proof}

\begin{proposition}[Consistency between ECS for processes with nontrivial $P,F$
and the ECS definition for one-dimensional $P,F$]
\label{prop:pf-consistency-with-wechs}
Let $N\ge1$, let $W$ be a finite-dimensional deterministic $P\to F$ process matrix, and let $\widehat W$ denote the representation of $W$ obtained by treating
$P$ and $F$ as parties with one-dimensional ports, as in
Eq.~\eqref{eq:pf-participant-representation}.
The following three conditions are equivalent:
\begin{equation}
   W\in\ECS_N^{P\to F}
    \;\;\text{if and only if}\;\;
   W\in\QCCC_N^{P\to F}
    \;\;\text{if and only if}\;\;
   \widehat W\in\ECS_{N+2}.
   \label{eq:pf-unified-equivalence}
\end{equation}
In particular, the direct definition of ECS for processes with nontrivial $P,F$ used here agrees exactly, in the ordinary $(N+2)$-party representation of the same operator, with the ECS definition for
one-dimensional $P,F$.
\end{proposition}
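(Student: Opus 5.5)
The proof is a chain of three already-established equivalences. The first, $W\in\ECS_N^{P\to F}\Leftrightarrow W\in\QCCC_N^{P\to F}$, is exactly Theorem~\ref{thm:pf-ECS-QCCC}. For the second, Proposition~\ref{prop:pf-qccc-endpoint-reordering} gives $W\in\QCCC_N^{P\to F}\Leftrightarrow\widehat W\in\QCCC_{N+2}$. It therefore remains to show $\widehat W\in\QCCC_{N+2}\Leftrightarrow\widehat W\in\ECS_{N+2}$.

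For this last step, apply Theorem~\ref{thm:pf-ECS-QCCC} again, now to $\widehat W$ regarded as an $(N+2)$-party process with one-dimensional global past and future. The slot set is $\widehat\N=\N\cup\{\mathsf P,\mathsf F\}$, with ports $(\mathbb C,P)$ for $\mathsf P$ and $(F,\mathbb C)$ for $\mathsf F$. The theorem is stated for arbitrary finite-dimensional input and output spaces, including one-dimensional ones, so it yields $\widehat W\in\ECS_{N+2}^{\mathbb C\to\mathbb C}\Leftrightarrow\widehat W\in\QCCC_{N+2}^{\mathbb C\to\mathbb C}$. Corollary~\ref{cor:pf-flat-equivalence}(ii) identifies $\ECS_{N+2}^{\mathbb C\to\mathbb C}$ with the ordinary multipartite set $\ECS_{N+2}$. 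The remark after Proposition~\ref{prop:basic-pf-qccc} identifies $\QCCC_{N+2}^{\mathbb C\to\mathbb C}$ with ordinary $\QCCC_{N+2}$.

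Before invoking the theorem, I would check that $\widehat W$ is a deterministic process in the flat sense. Under the identification of Eq.~\eqref{eq:pf-participant-space}, the validity constraints of Eq.~\eqref{eq:basic-process-validity} for the slot set $\widehat\N$ include those indexed by $\mathcal K\ni\mathsf P$. For these I would use $\mathcal D_{A_O^{\mathsf P}}=\mathcal D_P$; the input $A_I^{\mathsf P}\simeq\mathbb C$ is trivial. Constraints involving $\mathsf F$ are vacuous because $A_O^{\mathsf F}$ is trivial, so $\mathcal I-\mathcal D_{A_O^{\mathsf F}}=0$. The normalization $\Tr\widehat W=D_O(\widehat\N)=d_P D_O(\N)$ follows from $\Tr_{A_{IO}^{\N}F}W=D_O(\N)\id^P$.

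I expect this validity translation to be the only delicate point. One must confirm that the flat $(N+2)$-party conditions follow from the $P\to F$ conditions, namely validity together with $\Theta_W=\id^P$; the reverse implication is not needed here. I would argue this operationally rather than by expanding the constraints. Inserting CPTP maps into the $N$ slots of $W$ gives a CPTP map $P\to F$. Feeding it an arbitrary state from $\mathsf P$ and discarding at $\mathsf F$ gives trace one. Since the arbitrary CPTP maps at $\mathsf P$ and $\mathsf F$ are exactly state preparations and traces, this is precisely the flat validity of $\widehat W$ by Proposition~\ref{prop:pf-validity}(i) in the flat setting. Chaining the three equivalences then proves Eq.~\eqref{eq:pf-unified-equivalence}.
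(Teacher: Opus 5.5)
Your proposal is correct and follows the paper's proof: Theorem~\ref{thm:pf-ECS-QCCC} gives the first equivalence, Proposition~\ref{prop:pf-qccc-endpoint-reordering} passes between $\QCCC_N^{P\to F}$ and $\QCCC_{N+2}$, and Theorem~\ref{thm:pf-ECS-QCCC} applied again to $\widehat W$ with one-dimensional global systems closes the chain. Your explicit check that $\widehat W$ is a deterministic ordinary $(N+2)$-party process, which is needed to invoke the theorem, is a useful addition; the paper only asserts this in the text before Proposition~\ref{prop:pf-qccc-endpoint-reordering}.
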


\begin{proof}
The first equivalence is precisely the main theorem, Theorem~\ref{thm:pf-ECS-QCCC}. If
$W\in\QCCC_N^{P\to F}$, Proposition~\ref{prop:pf-qccc-endpoint-reordering} gives
$\widehat W\in\QCCC_{N+2}$. For this ordinary $(N+2)$-party process, Theorem~\ref{thm:pf-ECS-QCCC} with one-dimensional global systems gives
$\widehat W\in\ECS_{N+2}$.

Conversely, if $\widehat W\in\ECS_{N+2}$, then
Theorem~\ref{thm:pf-ECS-QCCC} with one-dimensional global systems gives
$\widehat W\in\QCCC_{N+2}$. Proposition~\ref{prop:pf-qccc-endpoint-reordering}
then reinterprets the first and last parties, each with a one-dimensional port, as the global past and global future, yielding
$W\in\QCCC_N^{P\to F}$.

\end{proof}

\section{Probabilistic ECS and probabilistic QC-CC}

\label{chap:probabilistic-ecs}

\providecommand{\pCS}{\mathrm{pCS}}
\providecommand{\pECS}{\mathrm{pECS}}
\providecommand{\pQCCC}{\mathrm{pQC\mbox{-}CC}}

Fix a nonempty finite outcome set \(\mathcal R\). Write an outcome-indexed process family whose sum over outcomes is deterministic as
\(\mathbf W=(W^{[r]})_{r\in\mathcal R}\), and a general unnormalized outcome-indexed process family as
\(\mathbf X=(X^{[r]})_{r\in\mathcal R}\). Below, we use \emph{outcome family} as shorthand for the complete outcome-indexed process family.

\begin{definition}[Superinstruments and outcome elements]
\label{def:prob-process-superinstrument}
An outcome-indexed process family of PSD operators $\mathbf W=(W^{[r]})_{r\in\mathcal R}$, with $W^{[r]}\in\Pos(P\otimes A_{IO}^{\N}\otimes F)$, is called an \emph{superinstrument} if its sum over outcomes $\overline W:=\sum_{r\in\mathcal R}W^{[r]}$ is a deterministic $P\to F$ process matrix. Each $W^{[r]}$ is called an \emph{outcome element} or \emph{unnormalized process element} of the superinstrument and is not required to satisfy deterministic normalization individually. Given Choi operators $C_1,\ldots,C_N$ of external CP maps, denote the Choi operator of the global CP map for outcome $r$ by $\Lambda^{[r]}:=(C_1\otimes\cdots\otimes C_N)*W^{[r]}$. When all $C_k$ are CPTP, $\sum_r\Lambda^{[r]}=(C_1\otimes\cdots\otimes C_N)*\overline W$ is the Choi operator of a CPTP map from $P$ to $F$.
\end{definition}

\begin{definition}[Probabilistic QC-CC]
\label{def:pqccc-operational}
For $N=0$, $\pQCCC_0^{P\to F}$ consists of all quantum instruments from $P$ to $F$ with finitely many outcomes, namely, all outcome-indexed process families satisfying $W^{[r]}\succeq0$ whose sum $\sum_rW^{[r]}$ is the Choi operator of a CPTP map.

Let \(N\ge1\). A \emph{probabilistic quantum circuit with classical control of causal order} has the same initial and intermediate instruments as the deterministic QC-CC of Definition~\ref{def:basic-qccc}, while the final map after each complete history \(\pi\in\Pi_{\N}\) is replaced by a quantum instrument $\{\mathcal M_{\pi}^{\mathrm{fin}[r]}\}_{r\in\mathcal R}$. The outcome sum of every initial, intermediate, and final instrument obeys the corresponding effective-input trace-preservation condition of Definition~\ref{def:basic-qccc}~\cite{wechs2021QuantumCircuitsClassical}.

Using the same CJ notation, the joint contribution of a complete history
\(\pi=(\pi_1,\ldots,\pi_N)\) and a final outcome \(r\) is
\begin{equation}
   T_{\pi}^{[r]}
   :=
   M_{\varnothing}^{\to\pi_1}
   *M_{(\pi_1)}^{\to\pi_2}
   *\cdots*
   M_{(\pi_1,\ldots,\pi_{N-1})}^{\to\pi_N}
   *M_{\pi}^{\mathrm{fin}[r]},
\label{eq:pqccc-terminal-contribution}
\end{equation}
For $N=1$, there are no intermediate factors, and the expression reduces to $M_{\varnothing}^{\to k}*M_{(k)}^{\mathrm{fin}[r]}$.
The outcome element corresponding to \(r\) is
\begin{equation}
   W^{[r]}=\sum_{\pi\in\Pi_{\N}}T_{\pi}^{[r]}.
   \label{eq:pqccc-outcome-sum}
\end{equation}
The class of all outcome-indexed process families realizable by such circuits is denoted by
\(\pQCCC_N^{P\to F}\).
\end{definition}
\begin{lemma}[]
\label{lem:pqccc-outcome-lift}
Let a family of PSD operators \(T_\pi^{[r]}\) satisfy condition (ii) of
Proposition~\ref{prop:pqccc-history-characterization}.
Take an output register of dimension \(|\mathcal R|\), denoted by \(F'\), fix an orthogonal basis
\(\{\ket r^{F'}\}_{r\in\mathcal R}\), and define
\begin{equation}
   \widetilde W
   :=\sum_{r\in\mathcal R}W^{[r]}\otimes\proj r^{F'},
   \qquad
   \widetilde T_\pi
   :=\sum_{r\in\mathcal R}T_\pi^{[r]}\otimes\proj r^{F'}.
   \label{eq:pqccc-classical-register}
\end{equation}
Then \(\widetilde W\) is a deterministic QC-CC process with global future \(F\otimes F'\).
Apply the deterministic QC-CC realization construction, resolve the final map after each complete history according to
\(\{\proj r^{F'}\}_r\), and discard \(F'\). This yields a common set of root and intermediate internal instruments and final CP instrument elements
\(\{M_\pi^{\mathrm{fin}[r]}\}_r\), with outcome elements exactly equal to \(T_\pi^{[r]}\).
\end{lemma}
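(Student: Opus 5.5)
The plan is to reduce the claim to the deterministic characterization in Proposition~\ref{prop:basic-pf-qccc}, applied with enlarged global future $F\otimes F'$, and then to read off the outcome elements by a projective measurement on the classical register. I take condition (ii) of Proposition~\ref{prop:pqccc-history-characterization} to mean the following (I have not seen its exact statement). Each $T_\pi^{[r]}\succeq0$, $W^{[r]}=\sum_\pi T_\pi^{[r]}$, the outcome sum $\overline W$ is a deterministic $P\to F$ process, and the outcome-summed prefix sums $\overline T_h:=\sum_r\sum_{\pi\succeq h}T_\pi^{[r]}$ satisfy Eq.~\eqref{eq:basic-pf-sequence-constraint}.

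\emph{Step 1: $\widetilde W$ and $\widetilde T_\pi$ satisfy condition (iii) of Proposition~\ref{prop:basic-pf-qccc} for the future $FF'$.} Positivity and $\widetilde W=\sum_\pi\widetilde T_\pi$ are immediate from Eq.~\eqref{eq:pqccc-classical-register}. The key identity is that, for any $Y=\sum_r Y^{[r]}\otimes\proj r^{F'}$,
\[
\D_{RFF'}Y=\Bigl(\D_{RF}\textstyle\sum_rY^{[r]}\Bigr)\otimes\frac{\id^{F'}}{|\mathcal R|}.
\]
This holds because every constraint depolarizes the future, and $F'$ is part of it. I would apply this identity in two places.
\begin{itemize}
\item \emph{Prefix constraints.} With $Y=\widetilde T_h$, the constraint $(\I-\D_{A_O^{\ell(h)}})\D_{A_{IO}^{\N\setminus h}FF'}\widetilde T_h$ equals the corresponding expression for $\overline T_h$, tensored with $\id^{F'}/|\mathcal R|$. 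It therefore vanishes.
\item \emph{Validity of $\widetilde W$.} Every constraint in Eq.~\eqref{eq:basic-process-validity} for $\widetilde W$ likewise reduces to the same constraint for $\overline W$, which holds by hypothesis.
\end{itemize}
For normalization, $\Tr_{A_{IO}^\N FF'}\widetilde W=\Tr_{A_{IO}^\N F}\overline W=D_O(\N)\id^P$. Hence $\widetilde W$ is a deterministic $P\to FF'$ process in $\QCCC_N^{P\to FF'}$.

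\emph{Step 2: realization and resolution of the final maps.} The realization construction invoked in the proof of Proposition~\ref{prop:basic-pf-qccc} (Eq.~(B31) of Wechs et al.) produces a circuit with root and intermediate instruments and final maps $\widetilde M_\pi^{\rm fin}$ on $A_O^{\pi_N}\alpha_NFF'$. It reproduces exactly the branch contributions $\widetilde T_\pi$, not merely their sum. I would then set
\[
M_\pi^{{\rm fin}[r]}:=\Tr_{F'}\bigl[(\id\otimes\proj r^{F'})\widetilde M_\pi^{\rm fin}\bigr].
\]
This is $\widetilde M_\pi^{\rm fin}$ followed by the measurement $\{\proj r\}$ on $F'$ and discarding $F'$. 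Each element is CP. Their sum is $\Tr_{F'}\widetilde M_\pi^{\rm fin}$, which is CPTP onto $F$ because $\widetilde M_\pi^{\rm fin}$ is trace-preserving onto $FF'$ on effective inputs. Thus the outcome-summed terminal trace-preservation condition of Definition~\ref{def:pqccc-operational} holds. The root and intermediate instruments are unchanged, so their conditions carry over.

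\emph{Step 3: identification of the outcome elements.} Projection on $F'$ commutes with the link products over the other systems. The contribution in Eq.~\eqref{eq:pqccc-terminal-contribution} is therefore $\bra r\widetilde T_\pi\ket r_{F'}=T_\pi^{[r]}$, using the block-diagonal form of $\widetilde T_\pi$. Summing over $\pi$ gives $W^{[r]}$.

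The main obstacle is Step~2's reliance on the cited construction reproducing each $\widetilde T_\pi$ exactly, rather than only $\widetilde W$. If that is not directly available, I would rebuild the final maps from the reduced operators $Q_\pi$ of Eq.~\eqref{eq:basic-pf-Qdictionary}, using $\Tr_{FF'}\widetilde T_\pi=Q_\pi\otimes\id^{A_O^{\pi_N}}$, in the standard way via a square-root (pseudo-inverse) construction on the support of $Q_\pi$.
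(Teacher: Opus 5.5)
Your proposal is correct and follows essentially the same route as the paper. You lift the outcomes into the classical register $F'$, reduce every prefix, validity and normalization condition for $\widetilde W$ to the outcome-summed condition for $\overline W$ via the future-depolarization identity, and invoke the deterministic realization construction, which reproduces each $\widetilde T_\pi$ exactly (the paper relies on the same citation to Eq.~(B31) of Wechs et al.); you then recover $T_\pi^{[r]}$ by projecting $F'$ onto $\proj r$ after the final map. The only cosmetic difference is that you check validity of $\widetilde W$ constraint by constraint through Eq.~\eqref{eq:basic-process-validity}, whereas the paper argues via the output trace after inserting CPTP maps, i.e., condition (i) of Proposition~\ref{prop:pf-validity}.
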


\begin{proof}
Every $\widetilde T_\pi$ is PSD because $T_\pi^{[r]}\succeq0$ and the projectors $\proj r^{F'}$ are orthogonal. Set $T_\pi:=\sum_rT_\pi^{[r]}$. For each nonempty history $h=(k_1,\ldots,k_n)$, define $T_h:=\sum_{\pi\succeq h}T_\pi$ and $\widetilde T_h:=\sum_{\pi\succeq h}\widetilde T_\pi$. Tracing out the classical register gives $\Tr_{F'}\widetilde T_h=T_h$. Hence future depolarization obeys
\begin{equation}
\mathcal D_{A_{IO}^{\N\setminus\{k_1,\ldots,k_n\}}FF'}(\widetilde T_h)
=\frac{1}{d_{F'}}\mathcal D_{A_{IO}^{\N\setminus\{k_1,\ldots,k_n\}}F}(T_h)\otimes\id^{F'}.
\end{equation}
The prefix constraint on $T_h$ therefore implies
\begin{equation}
(\mathcal I-\mathcal D_{A_O^{\ell(h)}})
\mathcal D_{A_{IO}^{\N\setminus\{k_1,\ldots,k_n\}}FF'}(\widetilde T_h)=0.
\end{equation}
Moreover,
\begin{equation}
\Tr_{A_{IO}^{\N}FF'}\widetilde W
=\Tr_{A_{IO}^{\N}F}\overline W
=D_O(\N)\id^P.
\end{equation}
Thus the lifted decomposition satisfies both the homogeneous history constraints and deterministic normalization. Since $\Tr_{F'}\widetilde W=\overline W$, inserting arbitrary local CPTP maps yields a CP map whose output trace equals that of the deterministic process $\overline W$, namely $\id^P$. Hence $\widetilde W$ is a deterministic $P\to FF'$ process. Apply the realization construction in Proposition~\ref{prop:basic-pf-qccc} to these specified history operators and terminal components $\widetilde T_\pi$. It gives a deterministic $P\to F\otimes F'$ QC-CC circuit whose contribution for each complete history is $\widetilde T_\pi$, with common root and intermediate instruments and final maps with output $FF'$. After each final map, measuring $F'$ with $\proj r^{F'}$ is the link product with that projector in the chosen CJ basis and recovers $T_\pi^{[r]}$. Summing the CP outcome elements over $r$ discards $F'$, so they form a final quantum instrument. This is the construction in Proposition 13 and Appendix B.2.d of Wechs et al.~\cite{wechs2021QuantumCircuitsClassical}.
\end{proof}
\begin{proposition}[Characterization of \(\pQCCC\)]
\label{prop:pqccc-history-characterization}
Let \(N\ge1\), and let
\(\mathbf W=(W^{[r]})_{r\in\mathcal R}\) be a family of PSD operators. The following are equivalent:
\begin{enumerate}[label=\textup{(\roman*)}]
\item \(\mathbf W\in\pQCCC_N^{P\to F}\),
\item The sum over outcomes $\overline W:=\sum_{r\in\mathcal R}W^{[r]}$
is a deterministic \(P\to F\) process, and there exist PSD operators   $T_{\pi}^{[r]}\succeq0
    \;
   (\pi\in\Pi_{\N},\ r\in\mathcal R),$
satisfying
\begin{equation}
   W^{[r]}=\sum_{\pi\in\Pi_{\N}}T_{\pi}^{[r]}
    \;(r\in\mathcal R).
   \label{eq:pqccc-history-outcomes}
\end{equation}
Define
\[
   T_{\pi}:=\sum_{r\in\mathcal R}T_{\pi}^{[r]},
   \qquad
   T_h:=\sum_{\pi\succeq h}T_{\pi}
   =\sum_{r\in\mathcal R}\sum_{\pi\succeq h}T_\pi^{[r]}.
\]
The common history constraints apply to \(T_h\), after summing over final outcomes. The history partial sums for a fixed \(r\) need not satisfy these constraints individually.
Then, for every nonempty history \(h\),
\begin{equation}
   (\mathcal I-\mathcal D_{A_O^{\ell(h)}})
   \mathcal D_{A_{IO}^{\N\setminus\{k_1,\ldots,k_n\}}F}T_h=0.
   \label{eq:pqccc-history-constraint}
\end{equation}
\end{enumerate}
\end{proposition}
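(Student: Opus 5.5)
The plan is to prove the two implications separately. For (ii)$\Rightarrow$(i), Lemma~\ref{lem:pqccc-outcome-lift} does nearly all the work. Given PSD operators $T_\pi^{[r]}$ satisfying Eqs.~\eqref{eq:pqccc-history-outcomes}--\eqref{eq:pqccc-history-constraint}, with $\overline W$ deterministic, we adjoin a classical register $F'$. The lifted family $\widetilde T_\pi=\sum_r T_\pi^{[r]}\otimes\proj r^{F'}$ then forms a deterministic $P\to FF'$ QC-CC decomposition of $\widetilde W$. Proposition~\ref{prop:basic-pf-qccc} realizes it by a circuit whose branch contributions are exactly $\widetilde T_\pi$. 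Resolving each final map along $\{\proj r^{F'}\}_r$ yields final instruments $\{\mathcal M_\pi^{\mathrm{fin}[r]}\}_r$ that share the root and intermediate instruments, and whose branch outcome elements are $T_\pi^{[r]}$. Summing over $\pi$ as in Eq.~\eqref{eq:pqccc-outcome-sum} recovers $W^{[r]}$, so $\mathbf W\in\pQCCC_N^{P\to F}$.

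For (i)$\Rightarrow$(ii), start from a probabilistic QC-CC realization and define $T_\pi^{[r]}$ by Eq.~\eqref{eq:pqccc-terminal-contribution}. Each is PSD because link products of PSD operators are PSD, and Eq.~\eqref{eq:pqccc-history-outcomes} holds by Eq.~\eqref{eq:pqccc-outcome-sum}. Now sum over $r$. Then $\sum_r M_\pi^{\mathrm{fin}[r]}=:M_\pi^{\rm fin}$ is the CJ operator of a trace-preserving final map on effective inputs, so it satisfies Eq.~\eqref{eq:basic-qccc-tp-final}. Together with the unchanged root and intermediate instruments, this is a deterministic QC-CC circuit in the sense of Definition~\ref{def:basic-qccc}. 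Its process matrix is $\sum_\pi T_\pi=\overline W$ with $T_\pi=\sum_r T_\pi^{[r]}$, by linearity of the link product in the last factor. Hence $\overline W$ is a deterministic $P\to F$ process. The direction (i)$\Rightarrow$(iii) of Proposition~\ref{prop:basic-pf-qccc}, via (i)$\Rightarrow$(ii), then gives Eq.~\eqref{eq:basic-pf-sequence-constraint} for the prefix sums $T_h=\sum_{\pi\succeq h}T_\pi$, which is exactly Eq.~\eqref{eq:pqccc-history-constraint}.

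The main obstacle is making sure the summed circuit's branch operators are literally the $T_\pi$ fed into Proposition~\ref{prop:basic-pf-qccc}. That proposition was stated as an equivalence for a given $W$, so its (i)$\Rightarrow$(iii) direction must produce the specific components $T_\pi$ read off from the circuit, not just some decomposition. I would handle this by invoking Eq.~\eqref{eq:basic-pf-prefix-reduction} directly. I would define $Q_h$ as the link product along $h$ with the memory traced out, and verify the reduced-history equations Eqs.~\eqref{eq:basic-pf-qccc-initial}--\eqref{eq:basic-pf-qccc-term} straight from the trace-preservation conditions Eqs.~\eqref{eq:basic-qccc-tp-initial}--\eqref{eq:basic-qccc-tp-final}. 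Those conditions are exactly the summed-outcome conditions. Then (ii)$\Rightarrow$(iii) of Proposition~\ref{prop:basic-pf-qccc} converts these equations into the prefix-sum constraints. A secondary point is that the outcome sum must be taken before imposing the constraints: individual $T_h^{[r]}$ need not satisfy them, which matches the statement.
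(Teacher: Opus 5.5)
Your proposal is correct and follows the paper's proof. For (i)$\Rightarrow$(ii), both sum the final instruments to obtain a deterministic QC-CC circuit with components $T_\pi=\sum_r T_\pi^{[r]}$ and then apply Proposition~\ref{prop:basic-pf-qccc}; for (ii)$\Rightarrow$(i), both use the classical-register lift of Lemma~\ref{lem:pqccc-outcome-lift}. Your added step, which builds $Q_h$ directly from the circuit's link products and invokes the self-contained (ii)$\Rightarrow$(iii) argument of Proposition~\ref{prop:basic-pf-qccc}, only makes explicit what the paper takes for granted: that the prefix constraints hold for the circuit's own branch operators.
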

\begin{proof}
Assume (i), and let $T_\pi^{[r]}$ be the complete-history outcome contributions defined in Eq.~\eqref{eq:pqccc-terminal-contribution}. Summing over $r$ replaces each final instrument by its CPTP sum and yields the deterministic QC-CC process $\overline W$, with complete-order components $T_\pi:=\sum_rT_\pi^{[r]}$. Proposition~\ref{prop:basic-pf-qccc} then gives Eq.~\eqref{eq:pqccc-history-constraint}, proving (ii).

Conversely, Lemma~\ref{lem:pqccc-outcome-lift} realizes every family satisfying (ii) with common root and intermediate instruments and an outcome-resolved final instrument, proving (i).
\end{proof}

For any outcome family \(\mathbf X=(X^{[r]})_{r\in\mathcal R}\), write \(\overline X:=\sum_{r\in\mathcal R}X^{[r]}\). If party \(k\) undergoes a quantum operation with Choi matrix \(C_k\), denote outcome-by-outcome conditioning by \(\mathbf X_{|C_k}:=((X^{[r]})_{|C_k})_{r\in\mathcal R}\).

\begin{definition}[Probabilistic causally separable cone]
\label{def:pcs-recursive}
Define \(\pCS_N^{+,P\to F}\) recursively.

For \(N=0\), set
\begin{equation}
   \pCS_0^{+,P\to F}
   :=\bigl\{\mathbf X=(X^{[r]})_{r\in\mathcal R}\mid
       X^{[r]}\in\Pos(P\otimes F)\text{ for all }r\in\mathcal R\bigr\}.
   \label{eq:pcs-base}
\end{equation}

For \(N\ge1\), a family
\(\mathbf X=(X^{[r]})_{r\in\mathcal R}\)
belongs to \(\pCS_N^{+,P\to F}\) if and only if there exist PSD operators
\(X_{(k)}^{[r]}\succeq0\)
(\(k\in\N,r\in\mathcal R\)) such that
\begin{align}
   X^{[r]}&=\sum_{k\in\N}X_{(k)}^{[r]}
   ,\;(r\in\mathcal R),
   \label{eq:pcs-first-split}\\
   \overline X_{(k)}
   :=\sum_{r\in\mathcal R}X_{(k)}^{[r]}
   &\in\mathsf{Proc}_{N}^{+,P\to F}
   ,\;(k\in\N),
   \label{eq:pcs-branch-valid}\\
   \bigl((X_{(k)}^{[r]})_{|C_k}\bigr)_{r\in\mathcal R}
   &\in\pCS_{N-1}^{+,P\to F}
   ,\;\text{for every quantum operation $C_k$}.
   \label{eq:pcs-recursion}
\end{align}
The subscript \(N-1\) on the right records the remaining slots, labeled by \(\N\setminus\{k\}\). The same \(P,F\) and outcome set \(\mathcal R\) are retained throughout. Input extensions are included in the corresponding parties' input spaces.
\end{definition}

Only the outcome-summed branch $\overline X_{(k)}$ is required to be a valid unnormalized process; the individual $X_{(k)}^{[r]}$ need not be valid separately.

\begin{definition}[Probabilistic extensible causal separability]
\label{def:pecs-original}
A family of PSD operators
\(\mathbf X=(X^{[r]})_{r\in\mathcal R}\)
is called an \emph{unnormalized probabilistic ECS family}, written \(\mathbf X\in\pECS_N^{+,P\to F}\), if, for any finite-dimensional input ancillas \(E_1,\ldots,E_N\) and any shared state satisfying \(\rho\in\Pos(E_1\otimes\cdots\otimes E_N)\) and \(\Tr\rho=1\),
\begin{equation}
   \mathbf X\otimes\rho
   :=\bigl(X^{[r]}\otimes\rho\bigr)_{r\in\mathcal R}
   \in\pCS_N^{+,P\to F},
   \label{eq:pecs-extension}
\end{equation}
where the right-hand side is understood on the extended input slots.

If, in addition,
\(\overline X=\sum_rX^{[r]}\)
satisfies \(\Theta_{\overline X}=\id^P\), denote the family by $\mathbf W$, call it a \emph{probabilistic extensibly causally separable superinstrument}, and write \(\mathbf W\in\pECS_N^{P\to F}\).
\end{definition}

For \(N\ge1\), choosing trivial input ancillas in Eq.~\eqref{eq:pecs-extension} and summing Eq.~\eqref{eq:pcs-branch-valid} gives \(\overline X\in\mathsf{Proc}_N^{+,P\to F}\). Thus the condition \(\Theta_{\overline X}=\id^P\) selects families whose outcome sum is deterministic.

For \(N=0\), there are no ancillary input extensions or intermediate recursion. Thus
\(\pECS_0^{P\to F}\) consists precisely of all PSD outcome families whose sum over outcomes is a CPTP Choi operator from \(P\) to \(F\). This is exactly
\(\pQCCC_0^{P\to F}\) in Definition~\ref{def:pqccc-operational}.

\begin{lemma}[Closure of the probabilistic causally separable cone and discarding of input ancillas]
\label{lem:pcs-closure-discard}
For any \(N\ge0\):
\begin{enumerate}[label=\textup{(\roman*)}]
\item \(\pCS_N^{+,P\to F}\) is a closed convex cone in the corresponding finite-dimensional space of outcome families,
\item If the input of each party \(j\) additionally contains an ancilla \(G_j\) and
\(\mathbf Y\in\pCS_N^{+,P\to F}\)
is understood on these extended inputs, then the outcome-by-outcome partial trace
\begin{equation}
   \Tr_{G_1\cdots G_N}\mathbf Y
   :=\bigl(\Tr_{G_1\cdots G_N}Y^{[r]}\bigr)_{r\in\mathcal R}
   \label{eq:pcs-discard}
\end{equation}
still belongs, with \(G_j\) removed, to \(\pCS_N^{+,P\to F}\).
\end{enumerate}
\end{lemma}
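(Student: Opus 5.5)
Both parts go by induction on $N$, following the deterministic proofs of Lemma~\ref{lem:pf-CSclosed} and Lemma~\ref{lem:pf-inputdiscard}. The only new feature is that the witness now consists of the outcome-resolved components $X_{(k)}^{[r]}$. Validity is imposed only on the outcome sums $\overline X_{(k)}$, and the recursion acts outcome by outcome. For $N=0$, $\pCS_0^{+,P\to F}$ is a finite product of copies of $\Pos(P\otimes F)$, so it is a closed convex cone. Tracing out input ancillas is vacuous there, because no slots remain.

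For (i), assume the claim for $N-1$. Convexity comes from adding witnesses componentwise. Given $a\mathbf X+b\mathbf Y$ with $a,b\ge0$, we take components $aX_{(k)}^{[r]}+bY_{(k)}^{[r]}$. Their outcome sums $a\overline X_{(k)}+b\overline Y_{(k)}$ lie in the cone $\mathsf{Proc}_N^{+,P\to F}$. Conditioning is linear, and $\pCS_{N-1}^{+,P\to F}$ is a convex cone by the induction hypothesis, so the conditioned families remain inside it. For closedness, let $\mathbf X^{(m)}\to\mathbf X$ with witnesses $X_{(k)}^{(m)[r]}$. Since $0\preceq X_{(k)}^{(m)[r]}\preceq X^{(m)[r]}$ and the sequence $X^{(m)[r]}$ is bounded, finite dimensionality lets us extract a common convergent subsequence over the finitely many indices $(k,r)$. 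In the limit, positivity and the sum identity \eqref{eq:pcs-first-split} hold. Membership of $\overline X_{(k)}$ in $\mathsf{Proc}_N^{+,P\to F}$ also holds, because that cone is closed and the outcome sum is continuous. For each fixed $C_k$, outcome-by-outcome conditioning is continuous and linear, and $\pCS_{N-1}^{+,P\to F}$ is closed by induction, so \eqref{eq:pcs-recursion} holds in the limit.

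For (ii), let $\mathbf Y\in\pCS_N^{+,P\to F}$ on the extended inputs, with witnesses $Y_{(k)}^{[r]}$. The candidate witness for the reduced family is $\Tr_{G_1\cdots G_N}Y_{(k)}^{[r]}$. Partial trace preserves positivity and the sum identity. Because partial trace is linear, it commutes with the outcome sum, so $\overline{\Tr_G Y_{(k)}}=\Tr_G\overline Y_{(k)}$. The computation in \eqref{eq:pf-discard-proc} then places this operator in $\mathsf{Proc}_N^{+,P\to F}$; that computation uses only the identity $\Tr_{G_j}\mathcal D_{A_I^jE_jG_jA_O^j}=\mathcal D_{A_I^jE_jA_O^j}\Tr_{G_j}$ and the commutation of operations on disjoint systems. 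For the recursion, take a quantum operation $C_k$ on the reduced input and lift it to $C_k^\uparrow=C_k\otimes\id^{G_k}$, which is again a valid operation. Identity \eqref{eq:pf-discard-condition} holds for each $r$ separately, giving
\[
\bigl((\Tr_G Y_{(k)}^{[r]})_{|C_k}\bigr)_r=\Tr_{\otimes_{j\ne k}G_j}\bigl((Y_{(k)}^{[r]})_{|C_k^\uparrow}\bigr)_r .
\]
The inner family on the right lies in $\pCS_{N-1}^{+,P\to F}$ by the witness property of $\mathbf Y$. Applying the induction hypothesis for (ii), with the remaining ancillas $G_j$ for $j\neq k$, puts the traced family in $\pCS_{N-1}^{+,P\to F}$ as well. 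Tracing out only a subset of the $G_j$ is the same argument with the other ancillas kept as part of the inputs.

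The step that needs the most care is validity in (ii). The constraint is imposed only on $\overline Y_{(k)}$, not on the individual outcome elements, so we must check that partial trace and outcome summation commute and that the deterministic depolarization identity applies to the summed operator. Both hold by linearity. The rest is a direct transcription of the deterministic arguments.
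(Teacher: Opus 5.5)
Your proposal is correct and follows essentially the same route as the paper: induction on $N$, with componentwise addition of witnesses and a bounded common-subsequence argument for (i). For (ii), you take the outcome-wise partial trace of the witnesses, apply the computation of Eq.~\eqref{eq:pf-discard-proc} to the outcome-summed branch, and use the lift $C_k^\uparrow=C_k\otimes\id^{G_k}$ together with the induction hypothesis on the remaining ancillas.
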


\begin{proof}
We proceed by induction on $N$. For $N=0$, Eq.~\eqref{eq:pcs-base} is a finite Cartesian product of PSD cones and is therefore a closed convex cone. Partial tracing preserves positivity outcome by outcome, so both claims hold.

Assume the result for $N-1$. Convexity follows from the linearity of the decomposition and conditioning maps and the convexity of the process cone and $\pCS_{N-1}^{+,P\to F}$. For closedness, let $\mathbf X_m\to\mathbf X$ with $\mathbf X_m\in\pCS_N^{+,P\to F}$, and choose witnesses $X_{m,(k)}^{[r]}$. Since $0\preceq X_{m,(k)}^{[r]}\preceq X_m^{[r]}$ and the outcome and first-level index sets are finite, these witnesses are bounded. A common subsequence converges to operators $X_{(k)}^{[r]}\succeq0$. The outcome sums, branch-validity constraints, and the finite homogeneous process constraints are preserved in the limit. For each fixed $k$ and quantum-operation CJ matrix $C_k$, conditioning is continuous, while $\pCS_{N-1}^{+,P\to F}$ is closed by induction. Intersecting the corresponding closed inverse images over all $C_k$ proves the recursive condition and hence closedness.

For the partial-trace statement, take a witness $Y_{(k)}^{[r]}$ for $\mathbf Y$, write $\mathbf Y_{(k)}:=(Y_{(k)}^{[r]})_{r\in\mathcal R}$, and define $\widetilde Y_{(k)}^{[r]}:=\Tr_{G_1\cdots G_N}Y_{(k)}^{[r]}$. Positivity and the outcome sums are preserved. The branch sum $\sum_rY_{(k)}^{[r]}$ is a valid process on the extended inputs; commuting the partial trace through the depolarizing maps and output projections, as in Eq.~\eqref{eq:pf-discard-proc}, shows that $\sum_r\widetilde Y_{(k)}^{[r]}$ is valid after the $G_j$ are removed. For a quantum-operation CJ matrix $C_k$ on the reduced input, lift it to $C_k^\uparrow:=C_k\otimes\id^{G_k}$. Here $\id^{G_k}$ is the CJ operator of the trace (discard) map on the retained input ancilla $G_k$, which has no output slot in the reduced process, so $\Tr_{A_O^k}C_k^\uparrow=(\Tr_{A_O^k}C_k)\otimes\id^{G_k}\preceq\id^{A_I^kE_kG_k}$. Hence it is an allowed quantum operation. Then conditioning and partial trace commute:
\[
\left(\Tr_{G_1\cdots G_N}\mathbf Y_{(k)}\right)_{|C_k}
=\Tr_{\otimes_{j\ne k}G_j}\left(\mathbf Y_{(k)}\right)_{|C_k^\uparrow}.
\]
The conditioned family on the right belongs to $\pCS_{N-1}^{+,P\to F}$ by the witness property, and the induction hypothesis preserves this property after tracing the remaining ancillas. Thus the partially traced outcome family belongs to $\pCS_N^{+,P\to F}$.
\end{proof}

\begin{lemma}[Extension-independent first-level decomposition for \(\pECS\)]
\label{lem:pecs-uniform}
Let $N\ge1$. If \(\mathbf X=(X^{[r]})_{r\in\mathcal R}\in\pECS_N^{+,P\to F}\), then there exists a set of PSD operators
\(X_{(k)}^{[r]}\) satisfying
Eqs.~\eqref{eq:pcs-first-split}--\eqref{eq:pcs-branch-valid} such that, for every finite-dimensional input extension, every normalized shared state \(\rho\), every \(k\), and every quantum operation at party \(k\) with Choi matrix \(C_k\),
\begin{equation}
   \bigl((X_{(k)}^{[r]}\otimes\rho)_{|C_k}\bigr)_{r\in\mathcal R}
   \in\pCS_{N-1}^{+,P\to F}.
   \label{eq:pecs-uniform-condition}
\end{equation}
\end{lemma}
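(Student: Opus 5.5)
I will mirror the proof of Proposition~\ref{lem:pf-uniform}, now working with outcome-indexed witness tuples $(X_{(k)}^{[r]})_{k\in\N,r\in\mathcal R}$. The argument has three steps: build a witness for a single pure extension, pass to mixed extensions, and then make the witness independent of the extension by compactness. Lemma~\ref{lem:pcs-closure-discard} will replace Lemmas~\ref{lem:pf-CSclosed} and~\ref{lem:pf-inputdiscard} throughout.

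\emph{Pure shared states.} Fix extensions $E_1,\ldots,E_N$ and a pure state $\proj\psi$. By Definition~\ref{def:pecs-original}, $\mathbf X\otimes\proj\psi\in\pCS_N^{+,P\to F}$. This yields PSD components $Y_{(k)}^{[r]}$ with $\sum_kY_{(k)}^{[r]}=X^{[r]}\otimes\proj\psi$, with valid branch sums $\sum_rY_{(k)}^{[r]}$, and with recursively conditioned families in $\pCS_{N-1}^{+,P\to F}$. Since $0\preceq Y_{(k)}^{[r]}\preceq X^{[r]}\otimes\proj\psi$, Lemma~\ref{lem:pf-support} applied with the isometry $\mathcal V_\psi$ (for each fixed $r$) gives $Y_{(k)}^{[r]}=X_{(k)}^{[r],\psi}\otimes\proj\psi$, where $X_{(k)}^{[r],\psi}\succeq0$ and $\sum_kX_{(k)}^{[r],\psi}=X^{[r]}$.

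Branch validity of $\sum_rX_{(k)}^{[r],\psi}$ follows by tracing out the $E_j$ and commuting the trace through the depolarizers and output projections, exactly as in Eq.~\eqref{eq:pf-discard-proc}. For a mixed state $\rho$, I will purify it onto an extra ancilla $G$ attached to party $1$. If $k=1$, I lift $C_1$ to $C_1\otimes\id^G$; otherwise $G$ remains an input ancilla of party $1$ after conditioning. In both cases part (ii) of Lemma~\ref{lem:pcs-closure-discard} removes $G$, giving a witness for $\mathbf X\otimes\rho$ whose components live on the original space.

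\emph{Uniformity.} For each extension state $\rho$, let $\mathfrak F_\rho$ be the set of tuples $(Z_k^{[r]})$ on the original space satisfying three conditions: $Z_k^{[r]}\succeq0$ and $\sum_kZ_k^{[r]}=X^{[r]}$; $\sum_rZ_k^{[r]}\in\mathsf{Proc}_N^{+,P\to F}$; and $((Z_k^{[r]}\otimes\rho)_{|C_k})_r\in\pCS_{N-1}^{+,P\to F}$ for all $k$ and all $C_k$. Each $\mathfrak F_\rho$ is nonempty by the previous step. It lies in the compact box $0\preceq Z_k^{[r]}\preceq X^{[r]}$, and it is closed because conditioning is continuous and $\pCS_{N-1}^{+,P\to F}$ is closed by part (i) of Lemma~\ref{lem:pcs-closure-discard}.

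For the finite intersection property, I will take finitely many states $\rho_t$ and apply nonemptiness to the single joint extension $\bigotimes_t\rho_t$. I then lift each $C_{k,t}$ by tensoring with identities on the other ancilla factors, and discard those factors using part (ii). This shows that the joint witness lies in every $\mathfrak F_{\rho_t}$. Compactness then gives a common tuple in all $\mathfrak F_\rho$, and this tuple satisfies Eqs.~\eqref{eq:pcs-first-split}--\eqref{eq:pcs-branch-valid} together with Eq.~\eqref{eq:pecs-uniform-condition}.

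The main obstacle is the discard step. There I must check that lifting $C_k$ by $\id^{G_k}$, and tracing the remaining parties' ancillas, commutes with conditioning outcome by outcome, while branch validity is imposed only on the outcome sums. Part (ii) of Lemma~\ref{lem:pcs-closure-discard} is designed for exactly this, so it should go through.
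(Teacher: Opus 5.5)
Your proposal is correct and follows essentially the same route as the paper. It uses per-outcome support compression through $\mathcal V_\psi$ for pure extensions, then purification onto party $1$ together with Lemma~\ref{lem:pcs-closure-discard}(ii) for mixed states, and finally a compactness and finite-intersection argument over joint tensor-product extensions to remove the dependence on the extension state. The only cosmetic difference is that the paper indexes the closed sets by pairs $(E,\rho)$ with $E_j=\mathbb C^{e_j}$ fixed, so that the family being intersected is a well-defined set.
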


\begin{proof}
First fix finite-dimensional input extensions and a pure shared state
\(\rho=|\psi\rangle\!\langle\psi|\). By Definition~\ref{def:pecs-original}, the extended family
\(\mathbf X\otimes\rho\) has a \(\pCS\) first-level witness \(Y_{(k)}^{[r]}\). For every \(r,k\), we have \(0\preceq Y_{(k)}^{[r]}\preceq X^{[r]}\otimes|\psi\rangle\!\langle\psi|\). By the inheritance of PSD support in Lemma~\ref{lem:pf-support}, there is a unique operator \(X_{(k)}^{[r,\psi]}\succeq0\) on the original space such that \(Y_{(k)}^{[r]}=X_{(k)}^{[r,\psi]}\otimes|\psi\rangle\!\langle\psi|\) and \(X^{[r]}=\sum_kX_{(k)}^{[r,\psi]}\). Summing over outcomes and tracing out all input ancillas gives \(\sum_rX_{(k)}^{[r,\psi]}\in\mathsf{Proc}_{N}^{+,P\to F}\).
The extended recursive condition then reduces to Eq.~\eqref{eq:pecs-uniform-condition} for these original-space operators and the fixed pure state.

For a mixed shared state, take a finite-dimensional purification and attach the purifying system to any one input. Apply the preceding argument to the purified state, then discard the purifying system using Lemma~\ref{lem:pcs-closure-discard}(ii). Thus, for each fixed extension state, at least one set of first-level operators on the original space satisfies all required conditions.

To obtain a common decomposition for all extension states, consider the finite-dimensional real vector space
   $\mathsf X_N^{\mathcal R}:=
   \prod_{r\in\mathcal R}\prod_{k\in\N}
   \Herm(P\otimes A_{IO}^{\N}\otimes F).$
For a fixed family $\mathbf X$, define
\begin{equation}
   \mathfrak K_{\mathbf X}:=
\left\{(Z_{(k)}^{[r]})_{k,r}\in\mathsf X_N^{\mathcal R}:
Z_{(k)}^{[r]}\succeq0,\ 
\sum_k Z_{(k)}^{[r]}=X^{[r]}\ \text{for every }r\right\}.
\end{equation}
This set is closed and bounded: every component satisfies
$0\preceq Z_{(k)}^{[r]}\preceq X^{[r]}$. It is therefore compact in the finite-dimensional space $\mathsf X_N^{\mathcal R}$.
Write $E=(E_1,\ldots,E_N)$ for an extension, with each $E_j=\mathbb C^{e_j}$ and $e_j\ge1$. For each normalized state $\rho$ on $\bigotimes_j E_j$, let
\begin{equation}
\begin{aligned}
\mathfrak F_{E,\rho}:=\bigl\{(Z_{(k)}^{[r]})_{k,r}\in\mathfrak K_{\mathbf X}:\
&\sum_r Z_{(k)}^{[r]}\in\mathsf{Proc}_N^{+,P\to F}\quad\text{for every }k,\\
&\bigl((Z_{(k)}^{[r]}\otimes\rho)_{|C_k}\bigr)_{r\in\mathcal R}
\in\pCS_{N-1}^{+,P\to F}
\quad\text{for every }k,C_k\bigr\}.
\end{aligned}
\end{equation}
Here $\rho$ and the extension spaces are fixed, and $C_k$ ranges over quantum-operation CJ matrices on $A_I^kE_kA_O^k$. Choosing the spaces $\mathbb C^{e_j}$ covers all finite-dimensional extensions after a choice of basis. The pure-state and purification arguments establish nonemptiness for each $(E,\rho)$. Process validity is closed, and each conditioning constraint is the inverse image of the closed cone $\pCS_{N-1}^{+,P\to F}$ under a continuous linear map. Their intersection over all $k,C_k$ is closed. Hence $\mathfrak F_{E,\rho}$ is a closed subset of the fixed compact set $\mathfrak K_{\mathbf X}$.

Take any finite collection $(E_t,\rho_t)$, $t=1,\ldots,m$. Give party $j$ the input ancilla $\bigotimes_t E_{j,t}$ and use the state $\bigotimes_t\rho_t$. A first-level decomposition for this joint extension gives one tuple $(Z_{(k)}^{[r]})_{k,r}$ on the original space. For a fixed $t$, lift a quantum operation $C_{k,t}$ to
$C_{k,t}\otimes\bigotimes_{u\ne t}\id^{E_{k,u}}$ and trace out the ancillary factors with $u\ne t$ at the remaining parties. Lemma~\ref{lem:pcs-closure-discard}(ii) gives
$\bigl((Z_{(k)}^{[r]}\otimes\rho_t)_{|C_{k,t}}\bigr)_{r\in\mathcal R}
\in\pCS_{N-1}^{+,P\to F}.$
Thus the same tuple belongs to every $\mathfrak F_{E_t,\rho_t}$. The closed sets $\mathfrak F_{E,\rho}$ have the finite intersection property in $\mathfrak K_{\mathbf X}$, so their total intersection is nonempty. Any tuple in this intersection satisfies Eq.~\eqref{eq:pecs-uniform-condition} for all finite-dimensional input extensions and shared states.

\end{proof}

\begin{proposition}[Equivalence of probabilistic ECS and \(\pQCCC\)]
\label{prop:pecs-pqccc-equivalence}
For any finite \(N\), finite-dimensional \(P,F,A_I^k,A_O^k\),    $\pECS_N^{P\to F}
   =
   \pQCCC_N^{P\to F}.$
\end{proposition}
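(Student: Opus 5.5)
We prove the two inclusions separately, mirroring the deterministic proof of Theorem~\ref{thm:pf-ECS-QCCC} while keeping track of the outcome index $r$. In both directions the history constraints act only on outcome sums. Only positivity and the sum decompositions are imposed outcome by outcome.

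\emph{Inclusion $\pQCCC_N^{P\to F}\subseteq\pECS_N^{P\to F}$.} Start from a decomposition $W^{[r]}=\sum_\pi T_\pi^{[r]}$ supplied by Proposition~\ref{prop:pqccc-history-characterization}(ii). Fix an extension and a normalized shared state $\rho$. Define the conditioned partial families
\[
\mathbf Y_{h,\boldsymbol C_h}:=\Bigl(\sum_{\pi\succeq h}(T_\pi^{[r]}\otimes\rho)_{|\boldsymbol C_h}\Bigr)_{r\in\mathcal R}.
\]
We show $\mathbf Y_{h,\boldsymbol C_h}\in\pCS_m^{+,P\to F}$ by induction on the number $m$ of uncalled parties.
\begin{itemize}
\item For $m=0$, each entry is PSD on $P\otimes F$, so the family lies in $\pCS_0^{+,P\to F}$.
\item For the inductive step, take the first-level split indexed by the next called party $j$, that is, the sums over $\pi\succeq(h,j)$. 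Conditioning on $C_j$ yields $\mathbf Y_{(h,j),(\boldsymbol C_h,C_j)}$, which lies in $\pCS_{m-1}^{+,P\to F}$ by the induction hypothesis. It remains to check Eq.~\eqref{eq:pcs-branch-valid}: the outcome sum of each branch must be a valid unnormalized process. This is exactly the deterministic situation for the summed operators $T_\pi=\sum_rT_\pi^{[r]}$. The argument of Theorem~\ref{thm:pf-ECS-QCCC} via Proposition~\ref{prop:pf-validity} shows the total is valid, and Proposition~\ref{prop:pf-recursive-completion} then gives validity of each branch.
\end{itemize}
Taking $h=\varnothing$ gives $\mathbf W\otimes\rho\in\pCS_N^{+,P\to F}$. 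The outcome sum $\overline W$ is deterministic by assumption, so $\mathbf W\in\pECS_N^{P\to F}$.

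\emph{Inclusion $\pECS_N^{P\to F}\subseteq\pQCCC_N^{P\to F}$.} We prove a probabilistic analogue of Proposition~\ref{prop:pf-Hm} by induction on $N$. The claim is that every $\mathbf X\in\pECS_N^{+,P\to F}$ admits PSD operators $T_\pi^{[r]}$ with $X^{[r]}=\sum_\pi T_\pi^{[r]}$ and with the prefix constraints holding for the outcome-summed $T_h$.
\begin{itemize}
\item For $N=1$, set $T_{(k)}^{[r]}:=X^{[r]}$. The constraint for the single prefix $(k)$ follows from validity of $\overline X$.
\item For $N\ge2$, take the extension-independent split $X_{(k)}^{[r]}$ of Lemma~\ref{lem:pecs-uniform}. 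Fix $k$ and apply the same isometric teleportation construction as in Lemma~\ref{lem:pf-route}, outcome by outcome. Since Eq.~\eqref{eq:pf-me-transport} is linear, it gives $Z_k^{[r]}=\beta\mathcal VX_{(k)}^{[r]}\mathcal V^\dagger$. Choosing $\rho=\proj{\Omega_V}\otimes\tau$ in Eq.~\eqref{eq:pecs-uniform-condition} shows that the family $(Z_k^{[r]})_r$ lies in $\pECS_{N-1}^{+,P\to F}$.
\item The induction hypothesis then gives components $Z_\pi^{[r]}$ with $\pi\succeq(k)$. Because $0\preceq Z_\pi^{[r]}\preceq Z_k^{[r]}$, Lemma~\ref{lem:pf-support} pulls each component back to a unique $X_\pi^{[r]}\succeq0$ that sums to $X_{(k)}^{[r]}$.
\item The prefix constraints are linear in the outcome-summed operators. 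Lemma~\ref{lem:pf-compress} applied to $Y=\sum_rX_{(k,h)}^{[r]}$ therefore transfers them back to the original space exactly as in Eq.~\eqref{eq:pf-compress-zero}.
\item The length-one constraint for $(k)$ follows from $\overline X_{(k)}\in\mathsf{Proc}_N^{+,P\to F}$, which is Eq.~\eqref{eq:pcs-branch-valid}. Components with $\overline X_{(k)}=0$ are set to zero.
\end{itemize}
Applying this to $\mathbf W$ and invoking Proposition~\ref{prop:pqccc-history-characterization} gives $\mathbf W\in\pQCCC_N^{P\to F}$.

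The main obstacle is the base of the recursion and the validity bookkeeping. In the inductive step, the induction hypothesis must accept unnormalized families whose individual outcome elements are not valid processes. This needs a version of Proposition~\ref{prop:pf-recursive-completion} for outcome sums, and it is the point to check with most care. It should nevertheless go through unchanged, because every linear validity condition in $\pCS$ is imposed only on outcome sums, and conditioning commutes with summation over $r$.
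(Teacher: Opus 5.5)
Your proposal is correct and follows the paper's proof essentially step by step. Part II is the same: outcome-by-outcome teleportation, pullback by Lemma~\ref{lem:pf-support}, and compression by Lemma~\ref{lem:pf-compress} on outcome sums. In Part I you obtain branch validity by running the deterministic argument on $\sum_r T_\pi^{[r]}$ and then applying Proposition~\ref{prop:pf-recursive-completion}, whereas the paper checks trace-independence directly for each branch $j$; this is a harmless variant, and both work.

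The obstacle you flag does not actually arise, because the induction hypothesis is applied to $\pECS_{N-1}^{+,P\to F}$ families, whose definition already imposes validity only on outcome sums. Also note that the case $N=0$ holds directly by definition, since your argument starts at $N\ge1$.
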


\begin{proof}
The zero-slot case follows directly from Definitions~\ref{def:pcs-recursive}--\ref{def:pecs-original} and the subsequent discussion of $N=0$. Henceforth assume \(N\ge1\).

\medskip
\noindent\emph{Part I: \(\pQCCC_N^{P\to F}\subseteq\pECS_N^{P\to F}\).}

Fix $\mathbf W\in\pQCCC_N^{P\to F}$, let $T_\pi^{[r]}$ be the outcome-resolved complete-order components from Proposition~\ref{prop:pqccc-history-characterization}, and fix arbitrary finite-dimensional input ancillas $E_1,\ldots,E_N$ and a normalized shared state $\rho$.

For any history prefix $h=(k_1,\ldots,k_s)$, possibly empty, and any sequence of quantum-operation Choi matrices $\boldsymbol C_h=(C_{k_1},\ldots,C_{k_s})$, define
\begin{equation}
   Y_{h,\boldsymbol C_h}^{[r]}
   :=
   \left(
      \sum_{\pi\succeq h}T_{\pi}^{[r]}\otimes\rho
   \right)_{|\boldsymbol C_h}.
   \label{eq:pqccc-pecs-future-family}
\end{equation}
We prove by induction on $m:=|\N\setminus\{k_1,\ldots,k_s\}|$ the stronger claim
\begin{equation}
   \mathbf Y_{h,\boldsymbol C_h}
   :=(Y_{h,\boldsymbol C_h}^{[r]})_{r\in\mathcal R}
   \in\pCS_m^{+,P\to F}.
   \label{eq:pqccc-pecs-strong-claim}
\end{equation}

If \(m=0\), each \(Y_{h,\boldsymbol C_h}^{[r]}\) is a PSD operator on \(P\otimes F\), so the claim follows from Eq.~\eqref{eq:pcs-base}.

Assume the claim holds for all cases with at most \(m-1\) remaining slots, and let \(|\N\setminus\{k_1,\ldots,k_s\}|=m\ge1\). For each
\(j\in\N\setminus\{k_1,\ldots,k_s\}\), define
\begin{equation}
   Y_{h\to j,\boldsymbol C_h}^{[r]}
   :=
   \left(
      \sum_{\pi\succeq(h,j)}T_{\pi}^{[r]}\otimes\rho
   \right)_{|\boldsymbol C_h}.
   \label{eq:pqccc-pecs-next-branch}
\end{equation}
For each outcome, \(Y_{h,\boldsymbol C_h}^{[r]}=\sum_{j\in\N\setminus\{k_1,\ldots,k_s\}}Y_{h\to j,\boldsymbol C_h}^{[r]}\), and every component is PSD. For fixed $j$, insert arbitrary CPTP CJ operators in all $m$ current slots and sum over the final outcome $r$ and all future classical histories extending $(h,j)$. The resulting internal instruments are trace preserving on the effective-input cone at each node. Therefore the input-side trace of the resulting \(P\to F\) map depends only on the fixed past $(h,\boldsymbol C_h)$ and on the choice of the next branch $j$, and is independent of the future CPTP insertions. By Proposition~\ref{prop:pf-validity}, this trace-independence together with positivity implies $\sum_{r\in\mathcal R}Y_{h\to j,\boldsymbol C_h}^{[r]}
   \in\mathsf{Proc}_{m}^{+,P\to F}.$

At party \(j\), fix an arbitrary quantum operation with Choi matrix \(C_j\). By linearity of conditioning,
\begin{equation}
   \bigl((Y_{h\to j,\boldsymbol C_h}^{[r]})_{|C_j}\bigr)_{r\in\mathcal R}
   =
   \mathbf Y_{(h,j),(\boldsymbol C_h,C_j)}.
   \label{eq:pqccc-pecs-conditional-branch}
\end{equation}
The right-hand side has only \(m-1\) remaining slots and hence belongs to
\(\pCS_{m-1}^{+,P\to F}\) by the induction hypothesis. Thus
Eqs.~\eqref{eq:pqccc-pecs-next-branch}--\eqref{eq:pqccc-pecs-conditional-branch}
satisfy every condition in Definition~\ref{def:pcs-recursive}, proving
Eq.~\eqref{eq:pqccc-pecs-strong-claim} for \(m\).

Taking the empty history \(h=\varnothing\) with no conditioning gives \((W^{[r]}\otimes\rho)_{r\in\mathcal R}\in\pCS_N^{+,P\to F}\). Since the input ancillas and shared state are arbitrary,
\(\mathbf W\in\pECS_N^{+,P\to F}\). Moreover, the sum over outcomes of a \(\pQCCC\) family is a deterministic process matrix, so
\(\mathbf W\in\pECS_N^{P\to F}\).

\medskip
\noindent\emph{Part II: \(\pECS_N^{P\to F}\subseteq\pQCCC_N^{P\to F}\).}

We first prove the unnormalized history-construction statement: if
\(\mathbf X\in\pECS_N^{+,P\to F}\), then there exist
\begin{equation}
    T_{\pi}^{[r]}\succeq0
    \;
    (\pi\in\Pi_{\N},\ r\in\mathcal R)
    \label{eq:pecs-history-construction-components}
\end{equation}
such that $X^{[r]}=\sum_{\pi\in\Pi_{\N}}T_{\pi}^{[r]},$
and setting \(T_{\pi}:=\sum_rT_{\pi}^{[r]}, \;T_h:=\sum_{\pi\succeq h}T_{\pi}\), every nonempty history satisfies
\begin{equation}
    (\mathcal I-\mathcal D_{A_O^{\ell(h)}})
    \mathcal D_{A_{IO}^{\N\setminus\{k_1,\ldots,k_n\}}F}T_h=0.
   \label{eq:pecs-history-construction-prefix}
\end{equation}
We prove this by induction on \(N\).

For \(N=1\), denote the sole external label by \(k\). By Lemma~\ref{lem:pecs-uniform}, the only first-level component is
\(X_{(k)}^{[r]}=X^{[r]}\), and \(\overline X=\sum_rX^{[r]}\in\mathsf{Proc}_{1}^{+,P\to F}\). Set \(T_{(k)}^{[r]}:=X^{[r]}\). The validity constraint of the unnormalized process for the set \(\{k\}\) gives \((\mathcal I-\mathcal D_{A_O^k})\mathcal D_F\overline X=0\), proving the claim.

Let \(N\ge2\), and assume the claim holds for any outcome family on \(N-1\) slots. Apply Lemma~\ref{lem:pecs-uniform} to
\(\mathbf X\in\pECS_N^{+,P\to F}\) to obtain fixed first-level operators
\(X_{(k)}^{[r]}\). Fix \(k\) and write \(\overline X_{(k)}:=\sum_rX_{(k)}^{[r]}\). Use the maximally entangled projection constructed in Sec.~\ref{sec:pf-route} for party \(k\), retaining its notation $G_j,V,\mathcal V$ and $\beta$. The isometric embedding \(\mathcal V\) is independent of both the outcome \(r\) and any subsequent extension state, so the same embedding acts on every outcome element. 
Define
$Z_k^{[r]}:=\beta\mathcal V
X_{(k)}^{[r]}\mathcal V^\dagger$,
where $\beta>0$ is the constant in
Eq.~\eqref{eq:pf-teleport}.
Take arbitrary further input ancillas for the remaining \(N-1\) parties and any shared state \(\tau\). In Eq.~\eqref{eq:pecs-uniform-condition}, choose the shared input to be the tensor product of the entangled state and \(\tau\), and choose the same maximally entangled projection operation at party \(k\). Conditioning on the maximally entangled projection gives, for each outcome,
\(Z_k^{[r]}\otimes\tau\). Hence $\mathbf Z_k:=(Z_k^{[r]})_{r\in\mathcal R}
   \in\pECS_{N-1}^{+,P\to F},$
where the remaining slots are labeled by \(\N\setminus\{k\}\).

By the induction hypothesis, after prepending $k$ to each permutation of the remaining parties, there exist, for each outcome $r$ and each full permutation $\pi\succeq(k)$, PSD operators
\(Z_{\pi}^{[r]}\) such that
\begin{equation}
   Z_k^{[r]}
    =\sum_{\pi\succeq(k)}Z_{\pi}^{[r]},
    \label{eq:pecs-lower-components}
\end{equation}
and the remaining-process prefixes satisfy all history constraints after summing over outcomes. By \(0\preceq Z_{\pi}^{[r]}\preceq Z_k^{[r]}=\beta\mathcal V X_{(k)}^{[r]}\mathcal V^{\dagger}\) and Lemma~\ref{lem:pf-support}, each remaining-process component has the unique form
\begin{equation}
   Z_{\pi}^{[r]}
    =\beta\mathcal V X_{\pi}^{[r]}\mathcal V^{\dagger},
    \;
    X_{\pi}^{[r]}\succeq0.
    \label{eq:pecs-recovered-components}
\end{equation}
Summing Eq.~\eqref{eq:pecs-lower-components} and compressing by the adjoint isometry gives
\begin{equation}
   X_{(k)}^{[r]}
    =\sum_{\pi\succeq(k)}X_{\pi}^{[r]}.
    \label{eq:pecs-recovered-sum}
\end{equation}
For each full permutation $\pi\succeq(k)$, define $T_{\pi}^{[r]}:=X_{\pi}^{[r]}.$
Equations~\eqref{eq:pcs-first-split} and \eqref{eq:pecs-recovered-sum} give \(X^{[r]}=\sum_{\pi\in\Pi_{\N}}T_{\pi}^{[r]}\).

To verify the prefix constraints, the sum over outcomes for a length-one prefix \((k)\) is
\(\overline X_{(k)}\). Since
\(\overline X_{(k)}\in\mathsf{Proc}_{N}^{+,P\to F}\), its output-validity condition for
\(\{k\}\) gives
\begin{equation}
   (\mathcal I-\mathcal D_{A_O^k})
    \mathcal D_{A_{IO}^{\N\setminus\{k\}}F}\overline X_{(k)}=0.
   \label{eq:pecs-length-one-prefix}
\end{equation}

Next take any nonempty \(\N\setminus\{k\}\)-history
\(h=(u_1,\ldots,u_n)\). The induction hypothesis gives the remaining-process constraint on the extended slots,
\begin{equation}
   (\mathcal I-\mathcal D_{A_O^{u_n}})
   \mathcal D_{(\otimes_{j\in\N\setminus\{k,u_1,\ldots,u_n\}}A_{IO}^jG_j)F}
   \left(
      \sum_{r\in\mathcal R}
      \sum_{\pi\succeq(k,h)}Z_{\pi}^{[r]}
   \right)=0,
   \label{eq:pecs-routed-prefix}
\end{equation}
where each future slot is written explicitly as $A_{IO}^jG_j$. Substitute Eq.~\eqref{eq:pecs-recovered-components} and, following Lemma~\ref{lem:pf-compress}, compress by the adjoint of the isometry corresponding to the first remaining history label \(u_1\). The lemma gives a strictly positive proportionality factor, yielding
\begin{equation}
   (\mathcal I-\mathcal D_{A_O^{u_n}})
   \mathcal D_{A_{IO}^{\N\setminus\{k,u_1,\ldots,u_n\}}F}
   \left(
      \sum_{r\in\mathcal R}
      \sum_{\pi\succeq(k,h)}T_{\pi}^{[r]}
   \right)=0.
   \label{eq:pecs-long-prefix}
\end{equation}
Equation~\eqref{eq:pecs-length-one-prefix} covers all length-one histories, and Eq.~\eqref{eq:pecs-long-prefix} covers all longer histories, proving the unnormalized history-construction statement.

Finally, suppose that \(\mathbf W\in\pECS_N^{P\to F}\), so that its outcome sum \(\overline W\) is deterministic. The construction above yields
\(T_{\pi}^{[r]}\) satisfying
Eq.~\eqref{eq:pqccc-history-outcomes} and
Eq.~\eqref{eq:pqccc-history-constraint}. Proposition~\ref{prop:pqccc-history-characterization} therefore gives
\(\mathbf W\in\pQCCC_N^{P\to F}\).

\end{proof}

\end{document}